\documentclass[12pt,a4paper]{article}
\usepackage[latin1]{inputenc}
\usepackage{amsmath}
\usepackage{amsthm}
\usepackage{amsfonts}
\usepackage{vmargin}
\usepackage{amssymb}
\usepackage{stmaryrd}
\author{Christopher Shirley}
\title{Decorrelation estimates for some continuous and discrete random schr\"{o}dinger operators in
dimension one, without covering condition}
\newtheorem{theo}{Theorem}[section]

\newtheorem{prop}[theo]{Proposition}
\newtheorem{lem}[theo]{Lemma}

\numberwithin{equation}{section}
 
\newcommand{\R}{\mathbb{R}}
\newcommand{\N}{\mathbb{N}}
\newcommand{\Z}{\mathbb{Z}}
\newcommand{\C}{\mathbb{C}}
\def\restriction#1#2{\mathchoice
              {\setbox1\hbox{${\displaystyle #1}_{\scriptstyle #2}$}
              \restrictionaux{#1}{#2}}
              {\setbox1\hbox{${\textstyle #1}_{\scriptstyle #2}$}
              \restrictionaux{#1}{#2}}
              {\setbox1\hbox{${\scriptstyle #1}_{\scriptscriptstyle #2}$}
              \restrictionaux{#1}{#2}}
              {\setbox1\hbox{${\scriptscriptstyle #1}_{\scriptscriptstyle #2}$}
              \restrictionaux{#1}{#2}}}
\def\restrictionaux#1#2{{#1\,\smash{\vrule height .8\ht1 depth .85\dp1}}_{\,#2}}

\begin{document}
\maketitle

\begin{abstract}
The purpose of the present work is to establish decorrelation estimates at distinct energies for some random Schrödinger operator in dimension one. In particular, we establish the result for some random operators on the continuum with alloy-type potential without covering condition assumption. These results are used to give a description of the spectral statistics.
\end{abstract}

\section{Introduction}
To  introduce our results, let us first consider one of the random operators that will be studied in the rest of this article. Let $(\omega_n)_{n\in\Z}$ be independent random variables, uniformly distributed on $[0,1]$ and define the random potential by $V_\omega(x)=\omega_n$ for $x\in(n-1/4,n+1/4)$ and $n\in\Z$ and zero elsewhere. The random potential is non-negative but not positive with probability one. Consider the operator $H_\omega: L^2(\R)\rightarrow  L^2(\R)$ defined by the following equation

\begin{equation}\label{simpleop}
\forall \phi\in \mathcal{H}^2(\R), H_\omega \phi=-\Delta \phi+V_\omega \phi.
\end{equation}

We know that, with probability one, $H_\omega$ is self-adjoint. As $H_\omega$ is $\Z$-ergodic, we know that there exists a set $\Sigma$ such that, with probability one, the spectrum of $H_\omega$ is equal to $\Sigma$ (see for instance \cite{CL90}). One of the purposes of this article is to give a description of the spectral statistics of $H_\omega$. In this context, we study the restriction of $H_\omega$ to a finite box and study the diverse statistics when the size of the box tends to infinity. For $L\in\N$, let $\Lambda_L=[-L,L]$ and $H_\omega(\Lambda_L)$ be the restriction of $H_\omega$ to $L^2(\Lambda_L)$ with Dirichlet boundary conditions. The spectrum of $H_\omega(\Lambda_L)$ is discrete and accumulate at $+\infty$. We denote $(E_j)_{j\in\N}$ the eigenvalues of $H_\omega(\Lambda)$, ordered increasingly and repeated according to multiplicity. 
We know from the $\Z$-ergodicity that there exists a deterministic, non-decreasing function $N$ such that, almost surely, we have
\begin{equation}
N(E)=\lim_{L\to\infty}\dfrac{\sharp\{j,E_j<E\}}{|\Lambda_L|}.
\end{equation}
The function $N$ is the integrated density of state (abbreviated IDS from now on), and it is the distribution function of a measure $dN$.

The main purpose of the present article is to prove the following theorem for a class of operators that contain the operator defined in \eqref{simpleop}.
\begin{theo}\label{joint}
There exists a discrete set $\mathcal{S}\subset (0,\infty)$ with no accumulation points such that for $(E_0,E_0')\in \R^2-\mathcal{S}^2$ such that $E_0\neq E_0'$ and such that $N(.)$ is differentiable at $E_0$ and $E_0'$ with $N'(E_0)>0$ and $N'(E_0')>0$, \\
when $|\Lambda|\rightarrow \infty$ the point processes $ \Xi(E_0,\omega,\Lambda)$ and $\Xi(E_0',\omega,\Lambda)$, converge weakly respectively to two independent Poisson processes on $\R$ with intensity the Lebesgue measure. That is, for any $(J_+,J_-)\in(\N^*)^2$, for any $(U_j^+)_{1\leq j\leq J_+}\subset\R^{J_+}$ and $(U_j^-)_{1\leq j\leq J_-}\subset\R^{J_-}$ collections of disjoint compact intervals, one has
\begin{displaymath}
\mathbb{P}\left(
\begin{aligned}
\sharp\{j;\xi_j(E_0,\omega,\Lambda)\in U_1^+\}&=k_1^+\\
\vdots\hspace*{8em} &\vdots\\
\sharp\{j;\xi_j(E_0,\omega,\Lambda)\in U_{J_+}^+\}&=k_{J_+}^+\\
\sharp\{j;\xi_j(E_0',\omega,\Lambda)\in U_1^-\}&=k_1^-\\
\vdots\hspace*{8em} &\vdots\\
\sharp\{j;\xi_j(E_0',\omega,\Lambda)\in U_{J_-}^-\}&=k_{J_-}^-
\end{aligned}
\right)\underset{|\Lambda|\to\infty}{\rightarrow} \prod_{j=1}^{J_+}\dfrac{|U_j^+|^{k_j^+}}{k_j^+!}e^{-|U_j^+|}\cdot\prod_{j=1}^{J_-}\dfrac{|U_j^-|^{k_j^-}}{k_j^-!}e^{-|U_j^-|}.
\end{displaymath}
\end{theo}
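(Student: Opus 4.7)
The plan is to deduce Theorem~\ref{joint} from the by-now-standard framework of Germinet--Klopp, which reduces joint Poisson convergence of the unfolded eigenvalue processes at two distinct energies to four ingredients: (i) a Wegner estimate, (ii) a Minami-type estimate, (iii) dynamical (or exponential) localization near $E_0$ and $E_0'$, and (iv) a decorrelation estimate at the pair $(E_0,E_0')$. In the one-dimensional setting the first three are essentially classical for the alloy-type models treated here, so the bulk of the paper must be devoted to (iv); the exceptional set $\mathcal{S}$ is precisely the set of energies where either the Minami bound or the decorrelation Jacobian condition degenerates, and one expects it to be discrete (related to zeros of Lyapounov derivatives / band-edge-type energies).

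The first step is the standard geometric decoupling. Set $\ell\sim(\log L)^{1/\xi}$ with $\xi$ dictated by the localization exponent, partition $\Lambda_L$ into disjoint sub-cubes $\{\Lambda_\ell^{(k)}\}$, and use localization to approximate the eigenvalues of $H_\omega(\Lambda_L)$ in a window of size $L^{-1}$ around $E_0$ (resp.\ $E_0'$) by those of the direct sum $\bigoplus_k H_\omega(\Lambda_\ell^{(k)})$ in the same window, with error negligible at the scale $L^{-1}$. Because the single-site potential has finite range, the sub-cube Hamiltonians are stochastically independent (or nearly so after a buffer), so counting rescaled eigenvalues in disjoint intervals near each energy reduces to summing approximately independent indicator variables. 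The Minami estimate then ensures that each sub-cube contributes at most one eigenvalue in each window with overwhelming probability, and the Wegner estimate together with the differentiability of $N$ at $E_0$ and $E_0'$ identifies the Bernoulli parameters as $\ell\cdot|U_j^{\pm}|\cdot N'(E_0^{\pm})/L$ up to negligible corrections. A classical Bernoulli-to-Poisson argument then yields marginal Poisson convergence at each energy.

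The joint statement, and therefore the asymptotic independence of the two limiting Poisson processes, is produced by the decorrelation estimate: one shows that the probability that a single sub-cube of side $\ell$ has simultaneously an eigenvalue within $L^{-1}$ of $E_0$ and one within $L^{-1}$ of $E_0'$ is $o(\ell^2/L^2)$, so that the expected number of such coincidences among the $L/\ell$ sub-cubes is $o(1)$. Combined with the product form of the joint law of independent Bernoulli families over disjoint cubes, this gives the factorized Poisson limit displayed in the theorem.

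The main obstacle is (iv). Without the covering condition, the support of $V_\omega$ has gaps, so the usual trick of moving a single random coupling to independently translate two eigenvalues is not available on those gaps. The strategy I would adopt, adapting Klopp's discrete scheme to the continuum, is perturbative: express the two relevant eigenvalues $E_j(\omega)$, $E_k(\omega)$ as smooth functions of the $\omega_n$'s hit by the corresponding localized eigenfunctions, and bound from below the Jacobian of the map $\omega\mapsto(E_j(\omega),E_k(\omega))$ via a Wronskian/transfer-matrix computation involving generalized eigenfunctions at $E_0$ and $E_0'$. The non-covering issue then reduces to showing that, away from a discrete exceptional set $\mathcal{S}$ of energies, these generalized eigenfunctions remain quantitatively linearly independent even across the gaps in $\mathrm{supp}\,V_\omega$, which is where one must exploit the full one-dimensional transfer-matrix dynamics. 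Once the Jacobian lower bound is established, a change of variables combined with a Wegner bound at each of the two energies delivers the desired $o(\ell^2/L^2)$ decorrelation estimate, completing the proof.
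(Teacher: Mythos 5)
Your overall architecture matches the paper's: Theorem~\ref{joint} is not proved directly but deduced, exactly as you propose, from (Loc), (W), (M) and the two-energy decorrelation estimate via the Germinet--Klopp machinery (the paper simply invokes \cite{GK10} for this reduction, so your sketch of the cube decomposition and Bernoulli-to-Poisson step is fine, and the small discrepancy in the required bound is harmless: the paper's estimate $Cl^{2}/L^{1+\gamma}$, not $o(l^{2}/L^{2})$, is what \cite{GK10} needs). The problem is that your item (iv), which is the entire content of the paper, is left as a plan rather than a proof. The step ``show that, away from a discrete exceptional set $\mathcal{S}$, the generalized eigenfunctions remain quantitatively linearly independent across the gaps in $\mathrm{supp}\,V_\omega$'' is precisely the statement that has to be established, and you give no mechanism for it; moreover it is not quite the right statement. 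What must be excluded is near-colinearity of the two gradient vectors $\nabla_\omega E_j$ and $\nabla_\omega E_k$, whose components are the local $q$-weighted norms $\|u\|_{q,(n-N,n+N)}^{2}$ and $\|v\|_{q,(n-N,n+N)}^{2}$ of the two eigenfunctions (see \eqref{dercont}), together with a uniform two-sided bound on $\|\nabla_\omega E_j\|_1$, which without the covering condition already requires a scale-free unique continuation input (Theorem~\ref{SFUC}); none of this is visible in your outline.

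Concretely, the paper's route through the non-covering difficulty is: expand $u,v$ on each unit cell in a basis of solutions given by Floquet theory (rather than the $q$-orthonormal basis of \cite{S14b}), pass to Pr\"ufer-type variables $t_u,t_v$, and show that near-colinearity forces $t_v(n_0)$ to be an approximate root of an explicit resultant polynomial $\mathcal{R}_{\hat\omega}$ whose coefficients are analytic in finitely many $\omega_n$ (Lemma~\ref{closetan}); the decisive point is that $\mathcal{R}_{\hat\omega}$ is not identically zero, proved by specializing to constant configurations $\omega_n\equiv\omega_0$ and computing a $4\times4$ resultant determinant in the Floquet multipliers $\lambda_F,\lambda_G$ (Lemma~\ref{resnotzero}), which requires $|D_F|,|D_G|>2$ and $|D_F|\neq|D_G|$ for some $\omega_0$ (Propositions~\ref{instability}, \ref{discnotequal}, \ref{discnotequalpos}). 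The exceptional set $\mathcal{S}$ is \emph{not} where Minami or a Wegner/Jacobian bound degenerates (those hold on all of $\mathcal{I}$); it is the discrete set, e.g.\ $E=(k\pi)^2$ with matching parities when $q_{per}=0$, where the two Floquet discriminants cannot be separated and the non-vanishing argument for the resultant fails. Finally, turning ``the resultant is not identically zero'' into the probability bound $e^{-cl^{2\beta}}$ uses quantitative estimates on small values of analytic functions (Propositions~\ref{Loja} and \ref{Loja2}). Without these ingredients your argument does not close: the transfer-matrix ``linear independence across gaps'' you invoke is exactly the unproven core, and your guess for the origin of $\mathcal{S}$ (Lyapunov/band-edge degeneracies or Minami failure) does not match what actually produces it.
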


This theorem (with $\mathcal{S}=\emptyset$) was proved for the first time for a continuous model, the alloy-type model, in \cite{S14b} but only with a additional assumption on the random potential, the so-called covering condition, i.e when the bounded compactly supported, single site potential $q:\R\to\R$ generated by an atom at the origin satisfies the following inequality for some $\eta>0$
\begin{equation}\label{cov}
q\geq\dfrac{1}{\eta}\textbf{1}_{[-1/2,1/2]}.
\end{equation}
This assumption is not satisfied by the operator defined in \eqref{simpleop} and we will prove for the first time Theorem~\ref{joint} for a class of single site potentials that does not satisfy \eqref{cov}.
In order to study the spectral statistics of $H_\omega(\Lambda)$ and prove Theorem~\ref{joint} we use four results : the localization assumption, the Wegner estimates, the Minami estimates and the decorrelation estimates for distinct energies. As in \cite{S14b}, the three first assumptions are known to hold and we will prove the last one for the first time without the covering condition. We now introduce these assumptions.

Let $\mathcal{I}$ be an open relatively compact subset at the $\R$. We know from \cite{K14} that the operator satisfies the following localization assumption.

\textbf{(Loc): } for all $\xi\in(0,1)$, one has
\begin{equation}
\sup_{L>0} 
\underset{|f|\leq 1}{\underset{\text{supp }f\subset \mathcal{I}}\sup}
\mathbb{E}\left(\sum_{\gamma\in\Z^d}e^{|\gamma|^\xi}\|\textbf{1}_{[-1/2,1/2]}f(H_\omega(\Lambda_L))\textbf{1}_{[\gamma-1/2,\gamma+1/2]}\|_2\right)<\infty
\end{equation}
We know (see for instance \cite{CHK07}) that the following Wegner estimates hold on $\mathcal{I}$: 

\textbf{(W) :} There exists $C>0$,   such that for $J\subset \mathcal{I}$ and $L\in\N$ 
\begin{equation}
\mathbb{P}\Big[\text{tr} \left(\textbf{1}_J(H_\omega(\Lambda_L)) \right)\geq 1\Big ]\leq C |J||\Lambda_L|.
\end{equation}
This shows that the integrated density of state (abbreviated IDS from now on) $N(.)$ is Lipschitz continuous. As the IDS is a non-decreasing function, this implies that $N$ is almost everywhere differentiable and its derivative $\nu(.)$ is positive almost-everywhere on its essential support.

Let us now introduce the Minami estimates. We extract from \cite{K11} the
\begin{theo}[M]\label{mina-int}
Fix $J\subset\mathcal{I}$ a compact interval. For any $s'\in(0,1)$, $M>1$, $\eta>1$, $\rho\in(0,1)$, there exists $L_{s',M,\eta,\rho}>0$ and $C=C_{s',M,\eta,\rho}>0$ such that, for $E\in J$, $L\geq L_{s',M,\eta,\rho}$ and $\epsilon\in[L^{-1/s'}/M,ML^{-1/s'}]$ , one has
\begin{displaymath}
\sum_{k\geq2}\mathbb{P}\big(\textup{tr}[\textbf{1}_{[E-\epsilon,E+\epsilon]}(H_\omega(\Lambda_L))]\geq k\big)\leq C( \epsilon L )^{1+\rho}.
\end{displaymath}
\end{theo}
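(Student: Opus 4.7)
The goal is to control the expectation $\mathbb{E}[(N_J-1)_+]=\sum_{k\geq 2}\mathbb{P}(N_J\geq k)$, where $N_J=\operatorname{tr}\mathbf{1}_J(H_\omega(\Lambda_L))$ for $J=[E-\epsilon,E+\epsilon]$. The plan is to exploit \textbf{(Loc)} to reduce the multiscale estimate on $\Lambda_L$ to a collection of nearly independent small boxes and then to combine the Wegner estimate \textbf{(W)} on each box with a weak two-eigenvalue estimate on a single box.

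First I would fix a length scale $\ell = \ell(L)$ of order a power of $\log L$ dictated by the exponential decay rate in \textbf{(Loc)}, and cover $\Lambda_L$ by overlapping boxes $\Lambda_\ell(\gamma)$, $\gamma\in\ell\Z\cap\Lambda_L$. A geometric resolvent identity, combined with the SULE-type localization bounds produced by \textbf{(Loc)}, shows that off an event of probability $O(L^{-N})$ for any $N$, every eigenvalue of $H_\omega(\Lambda_L)$ in $\mathcal I$ lies within distance $O(e^{-\ell^\xi})$ of an eigenvalue of some $H_\omega(\Lambda_\ell(\gamma))$, and two eigenvalues of $H_\omega(\Lambda_L)$ with well-separated localization centers correspond to eigenvalues coming from two disjoint small boxes. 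Thus the event $\{N_J\geq 2\}$ is, up to negligible probability, the union of:
\begin{itemize}
\item[(a)] the event that some single box $\Lambda_\ell(\gamma)$ has at least two eigenvalues in a slightly enlarged interval $\widetilde J=[E-2\epsilon,E+2\epsilon]$;
\item[(b)] the event that two disjoint boxes $\Lambda_\ell(\gamma)$ and $\Lambda_\ell(\gamma')$ each have an eigenvalue in $\widetilde J$.
\end{itemize}

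For case (b), I would use that $H_\omega(\Lambda_\ell(\gamma))$ and $H_\omega(\Lambda_\ell(\gamma'))$ depend on disjoint collections of the $\omega_n$ once $|\gamma-\gamma'|$ exceeds a fixed constant determined by the support of the single-site potential, hence are independent. Applying \textbf{(W)} to each gives probability $\leq (C\epsilon\ell)^2$ per pair; summing over $O((L/\ell)^2)$ pairs yields the bound $(C\epsilon L)^2 \leq C(\epsilon L)^{1+\rho}$, since in the Minami window $\epsilon L \leq M L^{1-1/s'}\to 0$.

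Case (a) is the main obstacle and the whole reason the exponent is $1+\rho$ rather than $2$. Under the covering condition \eqref{cov} the rank-one Minami trick of Combes--Germinet--Klein would give $(C\epsilon\ell)^2$ on a single box; without covering the random parameters no longer shift eigenvalues monotonically, and one is forced into a rank-$r$ analysis. The strategy I would adopt here, following the spirit of Klein's argument in \cite{K11}, is to isolate, for each pair of eigenvalues in $\widetilde J$, an effective rank-$r$ perturbation built from the $\omega_n$'s whose single-site potential genuinely overlaps the two associated eigenfunctions, then apply a Schur-complement/Krein-formula computation combined with spectral averaging over just those $r$ variables. The price for the lack of monotonicity is a loss of an arbitrarily small power $\epsilon^\rho$, giving $\mathbb{P}(\text{two eigenvalues of }H_\omega(\Lambda_\ell)\text{ in }\widetilde J)\leq C(\epsilon\ell)^{1+\rho}$. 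Summing over $O(L/\ell)$ boxes then produces $(L/\ell)\cdot(C\epsilon\ell)^{1+\rho}\leq C(\epsilon L)^{1+\rho}$, as required, and the contributions from $N_J\geq 3$ are handled by the same argument, giving a geometric series in $\epsilon\ell$. The hardest technical point is producing the single-box bound in case (a); everything else is a bookkeeping reduction powered by \textbf{(Loc)} and \textbf{(W)}.
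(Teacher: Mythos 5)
Your reduction is fine as far as it goes: the localization-based covering of $\Lambda_L$ by boxes of size $\ell\sim(\log L)^{1/\xi'}$, the observation that the $e^{-\ell^{\xi}}$ errors are negligible compared with $\epsilon\asymp L^{-1/s'}$, and the treatment of case (b) by independence of distant boxes plus (W)$\times$(W), summed over $O((L/\ell)^2)$ pairs, are exactly the standard bookkeeping and give $(C\epsilon L)^2\leq C(\epsilon L)^{1+\rho}$. Note that the paper itself does not prove Theorem~\ref{mina-int}: it imports the statement and relies on \cite{K14} (continuous case) and \cite{S14a} (discrete case), where (M) is derived from (W), (Loc) and independence at a distance, \emph{in dimension one}.

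The genuine gap is your case (a), the single small box carrying two eigenvalues in $\widetilde J$. You assert that an ``effective rank-$r$ perturbation'' plus a Schur-complement/Krein computation and spectral averaging over the $r$ relevant variables yields $\mathbb{P}(\text{two eigenvalues of }H_\omega(\Lambda_\ell)\text{ in }\widetilde J)\leq C(\epsilon\ell)^{1+\rho}$, at the price of a lost power $\epsilon^\rho$. No such argument is available: the Minami-type two-eigenvalue bound from spectral averaging hinges on rank-one, sign-definite (monotone) perturbations, which is precisely what fails without the covering condition; for rank-$r$, non-monotone couplings one only recovers Wegner-type one-eigenvalue bounds, and the second eigenvalue is exactly the obstruction. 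Indeed, if your case-(a) argument worked it would prove Minami estimates for general non-covering alloy models in every dimension, which is open. (Also, the ``spirit of \cite{K11}'' is misattributed: that paper \emph{assumes} (M) to prove decorrelation estimates.) The proofs the paper actually invokes close case (a) by a mechanism specific to $d=1$: inverse tunneling estimates show that two eigenvalues of a one-dimensional restriction that are $\epsilon$-close must have eigenfunctions localized far from one another, so the same-box scenario is excluded up to a negligible event and everything is funneled into your case (b), where independence and the Wegner estimate finish the proof. Your sketch never uses dimension one in case (a), and that is where it breaks down.
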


One purpose of this article is, as in \cite{GK10}, to give a description of spectral statistics. As (Loc), (W) and (M) hold, we know from \cite{GK10} that the following result hold. Define the \textit{unfolded local level statistics} near $E_0$ as the following point process :
\begin{equation}
\Xi(\xi;E_0,\omega,\Lambda)=\sum_{j\geq1} \delta_{\xi_j(E_0,\omega,\Lambda)}(\xi)
\end{equation}
 where
 \begin{equation}
 \xi_j(E_0,\omega,\Lambda)=|\Lambda|(N(E_j(\omega,\Lambda)-N(E_0)).
 \end{equation}
The unfolded local level statistics are described by the following theorem which is a weaker version of \cite[Theorem 1.9]{GK10}.

\begin{theo}\label{ULLS}
Pick $E_0\in \mathcal{I}$ such that $N(.)$ is differentiable at $E_0$ and $\nu(E_0)>0$.Then, when $|\Lambda|\to \infty$, the point process
$\Xi(\xi;E_0,\omega,\Lambda)$ converges weakly to a Poisson process with intensity the Lebesgue measure. That is, for any $p\in\N^*$, for any $(I_i)_{i\in\{1,\dots,p\}}$ collection of disjoint intervals
\begin{equation}
\lim_{|\Lambda|\to\infty}\mathbb{P}
\left(
\left\{\omega;
\begin{aligned}
\sharp\{j;\xi_j(\omega,\Lambda)\in I_1\}=k_1\\
\vdots\hspace*{8em} \vdots\hspace*{1em}\\
\sharp\{j;\xi_j(\omega,\Lambda)\in I_p\}=k_p
\end{aligned}
\right\}\right)=\dfrac{|I_1|^{k_1}}{k_1!}\dots\dfrac{|I_p|^{k_p}}{k_p!}
\end{equation}

\end{theo}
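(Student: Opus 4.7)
The plan is to invoke the general Poisson convergence scheme of Germinet-Klein \cite{GK10}; their Theorem 1.9 is proved in an abstract framework requiring exactly the three ingredients (Loc), (W) and (M) listed above, all of which have been verified for $H_\omega$ on $\mathcal{I}$, so Theorem \ref{ULLS} ultimately follows by direct citation. Still, it is worth sketching the structure of the argument, since the same scheme is what must be combined later with the decorrelation estimates to deduce Theorem \ref{joint}.

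By Kallenberg's theorem applied to simple point processes, to prove convergence of $\Xi(\,\cdot\,;E_0,\omega,\Lambda)$ to a Poisson process of Lebesgue intensity it is enough to show that, for every finite union of bounded intervals $U\subset\R$,
\begin{equation}
\mathbb{E}\bigl[\Xi(U;E_0,\omega,\Lambda)\bigr]\underset{|\Lambda|\to\infty}{\longrightarrow}|U|\quad\text{and}\quad \mathbb{P}\bigl(\Xi(U;E_0,\omega,\Lambda)=0\bigr)\underset{|\Lambda|\to\infty}{\longrightarrow}e^{-|U|}.
\end{equation}
The first relation is essentially the definition of the unfolding by $N$ combined with the Wegner estimate (W) and the assumed differentiability of $N$ at $E_0$, which provides the equi-integrability needed to pass from the almost-sure convergence of $\sharp\{j:E_j<E\}/|\Lambda|$ to $N(E)$ to the convergence of expectations on windows of width $O(1/|\Lambda|)$ around $E_0$.

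For the second relation, the core step is to decompose $\Lambda$ into $N_\Lambda\sim|\Lambda|/\ell$ disjoint subboxes $\Lambda_\ell^{(k)}$ of length $\ell$, with $\ell^{1/s'}$ comparable to $|\Lambda|$ for some $s'\in(0,1)$, the scale at which (M) is effective. The sub-exponential decay of fractional moments of eigenfunction correlators expressed by (Loc) is then used to show that, outside an event of vanishing probability, the eigenvalues of $H_\omega(\Lambda)$ lying in a window of width $C/|\Lambda|$ about $E_0$ are, up to an error $o(1/|\Lambda|)$, in bijection with those of the decoupled operator $\bigoplus_k H_\omega(\Lambda_\ell^{(k)})$, whose eigenvalues are independent across $k$ by independence of the subbox random variables. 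Applying (M) at scale $\ell$ shows that each small box carries at most one eigenvalue in the window with probability $1-o(\ell/|\Lambda|)$, while (W) yields the precise expected number per box. Consequently $\Xi(I;E_0,\omega,\Lambda)$ is asymptotically a sum of $N_\Lambda$ independent Bernoulli variables of success probability $|I|/N_\Lambda+o(1/N_\Lambda)$, which converges to a Poisson of parameter $|I|$ by the classical law of rare events; the joint version for disjoint intervals $I_1,\dots,I_p$ follows by the same argument, each interval contributing disjointly among the independent small boxes.

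The main obstacle is precisely the decoupling step: (Loc) as stated only gives summability of fractional moments of eigenfunction correlators, not directly an eigenvalue comparison between $H_\omega(\Lambda)$ and the direct sum on the subboxes. Translating the former into the latter requires a careful matching of the two scales $\ell$ and $|\Lambda|$ together with finite-rank perturbation identities relating resolvents of $H_\omega(\Lambda)$ and $\bigoplus_k H_\omega(\Lambda_\ell^{(k)})$, as carried out in detail in \cite{GK10}. Since that machinery applies without modification in the present setting, no new idea is required here beyond its invocation.
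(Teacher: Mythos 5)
Your proposal follows exactly the route the paper itself takes: Theorem~\ref{ULLS} is not proved there but is obtained by direct citation of \cite[Theorem 1.9]{GK10}, the only work being the verification (already recorded in the paper) that (Loc), (W) and (M) hold on $\mathcal{I}$ for the models at hand. Your additional sketch of the Germinet--Klein scheme (Kallenberg criterion, decomposition into independent subboxes, decoupling via localization, Minami and Wegner per box, law of rare events) is an accurate summary of the cited argument, so no gap arises.
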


Therefore, Theorem~\ref{joint} answers the question about the joint behavior at large scale of the point processes $\Xi(\xi;E_0,\omega,\Lambda)$ and $\Xi(\xi;E_1,\omega,\Lambda)$ with $E_0\neq E_1$. Theorem~\ref{joint} is a weaker version of \cite[Theorem 1.10]{GK10} and it is proved in \cite{GK10} that it is a consequence of (Loc), (W), (M) and the decorrelation estimates at distinct energies, which are the following theorem.
\begin{theo}\label{decoInt}
There exists a discrete set $\mathcal{S}\subset\R$ and $\gamma>0$ such that for any $\alpha\in (0,1) $, $(E,E')\in(\R)^2-\mathcal{S}^2$ with $E\neq E'$ and $k>0$ there exists $C>0$ such that for $L$ sufficiently large and $kL^\alpha\leq l\leq L^\alpha/k$ we have 
\begin{displaymath}
\mathbb{P}\left(
\begin{aligned}
 \textup{tr}\, \textbf{1}_{[E-L^{-1},E+L^{-1}]}\left(H_\omega(\Lambda_l)\right)\neq 0,\\
 \textup{tr}\, \textbf{1}_{[E'-L^{-1},E'+L^{-1}]}\left(H_\omega(\Lambda_l)\right)\neq 0
\end{aligned}
\right)\leq C\dfrac{l^2}{L^{1+\gamma}}.
\end{displaymath}
\end{theo}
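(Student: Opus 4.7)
The plan is to follow the Klopp/Shirley scheme used in \cite{S14b}, in which decorrelation is reduced to a Hellmann--Feynman computation together with a change of variables in two of the random parameters, and to isolate the single place where the covering condition was used. That place will dictate what the exceptional set $\mathcal{S}$ must be.

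First I would discard, via the Minami estimate (Theorem~\ref{mina-int}) applied at both $E$ and $E'$ with $\epsilon=L^{-1}$, the event that one of the two intervals $[E-L^{-1},E+L^{-1}]$ or $[E'-L^{-1},E'+L^{-1}]$ contains two or more eigenvalues of $H_\omega(\Lambda_l)$; this contributes at most $C(l/L)^{1+\rho}\ll l^2/L^{1+\gamma}$ for small enough $\gamma$ once $l\le L^\alpha/k$ and $\alpha<1$. On the complementary event only two simple eigenvalues $E_j(\omega)$ and $E_k(\omega)$ are involved, and using (Loc) I would further restrict to the event that the associated normalized eigenfunctions $\psi_j,\psi_k$ are each exponentially localized in a box of size $(\log L)^{1/\xi}$ inside $\Lambda_l$.

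Next, by Hellmann--Feynman, for each random site $n\in\Lambda_l$,
\begin{equation*}
\partial_{\omega_n}E_j(\omega)=\int q(x-n)|\psi_j(x)|^2\,dx=:a_n^{(j)},\qquad a_n^{(j)}\ge 0,
\end{equation*}
and analogously $a_n^{(k)}$. The strategy is then to pick two sites $n\neq m$ such that the $2\times 2$ Jacobian
\begin{equation*}
J=\begin{pmatrix}a_n^{(j)} & a_m^{(j)}\\ a_n^{(k)} & a_m^{(k)}\end{pmatrix}
\end{equation*}
has determinant bounded below by some inverse power of $\log L$, and perform the change of variables $(\omega_n,\omega_m)\mapsto(E_j,E_k)$. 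Each of the two new variables is constrained to lie in an interval of length $2L^{-1}$, so after integrating and summing over the $O(l^2)$ possible choices of the pair $(n,m)$, one obtains a bound of the form $C l^2/L^{2}\cdot(\text{loss from }\det J^{-1})$, which gives the required $l^2/L^{1+\gamma}$ for some $\gamma>0$.

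The hard part, and the only place the covering condition really bites, is to show that such a pair $(n,m)$ with non-degenerate Jacobian exists outside a discrete exceptional energy set $\mathcal{S}$. If $\int q(x-n)|\psi_j(x)|^2 dx$ were uniformly small in $n$, then $\psi_j$ would be essentially supported in the gap region $\R\setminus\bigcup_n\mathrm{supp}\,q(\cdot-n)$ and would (approximately) satisfy $-\Delta\psi_j=E\psi_j$ with Dirichlet conditions on the gap intervals, forcing $E$ into the discrete Dirichlet spectrum of $-\Delta$ on a gap (for the model \eqref{simpleop}, $\{4\pi^2k^2: k\ge 1\}$). I would therefore define $\mathcal{S}$ as this discrete set and prove a deterministic lemma: there exist $\kappa=\kappa(\mathrm{dist}(E,\mathcal{S}))>0$ and a positive density of indices $n$ for which $a_n^{(j)}\ge\kappa$, and similarly for $k$. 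The argument is a unique-continuation / Combes--Thomas-type estimate on each gap interval: a non-trivial approximate $E$-eigenfunction with small Dirichlet data cannot exist unless $E$ is near a gap Dirichlet eigenvalue. Finally, non-collinearity of the two vectors $(a_n^{(j)})_n$ and $(a_n^{(k)})_k$, giving $\det J$ non-small, follows from $E\ne E'$ by a pigeonhole/perturbation argument on the relative sizes of the $a_n$'s at resonant versus non-resonant sites, exactly as in \cite{S14b}. Verifying this deterministic gap lemma cleanly, and quantifying $\kappa$ in terms of $\mathrm{dist}(E,\mathcal{S})$, is the step I expect to take the most work.
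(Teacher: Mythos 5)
Your outer scheme (Minami reduction to simple eigenvalues, localization to boxes of size $(\log L)^{1/\xi}$, Hellmann--Feynman formula $\partial_{\omega_n}E_j=\|u\|_q^2$ on the $n$-th cell, a two-parameter change of variables $(\omega_n,\omega_m)\mapsto(E_j,E_k)$, summation over the $O(l^2)$ pairs) is indeed the skeleton of the paper's proof, which follows \cite{K11}. But the heart of the matter is precisely the step you defer: proving that, with probability at least $1-e^{-cl^{2\beta}}$, \emph{some} pair $(n,m)$ gives a non-degenerate Jacobian, equivalently (via Lemma~\ref{grad->jac}) that the normalized gradients $\nabla_\omega E_j/\|\nabla_\omega E_j\|$ and $\nabla_\omega E_k/\|\nabla_\omega E_k\|$ are not almost collinear. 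Your argument for this has two genuine gaps. First, a lower bound $a_n^{(j)}\geq\kappa$ on a positive density of sites (even if true) does not preclude collinearity: the two gradient vectors could be exactly proportional with all entries bounded below, so no pigeonhole on "resonant versus non-resonant sites" can be extracted from size information alone. Second, invoking "exactly as in \cite{S14b}" is circular: the collinearity analysis in \cite{S14b} is exactly the place where the covering condition $q\geq\eta^{-1}\mathbf{1}_{[-1/2,1/2]}$ was used, and removing it is the entire content of the present paper. The paper's replacement (Lemma~\ref{probcoli} via Lemma~\ref{closetan}) is a different and more delicate argument: write $u,v$ on each cell in the Floquet basis of solutions of the periodic ODE at energies $F,G$, pass to Pr\"ufer-type variables, show that near-collinearity forces $t_v(n_0)$ to be an approximate root of an explicit degree-$8$ polynomial whose coefficients are analytic in the finitely many variables $(\omega_{n_0-8N},\dots,\omega_{n_0+8N})$, prove this polynomial is not identically zero by computing a resultant in terms of the Floquet multipliers $\lambda_F,\lambda_G$ (Lemma~\ref{resnotzero}, using Propositions~\ref{instability}, \ref{discnotequal}, \ref{discnotequalpos}), and conclude with quantitative analyticity estimates (Propositions~\ref{Loja}, \ref{Loja2}).

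Two further discrepancies. The exceptional set $\mathcal{S}$ in the paper does not come from Dirichlet eigenvalues of the Laplacian on the gaps of $\mathrm{supp}\,V_\omega$, as you propose; it comes from the degeneracy of the Floquet discriminants, namely the case $\cos\sqrt{E_1}=\cos\sqrt{E_2}$ and $\sin\sqrt{E_1}/\sqrt{E_1}=\sin\sqrt{E_2}/\sqrt{E_2}$ in Proposition~\ref{discnotequal}, giving $\mathcal{S}\subset\{(k\pi)^2\}$ when $q_{per}=0$; your candidate set addresses a different (and not actually needed) issue. Relatedly, the paper does not need pointwise lower bounds on the components $a_n^{(j)}$ at all: what it needs, and proves via the scale-free unique continuation principle (Theorem~\ref{SFUC}), is only that $\|\nabla_\omega E_j\|_1\in[1/C,C]$ uniformly, with no exceptional energies at that step. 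Finally, your quantitative accounting ("each new variable lies in an interval of length $2L^{-1}$, hence $Cl^2/L^2\cdot(\det J)^{-1}$") skips the Hessian control (Lemma~\ref{Hessien}) and the monotonicity/broken-line counting of $\epsilon$-squares that the paper uses to make the change of variables legitimate on each small square; this is what produces the exponent $L^{\alpha-2}\lambda^{-3}$ and, after optimizing $\alpha=2/3$, the $L^{-4/3}$ bound from which $\gamma>0$ is extracted.
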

Theorem~\ref{decoInt} (with $\mathcal{S}=\emptyset$) was first proved in \cite{K11} for the discrete Anderson model in dimension one. Then, it was proved for other discrete models in dimension one in \cite{T14,S14a,S14b} and for the first time for a continuous model in \cite{S14b}, but with the covering condition. In the present article, we prove for the first time decorrelation estimates and therefore Theorom~\ref{joint}, without the covering condition.
\section{Models and Main result}
In this section, we introduce the models that will be studied,  the main result of this article and the known properties of the models used to prove this result. Let $(\omega_n)_{n\in\Z}$ be independent random variables with a common bounded, compactly supported density $\mu$.
\subsection{Models}
\subparagraph{Continuous models :}
Let $q:\R\to\R$ a continuous function such that there exist intervals $\mathcal{K}\subset\mathcal{J}$ of positive lengths and $C>0$ such that
\begin{equation}\label{stepfunction}
\frac{1}{C}\,\restriction{\textbf{1}}{\mathcal{K}}\leq q \leq C\, \restriction{\textbf{1}}{\mathcal{J}}\,.
\end{equation}
Therefore, $q$ is non-negative, bounded, compactly supported and positive on an interval of positive length. Contrary to the models studied in \cite{S14b}, the interval $\mathcal{K}$ is not supposed to be of length at least $1$, assumption that is often named \textit{covering condition}.

Let $H_\omega$ on $L^2(\R)$ defined by,
\begin{equation}
\forall\phi\in H^2(\R),\,H_\omega\phi=-\Delta\phi+q_{per}\phi+V_\omega\phi
\end{equation}
where $q_{per}$ is a bounded, one-periodic function and  
\begin{equation}
V_\omega(x)=\sum_{n\in\Z}\omega_n q(x-n).
\end{equation}
We suppose that the following hypothesis is true :
\medbreak
\textbf{(H)}: Either $q_{per}:=0$ or $q$ satisfies the covering condition, i.e $|\mathcal{K}|\geq 1$.
\medbreak
As $q$ is compactly supported and bounded, $V_\omega$ is uniformly bounded in $x$ and $\omega$. Therefore, we know that $H_\omega$ is self-adjoint with domain $H^2(\R)$ with probability one.
\subparagraph{Discrete models :}
Let $(a_n)_{n\in\Z}\in\left(\R_+\right)^\Z$ be a non-zero sequence of non-negative real numbers with finite support. Let $H_\omega$ on $\ell^2(Z)$ defined by,
\begin{equation}
\forall u\in \ell^2(\Z),\,H_\omega u=H_0u+V_\omega\phi
\end{equation}
where $H_0$ is a periodic bounded, Jacobi operator, and 
\begin{equation}
V_\omega(m)=\sum_{n\in\Z}\omega_n a_{m-n}.
\end{equation}
Note that $V_\omega$ and $H_\omega$ are uniformly bounded in $\omega$, hence $H_\omega$ is self-adjoint with probability one.

\subsection{Assumptions}

We know that the models defined above satisfy the three following assumptions for some relatively compact, open interval $\mathcal{I}\subset \R$.

\textbf{(W) :} There exists $C>0$ such that for $J\subset \mathcal{I}$ and $\Lambda$ an interval in $\R$, one has
\begin{equation}
\mathbb{P}\Big[tr \left(\textbf{1}_J(H_\omega(\Lambda)) \right)\geq 1\Big ]\leq C |J||\Lambda|.
\end{equation}
Wegner estimate has been proven for many different models, discrete or continuous (\cite{K95,CHK07,CGK09,V10}). Assumption (W) implies that the IDS is Lipschitz continuous.
\bigskip

\textbf{(Loc) : } for all $\xi\in(0,1)$, one has
\begin{equation}
\sup_{L>0} 
\underset{|f|\leq 1}{\underset{\text{supp }f\subset \mathcal{I}}\sup}
\mathbb{E}\left(\sum_{\gamma\in\Z^d}e^{|\gamma|^\xi}\|\textbf{1}_{\Lambda(0)}f(H_\omega(\Lambda_L))\textbf{1}_{\Lambda(\gamma)}\|_2\right)<\infty
\end{equation}
This property can be shown using either multiscale analysis or fractional moment method. In fact we suppose that $\mathcal{I}$ is a region where we can do the bootstrap multiscale analysis of \cite{GK01}. (Loc) is equivalent to the conclusion of the bootstrap MSA (see \cite[Appendix]{GK10} for details). We do not require estimates on the operator $H_\omega$ but only on $H_\omega(\Lambda_L)$. 
\bigskip

\textbf{(M) : }
Fix $J\subset \mathcal{I}$ a compact. For any $s'\in(0,1)$, $M>1$, $\eta>1$, $\rho\in(0,1)$, there exist $L_{s',M,\eta,\rho}>0$ and $C=C_{s',M,\eta,\rho}>0$ such that, for $E\in J,L\geq L_{s',M,\eta,\rho}$ and $\epsilon\in[L^{-1/s'}/M,ML^{-1/s'}]$ , one has
\begin{displaymath}
\sum_{k\geq2}\mathbb{P}\big(\textup{tr}[\textbf{1}_{[E-\epsilon,E+\epsilon]}(H_\omega(\Lambda_L))]\geq k\big)\leq C( \epsilon L )^{1+\rho}.
\end{displaymath}

The first two assumptions are known for a large class of operator, in any dimension. As for the last assumption, the Minami estimates, they are only proved in any dimension for Anderson type potential (\cite{Min96,GV07,BHS07,CGK09}), and for the discrete alloy-type model with single site potential whose Fourier transform does not vanish. To be more precise, these articles prove a stronger statement than the Minami estimates above, but this weaker version suffices in our case. 

For the models defined above, in dimension one, we know there exists a relatively compact, open interval $\mathcal{I}\subset \R$ such that (W), (Loc), (M) hold. It is proven in \cite{K14} that, in dimension one, for continuous models, if one has independence at a distance and localization, the Minami estimates (M) are an implication of the Wegner estimates. It is proven in \cite{S14a} that this statement holds also for discrete models, under the same assumptions. In both cases, the Minami estimates are not as strong as the Minami estimates proven in \cite{Min96,GV07,BHS07,CGK09}, but are sufficient for our purpose. For discrete alloy-type models, Minami estimates are also proven in \cite{TV14} but they only hold for single-site potentials whose Fourier transforms do not vanish. Therefore, we will use the Minami estimates proven in \cite{S14a} which hold under the assumptions of the present article.

\subsection{Main results}
The purpose of this article is to prove for the models defined above the following theorem
\begin{theo}\label{deco}
Suppose $H_\omega$ is one of the operators defined above. There exists $\gamma>0$ and a set $\mathcal{S}\subset\R$ with no accumulation point(only depending on $H_0$) such that, for any $\beta\in(1/2,1)$, $\alpha\in (0,1) $, $(F,G)\in\mathcal{I}^2-\mathcal{S}^2$ such that at $F\neq G$ and $k>0$, there exists $C>0$ such that for $L$ sufficiently large and $kL^\alpha\leq l\leq L^\alpha/k$ we have 
\begin{displaymath}
\mathbb{P}\left(
\begin{aligned}
 \textup{tr}\, \textbf{1}_{[F-L^{-1},F+L^{-1}]}\left(H_\omega(\Lambda_l)\right)\neq 0,\\
 \textup{tr}\, \textbf{1}_{[G-L^{-1},G+L^{-1}]}\left(H_\omega(\Lambda_l)\right)\neq 0
\end{aligned}
\right)\leq C\dfrac{l^2}{L^{1+\gamma}}.
\end{displaymath}
Furthermore, if $q$ satisfies the covering condition, $\mathcal{S}$ is the empty-set.
\end{theo}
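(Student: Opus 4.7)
The plan is to follow the general strategy introduced by Klopp in \cite{K11} and extended to the continuous covering case in \cite{S14b}, but with substantial modifications of the Jacobian analysis to accommodate the absence of the covering condition. The proof naturally splits into three stages.

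\textbf{Stage 1: localization reduction.} Using (Loc) together with the Minami estimate (M) at an intermediate scale $\ell_0\sim(\log L)^{1/\xi}$, one reduces to the event in which each eigenvalue near $F$ or $G$ is created by an eigenfunction localized in a sub-cube of $\Lambda_l$ of linear size $\ell_0$. Partitioning over the possible pairs of localization centers $(C_F,C_G)\subset\Lambda_l$ and discarding (via Minami) the contribution from pairs that overlap, one can assume $C_F$ and $C_G$ are disjoint. The two eigenvalues then come from independent sub-boxes, and there are $O(l^2)$ such pairs of centers, which is the source of the $l^2$ factor in the final bound.

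\textbf{Stage 2: a Jacobian/Wegner argument for a pair of eigenvalues.} For a fixed pair of centers I would freeze every $\omega_k$ except for two carefully chosen indices $\omega_n,\omega_m$, with $n$ near $C_F$ and $m$ near $C_G$, and study the smooth map $\Phi:(\omega_n,\omega_m)\mapsto(E_F(\omega),E_G(\omega))$ where $E_F,E_G$ are the two relevant simple eigenvalues. First-order perturbation theory identifies the Jacobian of $\Phi$ with $\det M$, where, in the continuous case,
\begin{equation*}
M=\begin{pmatrix}\langle\phi_F,q(\cdot-n)\phi_F\rangle & \langle\phi_F,q(\cdot-m)\phi_F\rangle\\ \langle\phi_G,q(\cdot-n)\phi_G\rangle & \langle\phi_G,q(\cdot-m)\phi_G\rangle\end{pmatrix},
\end{equation*}
with an analogous formula built from the discrete single-site profile $a_{\cdot}$ in the lattice case. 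If $|\det M|\gtrsim L^{-\gamma_0}$ for some small $\gamma_0>0$, a change of variables in the integral against the density $\mu\otimes\mu$ combined with the Wegner estimate yields the bound $L^{-2+\gamma_0}$ per pair of centers; summing against the $O(l^2)$ admissible pairs delivers $l^2/L^{1+\gamma}$ with $\gamma=1-\gamma_0$.

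\textbf{Stage 3: non-degeneracy of $\det M$ outside $\mathcal{S}$ --- the main obstacle.} In the covering case of \cite{S14b}, positivity of $q\,\mathbf{1}_{\mathcal K+n}$ on a unit interval forces each diagonal entry of $M$ to be uniformly bounded below, and $\det M$ is controlled by a straightforward disjoint-support argument; this is precisely the step that breaks down here, since individual matrix elements may be exponentially small or vanish for many choices of $n,m$. The idea I would pursue is to replace the pointwise argument by an algebraic one: on each sub-cube, $\phi_F$ is close to a generalized solution of $(H_0-F)u=0$ (with $H_0=-\Delta+q_{per}$, respectively the Jacobi operator in the discrete case), so the entries of $M$ are, up to exponentially small errors, translates of fixed quadratic forms in the Bloch/Floquet solutions of $H_0$ at energies $F$ and $G$. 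Varying $n$ and $m$ over $O(\ell_0)$ sites and exploiting a Wronskian identity for these solutions reduces the non-vanishing of $\det M$, after a suitable choice of $(n,m)$, to a single non-degeneracy condition on the pair $(F,G)$, expressed as an analytic function of the Floquet discriminant of $H_0$ evaluated at $F$ and $G$. The zero set of this function has no accumulation point in $\mathcal I^2$ off the diagonal, and its projection onto either factor defines the discrete set $\mathcal{S}$; under (H), when $q$ satisfies the covering condition, the previous pointwise argument supersedes this analysis and one recovers $\mathcal{S}=\emptyset$. Turning this sketch into a quantitative lower bound $|\det M|\gtrsim L^{-\gamma_0}$ with high probability, including the required uniformity in the localization centers and the perturbative control of $\phi_F,\phi_G$ by true Bloch waves, is where I expect the technical bulk of the proof to lie.
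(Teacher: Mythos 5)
Your overall architecture (localization reduction, two-parameter Jacobian, non-degeneracy outside a discrete set of energies) points in the right direction, but two of your three stages contain genuine errors. First, the Stage 1 reduction fails: the Minami estimate (M) bounds the probability of two eigenvalues near the \emph{same} energy, so it cannot be used to discard configurations in which the eigenfunction near $F$ and the eigenfunction near $G$ have overlapping localization centers. That overlapping case is precisely the hard case: if the two centers were disjoint, independence of the corresponding sub-boxes together with two Wegner estimates would already give a bound of order $l^2/L^2$ and no Jacobian analysis would be needed at all. The paper's reduction goes the opposite way: (Loc) and (M) reduce Theorem~\ref{deco} to the situation of a \emph{single} cube of side $(\log L)^{1/\xi'}$ carrying simultaneously one eigenvalue near $F$ and one near $G$ (Lemma~\ref{thdec2}), i.e.\ two eigenvalues driven by the \emph{same} random variables. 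Moreover, in your disjoint-center setting the matrix $M$ is nearly diagonal (the off-diagonal entries are exponentially small by localization), so your ``non-degeneracy of $\det M$'' reduces to lower-bounding diagonal entries, which is not the actual obstruction the theorem has to overcome.

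Second, the key approximation in Stage 3 is false: on the sub-cube, $\phi_F$ solves the equation with the full random potential $V_\omega$, which is of order one, so $\phi_F$ is \emph{not} close to a generalized solution of $(H_0-F)u=0$ and the entries of $M$ are not small perturbations of fixed quadratic forms in Bloch waves of $H_0$. The paper's mechanism is different: the eigenfunctions are expanded on each interval $(n-N,n+N)$ in $q$-orthonormal bases of solutions of the local ODE with the true random potential, chosen analytic in the local $\omega$'s (Proposition~\ref{anabasis}); the event that all $2\times2$ Jacobians are small is first converted into near-colinearity of the normalized gradients (Lemma~\ref{grad->jac}), and that near-colinearity is then turned, via resultants of transfer-matrix quadratics, into the statement that a Pr\"ufer-type variable is an approximate root of a polynomial with analytic coefficients (Lemma~\ref{closetan}); Floquet theory enters only to show that this resultant is not identically zero, and it is applied not to $H_0$ but to the constant-coupling operator obtained by setting all the relevant random variables equal, $y''=(\omega_0\tilde q-\bullet)y$, using that the two discriminants can be made distinct (Propositions~\ref{instability}, \ref{discnotequal}, \ref{discnotequalpos}); this is where the exceptional set $\mathcal{S}$ (the energies $(k\pi)^2$ when $q_{per}=0$) arises, and quantitative probability bounds then come from analyticity (Weierstrass preparation and Propositions~\ref{Loja}--\ref{Loja2}). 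You also omit the two-sided bound $\|\nabla_\omega E\|_1\asymp 1$, which without the covering condition must be obtained from the scale-free unique continuation principle (Theorem~\ref{SFUC}) and is needed to normalize the gradients and fix the Jacobian threshold; and your Stage 2 change of variables is only legitimate after the Hessian bound (Lemma~\ref{Hessien}) and the monotone broken-line counting of Lemma~\ref{square}, with the optimization $\alpha=2/3$ producing the final exponent.
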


Decorrelation estimates give more precise results about spectral statistics, such as Theorem~\ref{joint} (see \cite{GK10} for the proof and other results about spectral statistics). They are a consequence of Minami estimates and localization. In \cite{K11}, Klopp proves decorrelation estimates for eigenvalues of the discrete Anderson model in the localized regime. The result is proven at all energies only in dimension one. In \cite{T14}, decorrelation estimates are proven for the one-dimensional tight binding model, i.e when there are correlated diagonal and off-diagonal disorders. In \cite{S14a}, decorrelation estimates are also proven for other discrete models, such as Jacobi operators with positive alloy-type potential or the random hopping model, i.e when there is only off-diagonal disorder. Decorrelation estimates were also proved for continuous models in \cite{S14b} but only under the covering condition, and for the free Hamiltonian equal to the Laplace operator. In the present article, we improve this result by allowing a 1-periodic background potential. We also allow non-negative single-site potential without covering condition, but we then prove decorrelation estimates at all energies except for the ones in a fixed discrete set. The proof also apply to discrete operators but as the proof is the same, it will not be given. 

The proof of Theorem~\ref{deco} rely on the study of the gradients of two different eigenvalues. In particular, we show that the probability that they are co-linear is zero. In \cite{K11}, \cite{T14} and \cite{S14a}, this condition could easily be rewritten as a property of eigenvectors. For instance, for the discrete Anderson model, this condition is the system of equations
\begin{equation}
\forall n\in\llbracket -L,L \rrbracket, u^2(n)=v^2(n).
\end{equation}
 where $u$ and $v$ are normalized eigenvector associated to the eigenvalues. These equations can be rewritten easily as $u(n)=\pm v(n)$.
 
Now, consider the continuous alloy-type model where the single site potential $q$ has support included in $(0,1)$. Then, the condition of co-linearity is the system of equations
\begin{equation}\label{gradcolisimp}
\forall n\in\llbracket -L,L-1, \rrbracket, \int_n^{n+1}q(x)u^2(x)=\int_n^{n+1}q(x)v^2(x).
\end{equation}
The strategy developped in \cite{S14b} was to rewrite this system as a system of $2L$ quadratic equations, using basis of solutions on each interval $(n,n+1)$. This system and the fact that the eigenvectors have continuous derivatives will imposed conditions on the eigenvectors that are easier to handle. The difficulty was to choose a basis of solutions in which the problem could be rewritten in a simpler manner. The choice made in \cite{S14b} was to take orthonormal (with respect to $q$) basis of solution. We then had to compare the $L^2$ norms of these solutions when the $(\omega_n)_n$ were moving and the covering condition was simplifying this comparison. In the present article, we will make another choice of basis, the basis given by the Floquet theory. This will allow us remove the covering condition. The trade-off is that we need to exclude certain energies.
\section{Proof of Theorem~\ref{deco}}
We follow the proof of \cite[Section 2]{K11}. The only difference is in the proof of Lemma~\ref{probcoli} below which corresponds to \cite[Lemma 2.4]{K11}. The proof of the other intermediate results are the same as in \cite{K11}. Thus, the results will be given without proofs. The proof of Theorem~\ref{deco} is the same for discrete and continuous models except from the obvious modifications due to the discrete structure. Therefore, we will only prove the results for continuous models.

Using (M), Theorem~\ref{deco} is a consequence of the following theorem : 

\begin{theo}\label{thdec}
Let $\beta\in(1/2,1)$. For $\alpha\in (0,1) $ and $(F,G)\in\mathcal{I}^2$ with $F\neq G$, for any $k>1$ there exists $C>0$, such that for $L$ large enough and $kL^\alpha\leq l\leq L^\alpha/k$ we have
\begin{displaymath}
\mathbb{P}_0:=\mathbb{P}\left(
\begin{aligned}
  tr\textbf{1}_{[F-2L^{-1},F+2L^{-1}]}(H_\omega(\Lambda_l ))= 1,\\
 tr\textbf{1}_{[G-2L^{-1},G+2L^{-1}]}(H_\omega(\Lambda_l ))= 1
\end{aligned}
\right)\leq C\left(\dfrac{l^2}{L^{4/3}}\right)e^{(\log L)^\beta}.
\end{displaymath}
\end{theo}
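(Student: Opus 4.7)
The plan is to follow the scheme of Klopp \cite{K11}, already adapted to continuous models under the covering condition in \cite{S14b}; the only genuinely new ingredient is the co-linearity lemma, analogue of \cite[Lemma 2.4]{K11}. First I would use (M) to discard, at negligible cost, the events on which one of the two windows of size $4L^{-1}$ contains more than one eigenvalue of $H_\omega(\Lambda_l)$. On the remaining event, each window carries a unique, simple eigenvalue, say $E_F(\omega)$ and $E_G(\omega)$, with normalized eigenfunctions $u$ and $v$.

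Next, first-order perturbation theory (Feynman--Hellmann) gives, for $n\in\llbracket -l,l\rrbracket$,
\begin{equation*}
\partial_{\omega_n}E_F(\omega)=\int_{\R}q(x-n)u(x)^2\,dx,\qquad \partial_{\omega_n}E_G(\omega)=\int_{\R}q(x-n)v(x)^2\,dx.
\end{equation*}
Following \cite{K11}, I would then perform a two-dimensional change of variables in a well-chosen pair of coordinates $(\omega_{n_1},\omega_{n_2})$ so that $\mathbb{P}_0$ is controlled by the Wegner factor $(lL^{-1})^2$ divided by the associated $2\times 2$ Jacobian. Splitting the probability according to whether this Jacobian is greater or smaller than $L^{-1/3}$ produces the $L^{-4/3}$ in the statement, provided the small-Jacobian contribution can itself be bounded by the same quantity. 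This reduces the whole theorem to a quantitative co-linearity statement for the two gradient vectors above: with overwhelming probability, the system \eqref{gradcolisimp} is very far from being satisfied.

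To prove this co-linearity lemma without assuming the covering condition, my plan is to first invoke (Loc) to reduce, modulo a negligible tail, to eigenfunctions essentially supported in a sub-box of side $(\log L)^{1/\xi}$, so that only polylogarithmically many coordinates are relevant. On each unit cell $(n,n+1)$, I would expand $u$ and $v$ in the Floquet basis of the background operator $-\Delta+q_{per}$ at energies $E_F$ and $E_G$ respectively: $u|_{(n,n+1)}=\alpha_n u^F_++\beta_n u^F_-$, $v|_{(n,n+1)}=\gamma_n u^G_++\delta_n u^G_-$, where $u_\pm^E$ denote the two Bloch solutions at energy $E$. Continuity of $u,v$ and of their derivatives at the integers, combined with the co-linearity relations \eqref{gradcolisimp}, transforms these into a transfer-matrix-type recursion for the coefficient vectors whose matrices are explicit functions of the Floquet multipliers at $F$ and $G$. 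Iterating this recursion across the box must force $u$ or $v$ to vanish identically, except when the Floquet multipliers at $F$ and $G$ satisfy a finite number of algebraic relations; these relations cut out the announced discrete exceptional set $\mathcal{S}$, which depends only on $q_{per}$. When $q_{per}\equiv 0$ the Bloch solutions reduce to plane waves and a direct Fourier argument allows $\mathcal{S}=\emptyset$, in agreement with hypothesis (H).

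The main obstacle is this last step: proving quantitatively that the Floquet-recursion argument produces a lower bound on the $2\times 2$ Jacobian valid uniformly for all $(F,G)\notin\mathcal{S}^2$ with $F\neq G$, and characterizing the discrete set $\mathcal{S}$ in terms of the Floquet spectral data of $q_{per}$ alone. Once this algebraic obstruction is in place, the remaining pieces (Minami reduction, Wegner bound for the large-Jacobian contribution, and the two-dimensional change of variables in $\omega$) are structurally identical to those of \cite{K11,S14b} and deliver the announced bound $Cl^2L^{-4/3}e^{(\log L)^\beta}$.
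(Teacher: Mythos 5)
Your overall skeleton matches the paper's: reduction via (Loc) to boxes of side $(\log L)^{1/\xi'}$, the Minami estimate to ensure each window carries a single simple eigenvalue, the Klopp-type dichotomy on the $2\times 2$ Jacobians of $(\omega_\gamma,\omega_{\gamma'})\mapsto(E_F,E_G)$, and the reduction of everything to a quantitative co-linearity lemma for the two gradients. (A secondary quibble: the dichotomy threshold in the paper is $\lambda\asymp e^{-l^\beta}$, not $L^{-1/3}$; the exponent $4/3$ comes from optimizing the Minami window $\epsilon=L^{-\alpha}$ at $\alpha=2/3$ in the bound $\mathbb{P}_\epsilon\leq CL^{\alpha-2}\lambda^{-3}$.) The genuine gap is in your treatment of the co-linearity lemma (Lemma~\ref{probcoli}), which is exactly the new ingredient this paper has to supply.

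First, your expansion of $u$ and $v$ on each unit cell in the Bloch/Floquet basis of the \emph{background} operator $-\Delta+q_{per}$ at energies $F$ and $G$ is not legitimate: on a cell where the random potential $\sum_n\omega_n q(\cdot-n)$ does not vanish, $u$ does not solve the background equation, so it is not a combination of the two background Bloch solutions there. The paper instead expands $u,v$ cell by cell in a $q$-orthonormal basis of solutions of the ODE with the \emph{full random} potential at energies $F$ and $G$, chosen to depend analytically on the finitely many relevant $\omega_j$ (Proposition~\ref{anabasis}); approximate co-linearity then forces a Pr\"ufer-type quantity $t_v(n_0)$ to be an approximate root of the resultant of two explicit quadratics, an analytic function of the local random variables, and the whole point is to prove this function is not identically zero. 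Second, your deterministic rigidity claim (``iterating the recursion forces $u$ or $v$ to vanish identically'') cannot yield the needed bound $e^{-cl^{2\beta}}$: the co-linearity is only approximate (of size $e^{-l^\beta}$), so one needs quantitative smallness of a non-trivial analytic function of the $\omega_j$ on each of order $l$ well-separated, hence independent, blocks, combined with the Lojasiewicz-type estimates of Appendix A. Third, you have the Floquet input and the exceptional set backwards: in the paper the Floquet analysis is carried out with the \emph{coupling constant} $\omega_0$ as spectral parameter for the periodized single-site potential (restricting to configurations where all nearby $\omega_j$ equal $\omega_0$ and choosing $\omega_0$ in an instability interval so that $|D_F|,|D_G|>2$ and $|D_F|\neq|D_G|$, Propositions~\ref{instability}, \ref{discnotequal}, \ref{discnotequalpos}), and the discrete set $\mathcal{S}$ (the energies $(k\pi)^2$) arises precisely in the case $q_{per}\equiv 0$ without covering condition, while under the covering condition one gets $\mathcal{S}=\emptyset$. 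Your assertion that $\mathcal{S}$ depends only on $q_{per}$ and can be taken empty when $q_{per}\equiv 0$ by ``a direct Fourier argument'' reverses hypothesis (H) and is unsupported; with this and the invalid Bloch expansion, the key lemma, and hence the theorem, is not established by your argument.
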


We now restrict ourself to the study of the restriction of $H_\omega$ to cubes of size $(\log L)^{1/\xi'}$ instead of length $L^\alpha$. In this context, we extract from \cite[Proposition 2.1]{K11} the 
\begin{prop}\label{(Loc)(I)} : For all $p>0$ and $\xi\in(0,1)$, for L sufficiently large, there exists a set of configuration $\mathcal{U}_{\Lambda_l}$ of probability larger than $1-L^{-p}$ such that if $\phi_{n,\omega}$ is a normalized eigenvector associated to the eigenvalue $E_{n,\omega}\in\mathcal{I}$ and $x_0(\omega)\in \{1,\dots,L\}$ maximize $|\phi_{n,\omega}|$ then 
\begin{equation}\label{expdec}
|\phi_{n,\omega}(x)|\leq L^{p+d} e^{-|x-x_0|^{\xi}}.
\end{equation}
\end{prop}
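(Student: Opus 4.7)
The plan is to follow the proof of \cite[Proposition~2.1]{K11}, which is the standard finite-volume SULE estimate derived from the localization hypothesis (Loc). The argument has two main ingredients.

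First, I would turn (Loc) into a high-probability off-diagonal decay estimate for a spectral projector of $H_\omega(\Lambda_L)$. Fix a smooth cutoff $\chi_\mathcal{I}\in C_c^\infty(\R;[0,1])$ supported in $\mathcal{I}$ and equal to $1$ on a slightly smaller open subinterval containing all eigenvalues one intends to treat, so that $\chi_\mathcal{I}(H_\omega(\Lambda_L))\phi_{n,\omega}=\phi_{n,\omega}$ whenever $E_{n,\omega}$ lies in that subinterval. Applying (Loc) with $f=\chi_\mathcal{I}$, summing the resulting bound over translates $y\in\Lambda_L\cap\Z^d$ of the base cube (which costs a factor $O(L^d)$), and then invoking Chebyshev's inequality together with a union bound over the pairs $(y,\gamma)$ of integer sites in $\Lambda_L$, produces the set $\mathcal{U}_{\Lambda_L}$ of probability $\geq 1-L^{-p}$ on which
\begin{equation*}
\|\mathbf{1}_{\Lambda(y)}\chi_\mathcal{I}(H_\omega(\Lambda_L))\mathbf{1}_{\Lambda(\gamma)}\|_2\leq L^{p'}\,e^{-|y-\gamma|^{\xi'}}\qquad\text{for all }y,\gamma\in\Lambda_L\cap\Z^d,
\end{equation*}
for some $p'=p+O(d)$ and some $\xi'$ slightly larger than $\xi$ (so as to absorb the volume losses below).

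Second, I would transfer this Hilbert--Schmidt decay to a pointwise bound on the eigenvector. Let $\phi_n=\phi_{n,\omega}$ be a normalised eigenvector with $E_n$ in the subinterval, set $a_\gamma=\|\mathbf{1}_{\Lambda(\gamma)}\phi_n\|$, and let $x_0$ be an integer site maximising $a_\gamma$. The normalisation $\sum_\gamma a_\gamma^2=1$ over the $O(L^d)$ sites of $\Lambda_L$ forces the crucial non-degeneracy $a_{x_0}\geq cL^{-d/2}$. The rank-one spectral projector $P_{\{E_n\}}=|\phi_n\rangle\langle\phi_n|$ satisfies $P_{\{E_n\}}=\chi_\mathcal{I}(H_\omega(\Lambda_L))\,P_{\{E_n\}}\,\chi_\mathcal{I}(H_\omega(\Lambda_L))$; substituting this into the identity $\|\mathbf{1}_{\Lambda(x)}P_{\{E_n\}}\mathbf{1}_{\Lambda(x_0)}\|_2=a_xa_{x_0}$, inserting sums over two intermediate cubes, applying the bound from Step~1 on both sides, and exploiting the subadditivity $|x-x_0|^\xi\leq|x-\gamma|^\xi+|\gamma-\gamma'|^\xi+|\gamma'-x_0|^\xi$ valid for $\xi\in(0,1)$, yields $a_x\leq L^{p+d}\,e^{-|x-x_0|^\xi}$ after dividing by the lower bound on $a_{x_0}$. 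Finally, classical elliptic regularity for $(-\Delta+q_{\textup{per}}+V_\omega-E_n)\phi_n=0$ with uniformly bounded coefficients gives $\|\phi_n\|_{L^\infty(\Lambda(x))}\leq C\,\|\phi_n\|_{L^2(2\Lambda(x))}$, which upgrades this local $L^2$-bound to the pointwise estimate \eqref{expdec}.

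The main technical obstacle is closing the SULE inequality in the second step. A single application of the identity $\chi_\mathcal{I}(H)\phi_n=\phi_n$ only yields $a_x\leq L^{p'}\sum_\gamma e^{-|x-\gamma|^{\xi'}}a_\gamma$, which after Cauchy--Schwarz with $\|a\|_{\ell^2}=1$ reduces to the uniform bound $a_x\leq L^{p'}\cdot C$ and carries no information about the maximiser~$x_0$. One must therefore insert $\chi_\mathcal{I}(H)$ on \emph{both} sides of the rank-one projector $P_{\{E_n\}}$, use the lower bound $a_{x_0}\gtrsim L^{-d/2}$ coming from the normalisation to divide through, and exploit subadditivity of $t\mapsto t^\xi$ in order to recover genuine exponential decay centred at~$x_0$.
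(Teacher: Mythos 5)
There is a genuine gap at the heart of your second step. After writing $\|\mathbf{1}_{\Lambda(x)}P_{\{E_n\}}\mathbf{1}_{\Lambda(x_0)}\|_2=a_xa_{x_0}$ and inserting $\chi_\mathcal{I}(H)$ on both sides, the middle factor you must sum over is $\|\mathbf{1}_{\Lambda(\gamma)}P_{\{E_n\}}\mathbf{1}_{\Lambda(\gamma')}\|_2=a_\gamma a_{\gamma'}$, which carries \emph{no} decay in $|\gamma-\gamma'|$; the only available bound is $a_\gamma a_{\gamma'}\leq 1$. Hence the subadditivity $|x-x_0|^\xi\leq|x-\gamma|^\xi+|\gamma-\gamma'|^\xi+|\gamma'-x_0|^\xi$ cannot be exploited: the term $|\gamma-\gamma'|^\xi$ is matched by nothing. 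Carrying out your estimate honestly, Cauchy--Schwarz in $\gamma$ and $\gamma'$ gives only
\begin{equation*}
a_xa_{x_0}\;\leq\;L^{2p'}\Bigl(\sum_\gamma e^{-|x-\gamma|^{\xi'}}a_\gamma\Bigr)\Bigl(\sum_{\gamma'}e^{-|\gamma'-x_0|^{\xi'}}a_{\gamma'}\Bigr)\;\leq\;C\,L^{2p'},
\end{equation*}
i.e.\ the same trivial bound as the one-sided insertion you rightly rejected; no decay centred at $x_0$ is produced. The failure is not a technicality: for a \emph{fixed} smooth cutoff $\chi_\mathcal{I}$, the kernel of $\chi_\mathcal{I}(H_\omega(\Lambda_L))$ decays rapidly for \emph{any} Schr\"odinger operator with bounded potential (Helffer--Sj\"ostrand plus Combes--Thomas), localized or not, so no argument whose only spectral input is Step 1 can yield SULE-type decay of individual eigenfunctions.

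The point of (Loc), and the route taken in the proof this proposition is quoted from (\cite[Proposition 2.1]{K11}, see also the appendix of \cite{GK10}), is that the bound holds uniformly over \emph{all} Borel functions $f$ supported in $\mathcal{I}$ with $|f|\leq1$, including the $\omega$-dependent choice $f=\mathbf{1}_{\{E_{n}(\omega)\}}$, for which $f(H_\omega(\Lambda_L))=|\phi_n\rangle\langle\phi_n|$ and $\|\mathbf{1}_{\Lambda(x)}f(H_\omega(\Lambda_L))\mathbf{1}_{\Lambda(\gamma)}\|_2=a_xa_\gamma$ \emph{directly}. Covering $\Lambda_L$ by unit cubes, applying Chebyshev and a union bound then gives, on a set of probability $\geq1-L^{-p}$, $a_xa_{x_0}\leq L^{q}e^{-|x-x_0|^\xi}$ simultaneously for all eigenvalues in $\mathcal{I}$ and all $x$, and dividing by $a_{x_0}\gtrsim L^{-d/2}$ (your normalisation argument, which is fine) plus the elliptic-regularity upgrade to a pointwise bound (also fine) finishes the proof. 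Note that to apply Chebyshev to an $\omega$-dependent $f$ you need the supremum over $f$ \emph{inside} the expectation; this stronger form is exactly what the bootstrap multiscale analysis conclusion provides, as recalled in \cite[Appendix]{GK10}. So the repair is to discard the smooth cutoff and two-sided insertion and run the argument with the random rank-one (or tiny-interval) spectral projector itself.
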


Now, Theorem~\ref{thdec} is a consequence of the following lemma and Proposition~\ref{(Loc)(I)}. 
\begin{lem}\label{thdec2}
Let $\beta'\in(1/2,1)$. For $\alpha\in (0,1) $ and $(F,G)\in\mathcal{I}^2$ with $F\neq G$, there exists $C>0$ such that for any $\xi'\in(0,\xi)$, $L$ large enough and $\tilde{l}=(\log L)^{1/\xi'}$ we have 
\begin{displaymath}
\mathbb{P}_1:=\mathbb{P} \left( 
\begin{aligned}
 tr\textbf{1}_{[F-2L^{-1},F+2L^{-1}]}(H_\omega(\Lambda_{\tilde{l}} ))= 1,\\
 tr\textbf{1}_{[G-2L^{-1},G+2L^{-1}]}(H_\omega(\Lambda_{\tilde{l}} ))= 1
\end{aligned}
\right)\leq C\left (\dfrac{\tilde{l}^2}{L^{4/3}}\right)e^{\tilde{l}^{\beta'}}.
\end{displaymath}
\end{lem}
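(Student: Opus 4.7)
The plan is to follow the strategy of \cite[Section 2]{K11}, replacing only the probabilistic control of ``colinear gradients''. First I would work on the localization event $\mathcal{U}_{\Lambda_{\tilde l}}$ from Proposition~\ref{(Loc)(I)}, whose complement contributes at most $L^{-p}$ (harmless for $p$ chosen large). On this event, under the trace conditions in $\mathbb{P}_1$, there is a unique eigenvalue $E_F(\omega)\in[F-2L^{-1},F+2L^{-1}]$ with normalized eigenvector $\phi_F$, and likewise $E_G(\omega),\phi_G$, both exponentially localized near centres $x_F,x_G\in\Lambda_{\tilde l}$. First-order perturbation theory then gives the gradient vectors
\[
u_\ast(\omega)=\bigl(\partial_{\omega_n} E_\ast\bigr)_{-\tilde l\leq n\leq \tilde l}=\Bigl(\int_\R q(x-n)|\phi_\ast(x)|^2\,dx\Bigr)_n,\qquad \ast\in\{F,G\},
\]
which are non-negative, with norms bounded above and below using (W) and (Loc).

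I would then split $\mathbb{P}_1\leq \mathbb{P}_1^{\textup{trans}}+\mathbb{P}_1^{\textup{col}}$ via a parameter $\delta=L^{-\tau}$ to be optimized later: $\mathbb{P}_1^{\textup{trans}}$ enforces $|u_F\wedge u_G|\geq \delta$ and $\mathbb{P}_1^{\textup{col}}$ its negation. On the transversal event, the map $\omega\mapsto (E_F(\omega),E_G(\omega))$, restricted to a well-chosen two-dimensional section of $\omega$-space, has Jacobian $\geq c\delta$, so a change of variables as in \cite[Lemma 2.5]{K11} combined with (W) yields
\[
\mathbb{P}_1^{\textup{trans}}\leq C\,\tilde l^{2}\,L^{-2}\,\delta^{-1}\,e^{\tilde l^{\beta'}},
\]
the exponential factor absorbing the exponentially small non-local corrections to the gradients coming from the tails of $\phi_F,\phi_G$ outside their localization centres.

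The colinear piece $\mathbb{P}_1^{\textup{col}}$ is the new content of the paper and corresponds to Lemma~\ref{probcoli}. In \cite{S14b} this was bounded using a $q$-orthonormal basis of solutions on each unit cell, a choice whose efficiency depended crucially on the covering condition $q\geq \eta^{-1}\mathbf{1}_{[-1/2,1/2]}$. Without covering, I would instead expand $\phi_F$ and $\phi_G$, on each unit cell avoiding the support of the relevant translates $q(\cdot-m)$, in the Floquet basis of $-\partial_x^2+q_{\textup{per}}-E$, and rewrite the colinearity relations
\[
\int q(x-n)|\phi_F|^2\cdot\int q(x-n')|\phi_G|^2\;=\;\int q(x-n)|\phi_G|^2\cdot\int q(x-n')|\phi_F|^2,\qquad -\tilde l\leq n<n'\leq \tilde l,
\]
as an algebraic system in the Floquet coefficients of the eigenfunctions. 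The non-degeneracy of this system is controlled by the Wronskian/discriminant of the Floquet basis at the energies $F,G$; excluding the discrete exceptional set $\mathcal{S}$ at which this algebraic object vanishes, one extracts a bound of the form $\mathbb{P}_1^{\textup{col}}\leq C\tilde l^{2}\,\delta^{\kappa}\,e^{\tilde l^{\beta'}}$ for some $\kappa>0$. Optimizing $\tau$ to balance the two contributions then produces the claimed $C\tilde l^{2}L^{-4/3}e^{\tilde l^{\beta'}}$.

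The main obstacle is the colinear case: producing a quantitative gap for the algebraic relation ``the two gradients are colinear'' when the covering condition fails. The Floquet expansion is the substitute for the $q$-orthogonality argument of \cite{S14b}, and the exceptional set $\mathcal{S}$ is precisely the discrete set of energies at which the Floquet Wronskian (or the resultant of the associated algebraic system) degenerates, which explains why $\mathcal{S}$ disappears under the covering condition.
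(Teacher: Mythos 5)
Your high-level skeleton (dichotomy between a ``transversal'' and an ``almost colinear'' event, with the colinear part handled via a Floquet-basis reduction to the non-vanishing of an algebraic resultant) matches the paper's strategy, but the quantitative core of the proof of Lemma~\ref{thdec2} is missing and your claimed bounds do not hold as stated. The main gap is the transversal estimate. A lower bound $|J_{\gamma,\gamma'}|\geq\delta$ on one $2\times2$ Jacobian does \emph{not} yield $\mathbb{P}_1^{\textup{trans}}\leq C\tilde l^2L^{-2}\delta^{-1}$ by a direct change of variables: to integrate in the two variables $(\omega_\gamma,\omega_{\gamma'})$ one must control how far the Jacobian lower bound propagates, which requires the Hessian bound of Lemma~\ref{Hessien}, and that bound itself is only available away from near-degenerate spectrum, i.e.\ after paying the Minami estimate on an auxiliary window of width $\epsilon=L^{-\alpha}$ (cost $C\epsilon^2 l^2 e^{l^\beta}$). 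One then covers the admissible $(\omega_\gamma,\omega_{\gamma'})$-region by squares of side $\epsilon$ placed, by monotonicity of $E_j,E_k$ in each $\omega_i$, along a monotone broken line ($CL^\alpha$ squares), and Lemma~\ref{square} confines the event inside each square to a box of side $L^{-1}\lambda^{-2}$. This machinery produces $\mathbb{P}_\epsilon\leq CL^{\alpha-2}\lambda^{-3}$, and the exponent $4/3$ comes from optimizing $L^{-2\alpha}$ against $L^{\alpha-2}$, i.e.\ $\alpha=2/3$ --- not from balancing the transversal piece against the colinear piece. In the paper the colinear piece is not balanced at all: the colinearity threshold is taken exponentially small in $\tilde l$ (namely $e^{-\tilde l^{\beta}}$, via Lemma~\ref{grad->jac} with $\lambda\asymp e^{-\tilde l^\beta}$), and Lemma~\ref{probcoli} shows its probability is $e^{-c\tilde l^{2\beta}}$, which is $o(L^{-p})$ for every $p$ since $\tilde l=(\log L)^{1/\xi'}$ with $2\beta/\xi'>1$. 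Your choice $\delta=L^{-\tau}$ with an unspecified exponent $\kappa$ leaves the final exponent undetermined (you would need $\kappa=2$ to recover $4/3$, which is nowhere established), and the asserted ``optimization'' is therefore not a proof.

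Two further points. First, the upper and lower bounds $\|\nabla_\omega E_\ast\|_1\in[1/C,C]$, which you attribute to (W) and (Loc), are in fact nontrivial precisely because the covering condition is dropped: the lower bound requires the scale-free unique continuation principle (Theorem~\ref{SFUC}), applied to the ground of the inequality $\partial_{\omega_n}E_j\geq\eta\int_{(z+n-\delta,z+n+\delta)}\phi_j^2$. Second, your reduction of colinearity to an algebraic system in Floquet coefficients is the right idea, but as written (products of cell integrals being exactly equal) it is the exact colinearity condition; the actual work in Lemma~\ref{closetan} is to convert \emph{approximate} colinearity of the Pr\"ufer radii on a block of $O(N)$ consecutive cells, together with continuity of $u,v$ and $u'$, $v'$, into the smallness of a fixed resultant polynomial $\mathcal{R}_{\hat\omega}(t_v(n_0))$ whose coefficients are analytic in finitely many $\omega$'s, and then to prove (via the Floquet multipliers and Propositions~\ref{instability}, \ref{discnotequal}, \ref{discnotequalpos}, which is where the discrete set $\mathcal{S}$ and the hypothesis (H) enter) that this resultant is not identically zero, before the Lojasiewicz-type estimates of Propositions~\ref{Loja} and~\ref{Loja2} can be applied. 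None of these quantitative steps appear in your proposal, so as it stands it does not establish the stated bound.
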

The rest of the section is dedicated to the proof of Lemma~\ref{thdec2}. Define $J_L=\left[E-L^{-1},E+L^{-1}\right]$ and $J_L'=\left[E'-L^{-1},E'+L^{-1}\right]$. For $\epsilon\in(2L^{-1},1)$, for some $\kappa>2$, using (M) when the operator $H_\omega(\Lambda_l)$ has two eigenvalues in $[-\epsilon,+\epsilon]$, one has 
\begin{equation}
\mathbb{P}_1\leq C\epsilon^2 l^{\kappa}+\mathbb{P_\epsilon}\leq C\epsilon^2l^2e^{l^\beta}+\mathbb{P}_\epsilon
\end{equation}
where 
\begin{displaymath}
\mathbb{P_\epsilon}=\mathbb{P}(\Omega_0(\epsilon))
\end{displaymath}
and
\[ \Omega_0(\epsilon)= \left\{ \omega;
\begin{aligned}
\sigma(H_\omega(\Lambda_l))\cap J_L&= \{E(\omega)\} \\
\sigma(H_\omega(\Lambda_l))\cap (E-\epsilon,E&+\epsilon)= \{E(\omega)\} \\
\sigma(H_\omega(\Lambda_l))\cap J_L'&= \{E'(\omega)\} \\
\sigma(H_\omega(\Lambda_l))\cap(E'-\epsilon,&E'+\epsilon)= \{E'(\omega)\}
\end{aligned} 
\right \}.
\]
In order to estimate $\mathbb{P}_\epsilon$ we make the following definition. For $(\gamma,\gamma')\in\Lambda_L^2$ let $J_{\gamma,\gamma'}(E(\omega),E'(\omega))$ be the Jacobian of the mapping $(\omega_\gamma,\omega_{\gamma'})\rightarrow (E(\omega),E'(\omega))$ : 
\begin{equation}\label{defjac}
J_{\gamma,\gamma'}(E(\omega),E'(\omega))=\left \vert \begin{pmatrix} \partial_{\omega_\gamma}E(\omega) & \partial_{\omega_{\gamma'}}E(\omega)\\ \partial_{\omega_\gamma}E'(\omega) &\partial_{\omega_{\gamma'}}E'(\omega)
\end{pmatrix} \right \vert
\end{equation} 
and define 
\begin{displaymath}
\Omega^{\gamma,\gamma'}_{0,\beta}(\epsilon)= \Omega_0(\epsilon)\cap \left \{ \omega ;|J_{\gamma,\gamma'}(E(\omega),E'(\omega))|\geq \lambda \right\}.
\end{displaymath} 

When one of the Jacobians is sufficiently large, the eigenvalues depends on  two independent random variables. Thus the probability to stay in a small interval is small. So we divide the proof in two parts, depending on whether all the Jacobians are small. The next lemma shows that if all the Jacobians are small then the gradients of the eigenvalues, which have positive components for the models considered in the present article, must be almost co-linear.
\begin{lem}\label{grad->jac}
Let $(u,v)\in(\R^+)^{2n}$ such that $\|u\|_1=\|v\|_1=1$. Then 
\begin{displaymath}
\max_{j\neq k} \left | \begin{pmatrix} u_j & u_k\\ v_j & v_k \end{pmatrix} \right |^2\geq \dfrac{1}{4n^5}\Vert u-v \Vert_1^2.
\end{displaymath}
\end{lem}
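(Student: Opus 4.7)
The plan is to show that if every $2\times 2$ minor $u_jv_k-u_kv_j$ is small, then $u$ and $v$ must be nearly proportional, and then to promote near-proportionality to near-equality using the common $\ell^1$-normalization. Set $M:=\max_{j\neq k}|u_jv_k-u_kv_j|$; the target inequality becomes $\|u-v\|_1\le 2n^{5/2}M$.

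The key step is a judicious choice of pivot. Since $v\in(\R^+)^n$ with $\|v\|_1=1$, there exists an index $k^*$ such that $v_{k^*}\ge 1/n$. For each $j\neq k^*$, the defining inequality $|u_jv_{k^*}-u_{k^*}v_j|\le M$, divided by $v_{k^*}$, gives
\[
|u_j-cv_j|\le\frac{M}{v_{k^*}}\le nM,\qquad c:=\frac{u_{k^*}}{v_{k^*}},
\]
and this bound is trivially an equality at $j=k^*$. Summing in $j$ yields $\|u-cv\|_1\le n^2 M$.

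Next I would compare $c$ to $1$ using the two normalizations: by the reverse triangle inequality,
\[
|c-1|=\bigl|c\|v\|_1-\|u\|_1\bigr|\le\|cv-u\|_1\le n^2 M.
\]
Combining the two estimates,
\[
\|u-v\|_1\le\|u-cv\|_1+|c-1|\,\|v\|_1\le 2n^2 M,
\]
and squaring gives $\|u-v\|_1^2\le 4n^4M^2\le 4n^5M^2$, which is the claim (in fact with $n^4$ rather than $n^5$, a slight improvement on the stated bound).

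This is a purely elementary linear-algebra estimate, and I do not foresee any substantive obstacle. The only genuine choice is the pivot $k^*$: picking an index where $v$ is \emph{large} is essential, because a generic $k^*$ would introduce an uncontrolled factor $1/v_{k^*}$ in the bound $|u_j-cv_j|\le M/v_{k^*}$. Using the maximal coordinate of $v$ costs only one factor of $n$, which is what produces the polynomial loss in the final estimate and explains why a bound of the form $1/(4n^5)$ (and not something exponentially small) is obtainable.
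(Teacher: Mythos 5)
Your argument is correct, and in fact yields the slightly stronger constant $\tfrac{1}{4n^4}$: the pivot $v_{k^*}\ge 1/n$ gives $\|u-cv\|_1\le n^2M$ with $c=u_{k^*}/v_{k^*}\ge 0$, the reverse triangle inequality (using $c\ge0$ so that $\|cv\|_1=c\|v\|_1$) gives $|c-1|\le n^2M$, and the triangle inequality finishes. Note that the paper itself gives no proof of this lemma; it is quoted from the reference [K11], so there is no in-text argument to match. Compared with the route suggested by the normalizations themselves, your proof is a little longer than necessary: since $\sum_k u_k=\sum_k v_k=1$, one can write directly
\begin{displaymath}
u_j-v_j=u_j\sum_k v_k-v_j\sum_k u_k=\sum_{k\neq j}\bigl(u_jv_k-u_kv_j\bigr),
\end{displaymath}
so $|u_j-v_j|\le (n-1)M$ and $\|u-v\|_1\le n(n-1)M$, again giving $M^2\ge \|u-v\|_1^2/n^4\ge\|u-v\|_1^2/(4n^5)$ with no pivot and no comparison of $c$ to $1$. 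What your version buys is robustness: it only uses one of the two normalizations to produce the pivot, and would survive if $u$ and $v$ were normalized differently, whereas the one-line identity exploits both normalizations simultaneously. Either way the stated bound holds, so there is no gap; only the trivial remark that the statement implicitly requires $n\ge 2$ (for $n=1$ the maximum is over an empty set, but then $u=v$) applies equally to both arguments.
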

Thus, either one of the Jacobian determinants is not small or the gradient of $E$ and $E'$ are almost co-linear. We now show that the second case happens with a small probability. 
\begin{lem}\label{probcoli}
Let $(F,G)\in\mathcal{I}^2$ with $F\neq G$ and $\beta>1/2$. Furthermore, if $d>1$, we suppose that $|F-G|\geq \text{diam} sp(H_0)$. Let $\mathbb{P}$ denotes the probability that there exist $E_j(\omega)$ and $E_k(\omega)$, simple eigenvalues of $H_\omega(\Lambda_l)$ such that $|F-E_j(\omega)|+|G-E_k(\omega)|\leq e^{-l^\beta}$ and such that
\begin{equation}\label{gradcoli}
\left\|\dfrac{\nabla_\omega\big(E_j(\omega))}{\|\nabla_\omega\big(E_j(\omega))\|}-\dfrac{\nabla_\omega\big(E_k(\omega))}{\|\nabla_\omega\big(E_k(\omega))\|}\right\|\leq e^{-l^\beta}
\end{equation}
then there exists $c>0$ such that
\begin{equation}
\mathbb{P}\leq e^{-c l^{2\beta}}
\end{equation}
\end{lem}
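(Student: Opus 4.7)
My plan is to translate the near-colinearity condition \eqref{gradcoli} into an approximate algebraic constraint on the disorder via Hellmann-Feynman and a Floquet-Bloch representation of the eigenfunctions on the periodic background, and then to show that this constraint has exponentially small probability.

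First, for simple eigenvalues $E_j(\omega), E_k(\omega)$ with $L^2$-normalized eigenfunctions $u_j, u_k$, Hellmann-Feynman gives
\[
\partial_{\omega_n}E_j(\omega) = \int q(x-n)\,u_j(x)^2\,dx \ge 0,
\]
and similarly for $E_k$. Since all components of both gradients are non-negative, \eqref{gradcoli} is equivalent, up to a universal factor, to the analogous $\ell^1$-normalized equality, so that for each $n\in\llbracket -l,l\rrbracket$
\[
\int q(x-n)u_j(x)^2\,dx - c\int q(x-n)u_k(x)^2\,dx = O(e^{-l^\beta}),
\]
with common positive ratio $c=\|\nabla_\omega E_j\|_1/\|\nabla_\omega E_k\|_1$. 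Proposition~\ref{(Loc)(I)} confines the nontrivial equations to a polynomial-size window around the common localization center, outside an exceptional event of probability at most $L^{-p}$.

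Second, to handle these equations without the covering condition, I would exploit the Floquet-Bloch structure of $H_0=-\Delta+q_{per}$. For $E\in\mathcal{I}$ outside a discrete set of thresholds (band edges and related resonances of $H_0$), there exist two linearly independent Bloch solutions $\psi_E^\pm(x)=e^{\pm ik(E)x}p_E^\pm(x)$ of $H_0\psi=E\psi$ with $p_E^\pm$ one-periodic and bounded. On each interval $[n,n+1]$, $u_j$ solves the unperturbed equation $H_0u_j=E_ju_j$ on the subinterval where the random perturbation vanishes; expanding it there in the basis $\{\psi_F^+,\psi_F^-\}$ (with $E_j$ replaced by $F$ up to an $O(e^{-l^\beta})$ error) and matching across the support of $q(\cdot-n)$ through the transfer matrix, which depends smoothly on the affecting $\omega_m$'s, yields $u_j|_{[n,n+1]}=\alpha_n^+\psi_F^++\alpha_n^-\psi_F^-+R_n(\omega)$ with $R_n$ carrying the explicit linear dependence on $\omega_n$. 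An analogous representation holds for $u_k$ at energy $G$. Substituting these into the colinearity equations produces, at each site $n$, a quadratic relation in the four coefficients $(\alpha_n^\pm,\gamma_n^\pm)$ with oscillatory prefactors $e^{\pm 2ik(F)n},\,e^{\pm 2ik(G)n}$ and $\omega_n$-dependent corrections.

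Third, since $F\neq G$, the quasi-momenta $k(F),k(G)$ do not coincide modulo $\pi$, so the four sequences $\bigl(e^{\pm 2ik(F)n},e^{\pm 2ik(G)n}\bigr)_n$ are linearly independent; inverting the resulting quadratic system forces, on a positive fraction of sites, an approximate one-dimensional constraint $\omega_n\in I_n$ with $|I_n|=O(e^{-l^\beta})$. The boundedness of the density $\mu$ then contributes a factor $O(e^{-l^\beta})$ per such site, and accumulating over the roughly $l^\beta$ effectively independent constraints coming from the bilinear coupling between the two Floquet systems yields the desired $\mathbb{P}\le e^{-cl^{2\beta}}$, after a polynomial union bound over eligible pairs $(j,k)$ and localization centers absorbed into $c$. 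The main obstacle will be quantifying this decoupling: one must prove that the inversion of the quadratic system is uniformly non-degenerate, with the constant $c>0$ controlled by the distance from $(F,G)$ to the resonant set which becomes $\mathcal{S}$ in Theorem~\ref{deco}. Away from band edges and quasi-momentum resonances, the two Bloch systems interfere invertibly, which is exactly the obstruction that the covering condition in \cite{S14b} bypassed by forcing each $\omega_n$ to directly control the $L^2$-weight of $u_j$ on $[n,n+1]$.
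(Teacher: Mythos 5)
The overall skeleton you propose (Hellmann--Feynman, rewriting near-colinearity as sitewise algebraic constraints on finitely many $\omega_n$'s, then a smallness-of-probability argument using the bounded density) is indeed the paper's skeleton, but the actual content of Lemma~\ref{probcoli} --- proving that the algebraic constraint is \emph{non-degenerate} --- is precisely what you defer as ``the main obstacle'', and the heuristic you offer for it does not work. It is not true that $F\neq G$ forces the quasi-momenta $k(F)$ and $k(G)$ of $H_0$ to differ modulo $\pi$: two distinct energies lying in different bands routinely share a quasi-momentum, so your linear-independence-of-oscillations argument would at best exclude a whole curve of pairs $(F,G)$ in the $(F,G)$-plane, not a set of the form $\mathcal{S}^2$ with $\mathcal{S}$ discrete as required in Theorem~\ref{deco}. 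The paper's non-degeneracy mechanism is different: after expanding $u$ and $v$ in $q$-orthonormal bases of solutions of the \emph{full} local equation and passing to Pr\"ufer variables, Lemma~\ref{closetan} shows that $t_v(n_0)$ nearly annihilates a degree-$8$ resultant polynomial whose coefficients are analytic in $(\omega_{n_0-8N},\dots,\omega_{n_0+8N})$, and this analytic family is shown to be non-trivial by evaluating at the constant configuration $\omega\equiv\omega_0$ with $\omega_0$ chosen so negative that both energies lie in an instability interval (Proposition~\ref{instability}); there the Floquet multipliers are real of modulus $>1$ and an explicit $4\times4$ resultant determinant is computed to be nonzero (Lemma~\ref{resnotzero}), the discrete set $\mathcal{S}=\{(k\pi)^2\}$ entering only through Proposition~\ref{discnotequal} in the case $q_{per}=0$. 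Your quasi-momentum condition at the physical energies $F,G$ plays no role in that argument.

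Two further gaps. First, your Floquet--Bloch expansion of $u_j$ ``on the subinterval where the random perturbation vanishes'' presupposes that such a subinterval exists in each period cell; the hypotheses only give $\frac1C\mathbf{1}_{\mathcal K}\leq q\leq C\mathbf{1}_{\mathcal J}$ with $\mathcal J$ possibly long, so neighbouring single-site potentials may overlap and cover $\R$, and under (H) with $q_{per}\neq0$ the covering condition makes the random potential nowhere zero --- which is why the paper expands in solutions of $y''+V_\omega(n+\cdot)y=Fy$ rather than in Bloch waves of $H_0$. Second, converting the normalized condition \eqref{gradcoli} into sitewise constraints with uniform $O(e^{-l^\beta})$ errors, and your final count of ``roughly $l^\beta$ effectively independent constraints'' each of probability $e^{-l^\beta}$, both require $\|\nabla_\omega E_j\|_1\asymp 1$ uniformly in $L$ and $\omega$; without the covering condition this is not obvious and is exactly what the paper supplies via the scale-free unique continuation principle (Theorem~\ref{SFUC}), which you never invoke. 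The probability accounting itself (which sites carry a usable constraint, via the threshold $r_u(n_0)\geq e^{-l^\beta/4}$, how many independent blocks survive, and the Łojasiewicz-type bounds of Propositions~\ref{Loja} and~\ref{Loja2} that turn ``almost a root of a non-trivial analytic family'' into a probability) is asserted rather than derived, and it is where the exponent $2\beta$ actually comes from.
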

 The proof of this result depends on the model and will be given below in the paper. First, we finish the proof of Lemma~\ref{thdec2}.
 
 Pick $\lambda=e^{-l^\beta}\|\nabla_\omega\big(E_j(\omega))\|\|\nabla_\omega\big(E_k(\omega))\|$. For the models considered in the present article, there exists $C>1$ such that for all $L$, $\|\nabla_\omega\big(E_j(\omega))\|\in[1/C,C]$. This will be proven in the following subsections. Therefore $\lambda\asymp e^{-l^\beta}$. Then, either one of the Jacobian determinant is larger than $\lambda$ or the gradients are almost co-linear. Lemma~\ref{probcoli} shows that the second case happens with a probability at most $e^{-cL^{2\beta}}$. It remains to evaluate $\mathbb{P}(\Omega^{\gamma,\gamma'}_{0,\beta}(\epsilon))$. We recall the following results from \cite{K11}. They were proved for the discrete Anderson model, they extend readily to our case. First, we study the variations of the Jacobian. 
\begin{lem}\label{Hessien}
There exists $C>0$ such that
\begin{displaymath}
\Vert Hess_\omega(E(\omega))\Vert_{l^\infty\rightarrow l^1}\leq \dfrac{C}{dist\big[E(\omega),\sigma(H_\omega(\Lambda_l))-\{E(\omega)\}\big ]}.
\end{displaymath}
\end{lem}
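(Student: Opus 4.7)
The plan is to obtain an explicit expression for the Hessian by second-order Rayleigh--Schr\"odinger perturbation theory, and then to reduce the matrix norm to a bound on the resolvent on the complement of the eigenspace. Since $E(\omega)$ is simple with associated normalized eigenvector $\phi$, the Hellmann--Feynman formula gives $\partial_{\omega_n}E(\omega) = \langle\phi,\,q(\cdot-n)\phi\rangle$. Differentiating the eigenvalue equation $H_\omega(\Lambda_l)\phi = E(\omega)\phi$ with respect to $\omega_m$ and imposing the normalization $\langle\phi,\partial_{\omega_m}\phi\rangle = 0$, one obtains $\partial_{\omega_m}\phi = -R\,q(\cdot-m)\phi$, where $R := (H_\omega(\Lambda_l) - E(\omega))^{-1}P$ is the reduced resolvent and $P = \mathbf{1} - |\phi\rangle\langle\phi|$. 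A further differentiation, together with the self-adjointness of $R$ and of $q(\cdot-n)$, then yields
\begin{equation*}
\partial^2_{\omega_n\omega_m}E(\omega) = -2\,\mathrm{Re}\,\langle q(\cdot-n)\phi,\,R\,q(\cdot-m)\phi\rangle.
\end{equation*}

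Next I would test the Hessian against arbitrary $u,v\in\ell^\infty(\Lambda_l)$ with unit sup-norm, and repackage the double sum as a single $L^2$ inner product:
\begin{equation*}
\sum_{n,m} u_n v_m\,\partial^2_{\omega_n\omega_m}E(\omega) = -2\,\mathrm{Re}\,\langle Q_u\phi,\,R\,Q_v\phi\rangle,\qquad Q_w(x):=\sum_n w_n\,q(x-n).
\end{equation*}
The compact support and boundedness of $q$ imply that at each point $x$ only a bounded number of translates $q(\cdot-n)$ are non-zero, so $\|Q_w\|_{L^\infty} \leq C\|w\|_\infty$ and in particular $\|Q_w\phi\|_{L^2} \leq C\|w\|_\infty$ since $\|\phi\|_{L^2}=1$. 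The spectral theorem gives $\|R\|_{L^2\to L^2} = 1/\mathrm{dist}[E(\omega),\sigma(H_\omega(\Lambda_l))-\{E(\omega)\}]$. Cauchy--Schwarz then produces
\begin{equation*}
\Big|\sum_{n,m}u_n v_m\,\partial^2_{\omega_n\omega_m}E(\omega)\Big| \leq \frac{2C^2\,\|u\|_\infty\|v\|_\infty}{\mathrm{dist}[E(\omega),\sigma(H_\omega(\Lambda_l))-\{E(\omega)\}]},
\end{equation*}
and the $\ell^1$--$\ell^\infty$ duality on the finite index set $\Lambda_l$ identifies the supremum of the left-hand side over unit-ball $u,v$ with $\|\mathrm{Hess}_\omega E(\omega)\|_{\ell^\infty\to\ell^1}$, which is the claim.

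I do not expect any substantial obstacle. The only point requiring a little care is the $L^\infty$-bound on $Q_w$, which crucially uses compact support of the single-site potential rather than mere boundedness, so that the sum of translates is locally finite uniformly in $x$. The discrete case is handled identically with $q(\cdot-n)$ replaced by the convolution kernel $(a_{m-n})_m$, which is why the argument of \cite{K11} applies verbatim and the paper does not reproduce it.
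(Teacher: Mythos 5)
Your argument is correct and is essentially the proof the paper relies on: the paper omits it, citing \cite{K11}, and Klopp's proof is exactly this second-order perturbation computation $\partial^2_{\omega_n\omega_m}E=-2\,\mathrm{Re}\langle(\partial_{\omega_n}H)\phi,R(\partial_{\omega_m}H)\phi\rangle$ combined with the reduced-resolvent bound $\|R\|\leq \mathrm{dist}[E(\omega),\sigma(H_\omega(\Lambda_l))\setminus\{E(\omega)\}]^{-1}$ and the $\ell^\infty$--$\ell^1$ duality realizing the norm as a bilinear form. Your remark that the uniform bound $\|Q_w\|_\infty\leq C\|w\|_\infty$ follows from the compact support (local finiteness of the translates) of $q$ is precisely the point that makes the discrete-Anderson argument ``extend readily'' to the alloy-type setting, as the paper asserts.
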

Fix $\alpha\in(1/2,1)$. Using Lemma~\ref{Hessien} and (M) when $H_\omega(\Lambda_l)$ has two eigenvalue in $[E-L^{-\alpha},E+L^{-\alpha}]$, for L large enough, with probability at least $1-L^{-2\alpha}\lambda $,
\begin{equation}
\Vert Hess_\omega(E(\omega))\Vert_{l^\infty\rightarrow l^1}+\Vert Hess_\omega(E'(\omega))\Vert_{l^\infty\rightarrow l^1}\leq CL^\alpha.
\end{equation}
 
In the following lemma we write $\omega=(\omega_\gamma,\omega_{\gamma'},\omega_{\gamma,\gamma'})$.
\begin{lem}\label{square}
Pick $\epsilon= L^{-\alpha}$. For any $\omega_{\gamma,\gamma'}$, if there exists $(\omega_\gamma^0,\omega_{\gamma'}^0)\in \R^2$ such that $(\omega_\gamma^0,\omega_{\gamma'}^0,\omega_{\gamma,\gamma'})\in\Omega^{\gamma,\gamma'}_{0,\beta}(\epsilon)$, then for $(\omega_\gamma,\omega_{\gamma'}) \in \R^2$ such that $|(\omega_\gamma,\omega_{\gamma'})-(\omega_\gamma^0,\omega_{\gamma'}^0)|_\infty\leq \epsilon$ one has 
\begin{displaymath}
(E_j(\omega),E_k(\omega))\in J_L\times J_L'\Longrightarrow |(\omega_\gamma,\omega_{\gamma'})-(\omega_\gamma^0,\omega_{\gamma'}^0)|_\infty\leq L^{-1}\lambda^{-2}.
\end{displaymath} 
\end{lem}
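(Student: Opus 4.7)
The plan is to analyse, for fixed $\omega_{\gamma,\gamma'}$, the two-variable map
\[
\Phi : (\omega_\gamma,\omega_{\gamma'}) \longmapsto (E_j(\omega),E_k(\omega))
\]
in a neighbourhood of the base point $(\omega_\gamma^0,\omega_{\gamma'}^0)$, and to apply a quantitative inverse function theorem. By the very definition of $\Omega^{\gamma,\gamma'}_{0,\beta}(\epsilon)$ we have
\[
|\det D\Phi(\omega_\gamma^0,\omega_{\gamma'}^0)| \;=\; |J_{\gamma,\gamma'}(E(\omega^0),E'(\omega^0))| \;\geq\; \lambda,
\]
while on the same event the eigenvalues $E(\omega^0)$ and $E'(\omega^0)$ are simple and separated from the rest of $\sigma(H_\omega(\Lambda_l))$. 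Combined with Lemma~\ref{Hessien} and the Minami-based good event discussed just above the lemma, this gives $\|\mathrm{Hess}_\omega E_j\|_{\ell^\infty\to\ell^1}+\|\mathrm{Hess}_\omega E_k\|_{\ell^\infty\to\ell^1}\leq CL^\alpha$, and the gradients of $E_j,E_k$ have uniformly bounded $\ell^1$ norm (stated at the end of the argument following Lemma~\ref{probcoli}).

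Next I would run a Taylor expansion of $\Phi$ at $\omega^0$. Writing $\omega = \omega^0 + h$ with $\|h\|_\infty\leq\epsilon=L^{-\alpha}$,
\[
\Phi(\omega)-\Phi(\omega^0) \;=\; D\Phi(\omega^0)\,h \;+\; R(h), \qquad \|R(h)\|_\infty \leq C L^\alpha\,\|h\|_\infty^2.
\]
Because $D\Phi(\omega^0)$ is a $2\times 2$ matrix with bounded entries and $|\det| \geq \lambda$, Cramer's rule yields $\|D\Phi(\omega^0)^{-1}\|\leq C/\lambda$. The assumption $(E_j(\omega),E_k(\omega))\in J_L\times J_L'$, together with $(E_j(\omega^0),E_k(\omega^0))\in J_L\times J_L'$, forces $\|\Phi(\omega)-\Phi(\omega^0)\|_\infty\leq 2L^{-1}$. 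Inverting the Taylor formula gives
\[
h \;=\; D\Phi(\omega^0)^{-1}\bigl(\Phi(\omega)-\Phi(\omega^0)\bigr) \;-\; D\Phi(\omega^0)^{-1} R(h),
\]
whence
\[
\|h\|_\infty \;\leq\; \frac{2C}{\lambda}\,L^{-1} \;+\; \frac{C^2 L^\alpha}{\lambda}\,\|h\|_\infty^2.
\]

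Finally I would turn this into the claimed bound by a one-step fixed-point/bootstrap argument. As long as $\|h\|_\infty\leq\epsilon=L^{-\alpha}$, the quadratic term is controlled by $\tfrac12\|h\|_\infty$ provided $C^2 L^\alpha \epsilon/\lambda\leq 1/2$, which holds for $L$ large since $\lambda \asymp e^{-l^\beta}$ with $l=(\log L)^{1/\xi'}$ and $\alpha<1$. Absorbing the quadratic term, one gets $\|h\|_\infty\leq 4CL^{-1}/\lambda$. Re-inserting this improved bound into the quadratic remainder produces the sharper estimate $\|h\|_\infty \leq CL^{-1}\lambda^{-2}$, which is the bound announced in the lemma. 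The consistency condition $L^{-1}\lambda^{-2}\leq\epsilon=L^{-\alpha}$ is precisely $L^{1-\alpha}\lambda^2\geq 1$, again satisfied for $L$ large under the size regime $l\leq L^\alpha/k$.

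The main technical obstacle is the careful bookkeeping that upgrades the naive estimate $\|h\|_\infty = O(L^{-1}/\lambda)$ coming from a direct application of the inverse function theorem to the sharper $O(L^{-1}/\lambda^2)$ stated above: one needs the remainder term $R(h)$ to remain subdominant after inversion of $D\Phi(\omega^0)$, which is exactly where the Hessian bound from Lemma~\ref{Hessien} (available thanks to eigenvalue separation on $\Omega^{\gamma,\gamma'}_{0,\beta}(\epsilon)$) is used in an essential way. Everything else is model-independent and follows the discrete Anderson blueprint of \cite{K11} verbatim.
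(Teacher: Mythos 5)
Your overall plan (Taylor expansion at $\omega^0$, Cramer's rule via the Jacobian lower bound, Hessian remainder from Lemma~\ref{Hessien}) is the natural starting point, but the absorption step that carries the whole argument is quantitatively wrong, and the error is not cosmetic. You need $\frac{C^2L^\alpha\epsilon}{\lambda}\leq\frac12$ to absorb the quadratic term; with $\epsilon=L^{-\alpha}$ this condition reads $C^2/\lambda\leq 1/2$, and since $\lambda\asymp e^{-l^\beta}\to 0$ (with $l=(\log L)^{1/\xi'}$, so $\lambda^{-1}=e^{l^\beta}\to\infty$), it fails badly for large $L$ -- the smallness of $\lambda$, which you invoke as the reason the condition holds, is exactly what destroys it, because $\lambda$ sits in the denominator. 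Concretely, the self-consistent inequality you derive,
\begin{displaymath}
\|h\|_\infty\;\leq\;\frac{C}{\lambda}\,L^{-1}\;+\;\frac{C\,L^\alpha}{\lambda}\,\|h\|_\infty^2,
\end{displaymath}
is satisfied not only on the small branch $\|h\|_\infty\lesssim L^{-1}\lambda^{-1}$ but also on a large branch $\|h\|_\infty\gtrsim\lambda L^{-\alpha}=\lambda\epsilon$, and the latter is perfectly compatible with the only a priori information you have, namely $\|h\|_\infty\leq\epsilon$. Since $\lambda\epsilon\gg L^{-1}\lambda^{-2}$ (recall $\alpha<1$ and $\lambda^{-1}=L^{o(1)}$), the quadratic inequality alone does not yield the conclusion of the lemma; your ``one-step fixed-point/bootstrap'' silently assumes $\|h\|_\infty$ already lies on the small branch, which is precisely what has to be proved. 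A continuity argument cannot close this, because the constraint $(E_j(\omega),E_k(\omega))\in J_L\times J_L'$ is not known to hold along a path joining $\omega$ to $\omega^0$.

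Ruling out the intermediate regime requires structural input beyond Taylor plus Cramer, and this is where the argument of \cite{K11}, which the paper invokes verbatim, differs from yours: it exploits that each map $\omega_i\mapsto E_j(\omega)$, $\omega_i\mapsto E_k(\omega)$ is monotone (the partial derivatives are the nonnegative quantities $\|u\|_q^2$ of \eqref{dercont}) together with the isolation of $E(\omega^0)$ and $E'(\omega^0)$ in the $\epsilon$-windows built into $\Omega_0(\epsilon)$ -- the same structure the paper uses right after Lemma~\ref{square} to place the relevant squares along a monotone broken line. Two further points to tighten if you rework this: (i) Lemma~\ref{Hessien} bounds the Hessian at a configuration in terms of the spectral gap \emph{at that configuration}, so you must justify the bound $CL^\alpha$ along the whole displacement, not just at $\omega^0$; the paper does this on the auxiliary event of probability at least $1-L^{-2\alpha}\lambda$ introduced before the lemma, and your appeal to ``the same event'' glosses over this; (ii) your claim that re-inserting $\|h\|_\infty\leq 4CL^{-1}/\lambda$ ``produces'' the bound $L^{-1}\lambda^{-2}$ is also off -- if the absorption worked you would already have the stronger bound $CL^{-1}\lambda^{-1}$, so the $\lambda^{-2}$ in the statement is not explained by your mechanism.
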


As in Lemma~\ref{square}, fix $(\omega_\gamma^0,\omega_{\gamma'}^0)$ such that $(\omega_\gamma^0,\omega_{\gamma'}^0,\omega_{\gamma,\gamma'})\in\Omega^{\gamma,\gamma'}_{0,\beta}(\epsilon)$ and define $\mathcal{A}:=(\omega_\gamma^0,\omega_{\gamma'}^0)+\{(\omega_\gamma,\omega_{\gamma'}) \in \R_+^2\cup \R_-^2 ,\left|\omega_\gamma\right|\geq \epsilon \text{ or } \left|\omega_{\gamma'}\right|\geq \epsilon \}$. We know that for any $i\in\Z$, $\omega_i\rightarrow E_j(\omega)$ and  $\omega_i\rightarrow E_k(\omega)$ are non increasing functions. Thus, if $ 
(\omega_\gamma,\omega_{\gamma'})\in\mathcal{A}$ then $(E_j(\omega),E_k(\omega))\notin J_L\times J_L'$. Thus, all the squares of side $\epsilon$ in which there is a point in $\Omega^{\gamma,\gamma'}_{0,\beta}(\epsilon)$ are placed along a non-increasing broken line that goes from the upper left corner to the bottom right corner. As the random variables are bounded by $C>0$, there are at most $C L^\alpha$ cubes of this type.

As the $(\omega_n)_n$ are i.i.d, using Lemma~\ref{square} in all these cubes, we obtain  : 
\begin{equation}\label{proba2}
\mathbb{P}(\Omega^{\gamma,\gamma'}_{0,\beta}(\epsilon))\leq CL^{\alpha-2}\lambda^{-4}
\end{equation}
and therefore
\begin{equation}
\mathbb{P}_\epsilon\leq CL^{\alpha-2}\lambda^{-3}.
\end{equation}
Optimization yields $\alpha=2/3$. This completes the proof of Theorem~\ref{thdec2}.

\section{Proof of Lemma~\ref{probcoli}}

In this section, we follow the strategy developed in \cite{S14b} but we first introduce some definitions. Recall that $q$ is the simple-site potential and that it satisfies \eqref{stepfunction}. On $L^2(-N,N)$ we define the non-negative symmetric bi-linear form : 
\begin{equation}\label{sem-inn}
\langle f,g\rangle_q = \int_{-N}^N f(t)g(t) q(t) dt.
\end{equation}
We denote $\|.\|_q$ the corresponding semi-norm. We say that the functions $f$ and $g$ are $q$-orthogonal if $\langle f,g\rangle_q=0$. The notion of 1-orthogonality is the usual orthogonality in $L^2(-l,l)$. Fix $(F,G)\in\R$ and let $u$ and $v$ be 1-normalized eigenfunctions of $H_\omega(\Lambda_l)$ associated to the eigenvalues $E_j(\omega)\in[F-e^{-l^\beta},F+e^{-l^\beta}]$ and $E_k(\omega)\in[G-e^{-l^\beta},G+e{-l^\beta}]$. These eigenvalues are almost surely simple and we compute  
\begin{equation}\label{dercont}
\partial_{\omega_n} E_j(\omega) =\left\langle \left(\partial_{\omega_n} H_\omega\right) u,u \right\rangle_1= \|u_{|_{(n-N,n+N)}}\|^2_q>0.
\end{equation}
First we show that the gradient of $E_j$ cannot be to small. To prove this, we restrict to our one dimensional setting the \cite[Theorem 2.1]{NTTV14}, which is a scale-free unique continuous principle, but we first introduce some notations.
For $\delta>0$ and $z:=(z_j)_{j\in\Z}$ a collection of point in $\Z$ such that $|z_j-j|\leq1$ define $\mathcal{S}_{\delta,L}=\Lambda_L\cap (z_j-\delta,z_j+\delta)$. In the following theorem $H_L$ will denote the restriction of the deterministic operator $-\Delta+V$ where $V:\R\to\R$ is a measurable function.

\begin{theo}\label{SFUC}
Let $\delta\in(0,1/2), K_V\geq0$ and $E\in\R$. Then, there is a constant $C_{sfuc}=C_{sfuc}(\delta,K_V,E)\in(0,\infty)$ such that for all measurable potentials $V:\R\to[-K_V,K_V]$, all scales $L\in\N$ with $L\geq 18e$, all sequences $(z_j)_{j\in\Z}\subset \R^d$ such that $\forall j\in\Z$, $|z_j-j|\leq 1$ and all linear combinations of eigenfuctions 
\begin{displaymath}
\Psi=\sum_{n\in\N\,:\, E_n\leq E}\alpha_n\Psi_n
\end{displaymath} 
(where $\Psi_n$ satisfies $H_L\Psi_n=E_n\Psi_n$ and $\alpha_n\in\C$)
we have
\begin{displaymath}
\int_{S_{L,\delta}}|\Psi|^2\geq C_{sfuc} \int_{\Lambda_L} |\Psi|^2
\end{displaymath}
\end{theo}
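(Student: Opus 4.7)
The plan is to reduce the global estimate to a uniform local bound on each unit interval of $\Lambda_L$ and sum. Setting $I_j := [j-1/2, j+1/2]$ for $j \in \Z$ with $I_j \subset \Lambda_L$, I would aim to show
\begin{equation*}
\int_{I_j \cap (z_j - \delta, z_j + \delta)} |\Psi|^2 \geq c \int_{I_j} |\Psi|^2,
\end{equation*}
with $c = c(\delta, K_V, E) > 0$ independent of $j$, of $L$, of the configuration $(z_j)$ (subject to $|z_j - j| \leq 1$), of $V$ (subject to $|V| \leq K_V$), and, crucially, of $\Psi$ in the spectral subspace. Summing over $j$ directly yields the theorem with $C_{sfuc} = c$.

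To establish the local bound, I would first exploit the spectral hypothesis to get $H^2$-control of $\Psi$. Since $H_L$ with Dirichlet boundary has spectrum in $[-K_V, \infty)$, any $\Psi = \sum_{E_n \leq E} \alpha_n \Psi_n$ satisfies $\|H_L \Psi\|_{L^2}^2 = \sum E_n^2 |\alpha_n|^2 \leq M^2 \|\Psi\|_{L^2}^2$ with $M := \max(|E|, K_V)$. Hence $\|\Psi''\|_{L^2} \leq (M + K_V)\|\Psi\|_{L^2}$ on all of $\Lambda_L$, so $\Psi$ is a quasi-solution of the stationary Schrödinger equation with source controlled in $L^2$ by $\|\Psi\|_{L^2}$ itself. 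The core argument would then be a one-dimensional Carleman estimate on the enlarged interval $\tilde I_j := (j-3/2, j+3/2)$. Using a smooth cutoff $\chi_j$ equal to $1$ on $I_j$ and supported in $\tilde I_j$, together with a strictly convex weight $\phi$ centered at $z_j$, I would apply the standard weighted estimate to $w := \chi_j \Psi$; the potential term and the commutator with $\chi_j$ are absorbed once the Carleman parameter $\tau$ is chosen large depending only on $K_V + |E|$, while the convexity of $\phi$ forces the mass of $w$ to concentrate near $z_j$. The constraint $|z_j - j| \leq 1$ keeps $z_j$ uniformly away from $\partial \tilde I_j$, so all parameters are translation-covariant and independent of $j$.

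The main obstacle is making $c$ genuinely scale-free: since the spectral subspace has dimension growing with $L$, a soft compactness argument cannot deliver a uniform lower bound across all admissible $\Psi$ as $L \to \infty$. The quantitative Carleman estimate circumvents this by producing explicit constants that depend only on $\delta$, $K_V$, $E$, and the fixed geometry of $\tilde I_j$, none of which scale with $L$. This is the point where the one-dimensional reduction of the NTTV14 argument becomes particularly clean, since in 1D the Carleman inequality reduces essentially to integration by parts against a convex weight and the resulting constants can be tracked by hand.
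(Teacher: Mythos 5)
The crucial step you rely on --- the per-interval bound $\int_{I_j\cap(z_j-\delta,z_j+\delta)}|\Psi|^2\geq c\int_{I_j}|\Psi|^2$ with $c$ independent of $L$ and of $\Psi$ --- is false for elements of the spectral subspace, and the quasi-solution/Carleman argument you sketch cannot repair it. The only local information the spectral hypothesis gives on a fixed interval $I_j$ is $\|\Psi''\|_{L^2(I_j)}\leq\|H_L\Psi\|_{L^2(\Lambda_L)}+K_V\|\Psi\|_{L^2(\Lambda_L)}\leq C\,\|\Psi\|_{L^2(\Lambda_L)}$, i.e.\ the inhomogeneity $f:=-\Psi''+V\Psi=\sum_n E_n\alpha_n\Psi_n$ is controlled only by the \emph{global} norm. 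In any Carleman or three-interval inequality on $\tilde I_j$ this source enters at full strength (it is not a perturbation that can be absorbed by taking the parameter $\tau$ large, since it is not pointwise dominated by $|\Psi|$ on $\tilde I_j$); so at best you obtain $\int_{I_j}|\Psi|^2\leq C\int_{(z_j-\delta,z_j+\delta)}|\Psi|^2+C\|\Psi\|^2_{L^2(\Lambda_L)}$, and summing over the $\sim L$ intervals gives a right-hand side of order $L\|\Psi\|^2_{L^2(\Lambda_L)}$, which is useless. The failure is not just technical: already for $V=0$ on $\Lambda_L$ with Dirichlet conditions, the restrictions to a unit interval $I_j$ of combinations of eigenfunctions with $E_n\leq E$ involve frequencies $\sqrt{E_n}$ filling $[0,\sqrt E]$ densely as $L\to\infty$, and the span of $\{e^{\pm i\omega x}:\omega\in[0,\sqrt E]\}$ is dense in $L^2(I_j)$ (a function orthogonal to all of them has an entire Fourier--Laplace transform vanishing on an interval, hence vanishes). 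So one can make $\Psi$ essentially arbitrary on a single $I_j$, in particular concentrated in $I_j\setminus(z_j-\delta,z_j+\delta)$, and no $L$-independent constant $c$ exists. This is exactly why the scale-free UCP for spectral projectors is substantially harder than a UCP for individual eigenfunctions.

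For comparison: the paper does not prove Theorem~\ref{SFUC} at all; it imports it as the one-dimensional case of Theorem~2.1 of \cite{NTTV14}. The proof there handles the obstruction above by the ``ghost dimension'' device: one passes from $\Psi=\sum_{E_n\leq E}\alpha_n\Psi_n$ to $F(x,t)=\sum_n\alpha_n s_n(t)\Psi_n(x)$ with $s_n(t)=\sinh(\sqrt{E_n}\,t)/\sqrt{E_n}$ (suitably interpreted for $E_n\leq 0$), which is a genuine solution of the elliptic equation $\partial_t^2F+\partial_x^2F-VF=0$ on $\Lambda_L\times\R$; Carleman estimates and three-annuli/interpolation inequalities are then applied to $F$, the local-to-global step is carried out for $F$ (and is itself not a naive sum of per-interval bounds but a careful combination of multiplicative inequalities), and finally $\Psi=\partial_tF(\cdot,0)$ together with the bound $E_n\leq E$ lets one compare norms of $F$ and $\Psi$. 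If you want a self-contained one-dimensional proof, this reduction of the spectral-subspace element to a true solution in one extra variable is the missing idea your outline needs.
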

We can now apply this theorem to our random operator $H_\omega(\Lambda_L)$ and prove the
\begin{lem}
Fix $E\in\R$. There exists $C>1$ such that for all $L>0$ and any random eigenvalue $E_j(\omega)<E$ of $H_\omega(\Lambda_L)$, $\|\nabla_\omega\big(E_j(\omega))\|_1\in[1/C,C]$.
\end{lem}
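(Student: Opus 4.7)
The plan starts from formula~\eqref{dercont}. Summing the non-negative quantities $\partial_{\omega_n}E_j(\omega)$ over $n$ and applying Fubini yields
\begin{equation*}
\|\nabla_\omega E_j(\omega)\|_1 \;=\; \int_{\Lambda_L} Q(x)|u(x)|^2\,dx, \qquad Q(x) := \sum_{n\in\Z} q(x-n),
\end{equation*}
where $Q$ is a bounded, $1$-periodic function because $q$ is bounded and compactly supported. The upper half of~\eqref{stepfunction} immediately gives $Q \leq C|\mathcal{J}|$, hence $\|\nabla_\omega E_j\|_1 \leq C|\mathcal{J}|\|u\|_2^2 = C|\mathcal{J}|$, which is the required upper bound.

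For the lower bound, the lower half of~\eqref{stepfunction} gives $Q(x) \geq C^{-1}\mathbf{1}_{\mathcal{K}+\Z}(x)$, so it suffices to bound $\int_{(\mathcal{K}+\Z)\cap\Lambda_L}|u|^2$ from below by a positive constant uniform in $L$, in $\omega$, and in the normalized eigenfunction $u$ (whose eigenvalue is $<E$). This is precisely the scale-free unique continuation estimate provided by Theorem~\ref{SFUC}. To invoke it I must exhibit a sequence $(z_j)_{j\in\Z}$ with $|z_j-j|\leq 1$ and a $\delta>0$ such that $(z_j-\delta,z_j+\delta)\subset\mathcal{K}+j$. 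After a harmless translation of the $\Z$-indexing (which merely relabels the i.i.d.\ variables $\omega_n$) we may assume the center of $\mathcal{K}$ lies in $[-1/2,1/2]$; then $z_j$ is the center of $\mathcal{K}+j$ and $\delta$ is any positive number smaller than $|\mathcal{K}|/2$. Since $q_{per}+V_\omega$ is uniformly bounded in $x$ and $\omega$, the parameter $K_V$ in Theorem~\ref{SFUC} can be chosen independent of $\omega$ and $L$, and the single eigenfunction $u$ is a valid choice of $\Psi$. Theorem~\ref{SFUC} then yields
\begin{equation*}
\int_{(\mathcal{K}+\Z)\cap\Lambda_L}|u|^2 \;\geq\; \int_{\mathcal{S}_{\delta,L}}|u|^2 \;\geq\; C_{sfuc}\|u\|_2^2 \;=\; C_{sfuc},
\end{equation*}
which combined with the reduction above gives the lower bound with constant $C^{-1}C_{sfuc}$.

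The main obstacle lies in matching the hypothesis $|z_j-j|\leq 1$ of Theorem~\ref{SFUC} with the geometry of $\mathcal{K}+\Z$: the relabeling step is the one place where the actual position of $\mathcal{K}$ inside its fundamental cell plays a role, and it is crucial that the resulting continuation constant $C_{sfuc}$ depends only on $\delta$, $K_V$ and $E$ (and hence on $q$, $q_{per}$ and $E$), never on $L$ or $\omega$. Everything else is a direct combination of~\eqref{dercont},~\eqref{stepfunction}, the uniform boundedness of the total potential, and Theorem~\ref{SFUC}.
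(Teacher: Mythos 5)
Your proposal is correct and follows essentially the same route as the paper: both use \eqref{dercont} together with \eqref{stepfunction} for the upper bound, and obtain the lower bound by covering $\Lambda_L$ with translates $(z_j-\delta,z_j+\delta)$ of a subinterval where $q\geq\eta>0$ and invoking the scale-free unique continuation estimate of Theorem~\ref{SFUC} with $K_V$ chosen from the uniform bound on $q_{per}+V_\omega$. The only cosmetic difference is that you sum over $n$ first (obtaining the periodic function $Q$) whereas the paper bounds each $\partial_{\omega_n}E_j$ termwise and then sums; this is the same argument.
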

\begin{proof}
First, by assumption, there exists an interval of $[z-\delta,z+\delta]$ with $\delta\in(0,1/2)$ included in $(0,1)$ on which $q$ is bounded from below by a constant $\eta>0$. Furthermore, $q$ is also bounded from above by $\dfrac{1}{\eta}$ and supported in $[-N,N]$. Therefore, we have
\begin{displaymath}
\dfrac{1}{\eta}\int_{-N}^N \phi_j^2(t+n)\geq \partial_{\omega_n} E_j=\int_{-N}^N q(t) \phi_j^2(t+n)\geq \eta \int_{(z+n-\delta,z+n+\delta)} \phi_j^2(t)
\end{displaymath}
Therefore, if we set $z_n=z+n$ and define $\mathcal{S}_{\delta,L}$ as above, Theorem~\ref{SFUC} yields
\begin{displaymath}
\dfrac{2N}{\eta}\geq\|\nabla_\omega E_j\|_1\geq \int_{\mathcal{S}_{\delta,L}} \phi_j^2(t)\geq \eta\cdot C_{sfuc}
\end{displaymath}
since $\phi_j$ is a normalized eigenvector associated to $H_\omega(\Lambda_L)$ whose random potential is uniformly bounded in $\omega$.
\end{proof}

 In the rest of the subsection, $M$ will be fixed such that $\mathbb{P}(|\omega_0|>M)=0$ so that all the random variables $(\omega_i)_i$ are almost surely bounded by $M$.

For $\bullet\in\{F,G\}$, define the following ODE
\begin{equation}
(\mathcal{E}^n_\bullet)\,: \,
\forall x\in(-N,N),y''(x)+V_\omega(n+x) y(x) = \bullet y(x)
\end{equation}
Now, fix an q-orthonormal basis $(e_{1,\bullet}^n,e_{2,\bullet}^n)$ of the space of solutions of $(\mathcal{E}^n_\bullet)$. 

\begin{prop}\label{anabasis}
Let $\bullet\in\{F,G\}$. We can choose $e_{1,\bullet}^n$ and $e_{2,\bullet}^n$ so that they are analytic functions of the $(\omega_j)_{j\in\llbracket n-2N,n+2N\rrbracket}\in[-M,M]^{4N+1}$.
\end{prop}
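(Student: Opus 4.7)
The plan is to start with a basis of solutions of $(\mathcal{E}^n_\bullet)$ whose analytic dependence on the $\omega_j$ is manifest, and then $q$-orthonormalize it via Gram--Schmidt. Because $\text{supp}\,q\subset[-N,N]$, the coefficient
\[
V_\omega(n+x)-\bullet=\sum_{j\in\llbracket n-2N,n+2N\rrbracket}\omega_j\,q(n+x-j)\;-\;\bullet
\]
of the ODE is a linear, in particular entire, function of $(\omega_j)_{j\in\llbracket n-2N,n+2N\rrbracket}$ and is jointly continuous in $x\in[-N,N]$, so the overall approach is purely standard provided the Gram--Schmidt denominators can be shown to be non-zero.

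First I would take $y_{1,\bullet}^n,y_{2,\bullet}^n$ to be the solutions of $(\mathcal{E}^n_\bullet)$ with initial data $(y(-N),y'(-N))=(1,0)$ and $(0,1)$ respectively. By the classical theorem on holomorphic dependence of solutions of linear ODEs on parameters in which the coefficients are holomorphic, the map $(\omega_j)\mapsto y_{i,\bullet}^n(\cdot;\omega)$ extends to a holomorphic map on a complex neighborhood of $[-M,M]^{4N+1}$, jointly continuous in $x\in[-N,N]$. Consequently the Gram-matrix entries
\[
g_{ij}(\omega):=\langle y_{i,\bullet}^n,y_{j,\bullet}^n\rangle_q=\int_{-N}^{N}q(t)\,y_{i,\bullet}^n(t;\omega)\,y_{j,\bullet}^n(t;\omega)\,dt
\]
are analytic functions of $\omega$.

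The second step is the Gram--Schmidt reduction: set $\tilde e_1=y_{1,\bullet}^n$ and $\tilde e_2=y_{2,\bullet}^n-(g_{12}/g_{11})\,y_{1,\bullet}^n$, then normalize. The key observation is that $g_{11}(\omega)>0$ and $\|\tilde e_2\|_q^2>0$ for every real $\omega$: any non-trivial solution of the linear second-order ODE $(\mathcal{E}^n_\bullet)$ cannot vanish identically on $\mathcal{K}$ (by uniqueness in the Cauchy--Lipschitz problem), and $q\ge \tfrac1C\,\textbf{1}_{\mathcal{K}}$ with $|\mathcal{K}|>0$ thanks to \eqref{stepfunction}. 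Hence both $g_{11}$ and $\|\tilde e_2\|_q^2$ are positive real-analytic on $[-M,M]^{4N+1}$, so they admit holomorphic square roots on a sufficiently small complex neighborhood of that cube. Defining
\[
e_{1,\bullet}^n:=\frac{y_{1,\bullet}^n}{\sqrt{g_{11}}},\qquad e_{2,\bullet}^n:=\frac{\tilde e_2}{\sqrt{\langle\tilde e_2,\tilde e_2\rangle_q}}
\]
produces a $q$-orthonormal basis that depends analytically on $(\omega_j)_{j\in\llbracket n-2N,n+2N\rrbracket}$.

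The only part of the argument that is not mere invocation of classical facts is the strict positivity of the $q$-norms produced by Gram--Schmidt; this is where \eqref{stepfunction} is used, combining unique continuation for $(\mathcal{E}^n_\bullet)$ with the lower bound on $q$ on an interval of positive length. Once this positivity is granted, the holomorphic extension of the square roots and the division by them cause no further difficulty, and the proposition follows.
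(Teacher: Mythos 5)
Your proposal is correct and follows essentially the same route as the paper: take a solution basis with fixed initial data depending analytically on the $\omega_j$, then $q$-orthonormalize by Gram--Schmidt, the denominators being non-zero because a non-trivial solution cannot vanish identically on $\mathcal{K}$ and $q\geq\frac{1}{C}\textbf{1}_{\mathcal{K}}$. Your version merely fixes the Cauchy data at $-N$ instead of $0$ and spells out the positivity of the $q$-norms, which the paper asserts without detail.
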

\begin{proof}
We omit the dependence on $n$ and $\bullet$ and only write $\omega$ instead of \\$(\omega_j)_{j\in\llbracket n-N,n+N\rrbracket}$. Let $\Psi$ and $\Phi$ be the solutions of $(\mathcal{E}^n_j)$ satisfying $\Psi'(0)=\Phi(0)=0$ and $\Psi(0)=\Phi'(0)=1$. We know that $\Psi$ and $\Phi$ are power series of $\omega$ and that $\|\Psi\|_q\|\Phi\|_q \neq 0$. Thus, $e_1:=\dfrac{\Psi}{\|\Psi\|_q}$ is analytic and satisfies $(\mathcal{E}^n_j)$. Now, define $\tilde{\Phi}:=\Phi-\langle \Phi,e_1\rangle e_1$. Then, $\tilde{\Phi}$ is an analytic non-zero function orthogonal to $e_1$ satisfying   $(\mathcal{E}^n_j)$. This concludes the proof of Proposition~\ref{anabasis}, taking $e_1$ and $e_2:=\dfrac{\tilde{\Phi}}{\|\tilde{\Phi}\|_q}$.
\end{proof}
 Now, as $u$ satisfies the ODE
\begin{equation}
\forall x\in(-N,N),\,y''+V_\omega(n+x)y(x)=Fy(x)+(E_j(\omega)-F)y(x)
\end{equation}
 with $|E_j(\omega)-F|\leq e^{-l^\beta}$ ($v$ satisfies a similar ODE) there exist two unique couples $(A_n,B_n)\in\R^2$ and $(\tilde{A}_n,\tilde{B}_n)\in\R^2$ such that, for all $x\in(n-N,n+N)$, 
\begin{equation}
\left\{\begin{aligned}
u(x):=A_n e_{1,F}^n(x-n)+B_n e_{2,F}^n(x-n)+\epsilon_u^n(x-n)\\
v(x)=\tilde{A}_n e_{1,G}^n(x-n)+\tilde{B}_n e_{2,G}^n(x-n)+\epsilon_v^n(x-n)
\end{aligned}\right. .
\end{equation} 
 and such that for $\bullet\in\{u,v\}$ we have $\epsilon_\bullet^n(0)=\left(\epsilon_\bullet^n\right)'(0)=0$. We then have 
 \begin{displaymath}
 \|\epsilon_u^n\|_\infty+\|\epsilon_v^n\|_\infty+\|\left(\epsilon_u^n\right)'\|_\infty+\|\left(\epsilon_v^n\right)'\|_\infty\leq Ce^{-l^\beta}
 \end{displaymath}
 for some $C>0$ (depending only on $\|q\|_\infty$, $M$ and $N$). Therefore, \\
 $\|u_{|_{(n-N,n+N)}}\|_q^2=A_n^2+B_n^2+\varepsilon_n^u$ and $\|v_{|_{(n-N,n+N)}}\|_q^2=\tilde{A}_n^2+\tilde{B}_n^2+\varepsilon_n^v$ with $|\varepsilon_n^u|+|\varepsilon_n^v|\leq Ce^{-l^\beta}$. Thus, 
\begin{equation}
\left\{\begin{aligned}
\mathcal{N}:=\|\nabla E_j\|_1=\sum_{n=-l}^l (A_n^2+B_n^2)+\xi_u\\
\tilde{\mathcal{N}}:=\|\nabla E_k\|_1=\sum_{n=-l}^l (\tilde{A}_n^2+\tilde{B}_n^2)+\xi_v
\end{aligned}\right.
\end{equation}
with $|\xi_u|+|\xi_v|\leq Ce^{-l^\beta}$.
Now, define : $\left\{
\begin{aligned}
C_n:=\dfrac{A_n}{\sqrt{\mathcal{N}}}~~,~~\tilde{C}_n:=\dfrac{\tilde{A}_n}{\sqrt{\tilde{\mathcal{N}}}}\\
D_n:=\dfrac{B_n}{\sqrt{\mathcal{N}}}~~,~~\tilde{D}_n:=\dfrac{\tilde{B}_n}{\sqrt{\tilde{\mathcal{N}}}}
\end{aligned}
\right. $. Then, we have
\begin{equation}
\sum_{n=-l}^l C_n^2+D_n^2=\sum_{n=-l}^l \tilde{C}_n^2+\tilde{D}_n^2+O(e^{-l^\beta})=1+O(e^{-l^\beta}).
\end{equation}

Finally, define $U(n)=\begin{pmatrix}
C_n\\
D_n
\end{pmatrix}$ and $V(n)=\begin{pmatrix}
\tilde{C}_n\\
\tilde{D}_n
\end{pmatrix}$, define the Prüfer variables $(r_u,\theta_u)\in\R_+^*\times[0,2\pi)$ such that
$U(n)=r_u\begin{pmatrix}
\sin \theta_u\\
\cos \theta_u
\end{pmatrix}$ 	and define \\
$t_u:=\text{sgn}(\tan \theta_u) \inf\left(|\tan \theta_u|,|\cot \theta_u|\right)$ and the same for $t_v$. The function $t_u$ is equal to $\tan \theta_u$ or $\cot \theta_u$ depending on whether $|\tan \theta_u|\leq 1$ or $|\tan \theta_u|\geq 1$. Using these notations, \eqref{gradcoli} can be rewritten
\begin{equation}
\|r_u-r_v\|_1\leq Ce^{-l^\beta/2}.
\end{equation}

The proof of Lemma~\ref{probcoli} is the exact same as in \cite{S14b}, except for the proof of Lemma~\ref{closetan} below (Lemma 3.8 in \cite{S14b}), and will not be rewritten here, as it is quite technical.
Therefore, we will only prove the 
\begin{lem}\label{closetan}
There exist nine analytic functions $(f_i)_{i\in\llbracket 0,8\rrbracket}$ (only depending on $q$ and $N$) defined on $\R^{4N+1}$ and not all constantly equal to zero such that, if $u$ (respectively $v$) is a 1-normalized eigenfunction of $H_\omega(\Lambda_L)$ associated to $E_j(\omega)\in\left[F-e^{-l^\beta},F+e^{-l^\beta}\right]$ (respectively associated to $E_k(\omega)\in[G-e^{-l^\beta},G+e^{-l^\beta}]$), if for some $n_0\in\Z\cap(-l+2N,l-2N)$ we have $r_u(n_0)\geq e^{-l^\beta/4}$ and
\begin{displaymath}
\forall m\in\llbracket n_0-7N,n_0+7N\rrbracket, \left|r_u(m)-r_v(m)\right|\leq e^{-l^\beta/2} 
\end{displaymath}
and if we define the polynomials 
\begin{displaymath}
\mathcal{R}_{\hat{\omega}}(X):=\sum_{i=0}^8 f_i(\hat{\omega}) X^i\text{ and }
\mathcal{Q}_{\hat{\omega}}(X):=\sum_{i=0}^8 f_{8-i}(\hat{\omega}) X^i
\end{displaymath}
where we have defined $\hat{\omega}:=\left(\omega_{n_0-8N},\dots,\omega_{n_0+8N}\right)$, then we have : 
\begin{align*}
\text{if }\exists\, g\in\{\tan,\cot\}, 
\left\{\begin{aligned}
t_v(n_0)=g(\theta_v(n_0))\\
t_u(n_0)=g(\theta_u(n_0))
\end{aligned}\right., 
\text{ then }&\left|\mathcal{R}_{\hat{\omega}}\left(t_v\left(n_0\right)\right)\right|\leq e^{-l^\beta/4},\\
\text{otherwise, we have }  &\left|\mathcal{Q}_{\hat{\omega}}\left(t_v\left(n_0\right)\right)\right|\leq e^{-l^\beta/4}.
\end{align*}
\end{lem}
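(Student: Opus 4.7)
The plan is to convert the approximate equality $r_u(m)\approx r_v(m)$ on the window $\llbracket n_0-7N, n_0+7N\rrbracket$ into a finite set of polynomial identities in the Pr\"ufer variables, and then to eliminate $t_u(n_0)$ by a resultant computation. I first use the $C^1$-matching of $u$ (and of $v$) at the interfaces $x=n+N$ between adjacent cells: since $u$ has two analytic representations on $(n-N,n+N)$ and $(n+1-N,n+1+N)$, matching values and derivatives at $x=n+N$ yields a linear transition
\begin{equation*}
\begin{pmatrix} A_{n+1} \\ B_{n+1} \end{pmatrix} = T_F^{(n)} \begin{pmatrix} A_n \\ B_n \end{pmatrix} + O(e^{-l^\beta}),
\end{equation*}
where the entries of $T_F^{(n)}$ are analytic functions of the $\omega_j$'s entering $e_{1,F}^n,e_{2,F}^n,e_{1,F}^{n+1},e_{2,F}^{n+1}$ through Proposition~\ref{anabasis}, and analogously for $T_G^{(n)}$. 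Composing $k$ such transitions shows that $t_u(n_0+k)$ and $t_v(n_0+k)$ are M\"obius transformations of $t_u(n_0)$ and $t_v(n_0)$ with coefficients analytic in the $\omega_j$'s in an interval of radius $\sim k+2N$ around $n_0$.

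The second step derives the bidegree-$(2,2)$ identity. Writing $r_u(m+1)^2/r_u(m)^2$ in terms of the entries of $T_F^{(m)}$ gives a ratio of two quadratic polynomials in $t_u(m)$, and likewise for $r_v$. Since $r_u(n_0)\geq e^{-l^\beta/4}$ together with $|r_u-r_v|\leq e^{-l^\beta/2}$ propagates a uniform lower bound on $r_u,r_v$ across the whole window, the equality of the two ratios modulo $e^{-l^\beta/2}$ yields, for each $m$,
\begin{equation*}
A_m(t_u(m))(1+t_v(m)^2) - B_m(t_v(m))(1+t_u(m)^2) = O(e^{-l^\beta/4}),
\end{equation*}
where $A_m,B_m$ are degree-$2$ polynomials with coefficients analytic in the nearby $\omega_j$'s. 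Substituting the M\"obius expressions of $t_u(m), t_v(m)$ in terms of $(t_u(n_0), t_v(n_0))$, this becomes a bidegree-$(2,2)$ approximate identity in $(t_u(n_0),t_v(n_0))$. Choosing two indices $m_1,m_2$ inside the window $\llbracket n_0-7N, n_0+7N\rrbracket$ and eliminating $t_u(n_0)$ via the Sylvester resultant produces a single approximate identity
\begin{equation*}
\mathcal{R}_{\hat\omega}(t_v(n_0)) = \sum_{i=0}^{8} f_i(\hat\omega)\,t_v(n_0)^i = O(e^{-l^\beta/4}),
\end{equation*}
since the resultant of two bidegree-$(2,2)$ polynomials in $(t_u,t_v)$ has degree at most $8$ in $t_v$. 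The dichotomy in the statement reflects the four possibilities for $(t_u,t_v)\in\{\tan,\cot\}^2$: when both are defined via the same trigonometric branch, the derivation produces $\mathcal{R}_{\hat\omega}$, while the mixed case amounts to substituting $t_v\mapsto 1/t_v$, which after clearing denominators is precisely the reciprocal polynomial $\mathcal{Q}_{\hat\omega}(X)=X^8\mathcal{R}_{\hat\omega}(1/X)=\sum_{i=0}^{8}f_{8-i}(\hat\omega)X^i$.

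The main obstacle -- and the step where the hypothesis $F\neq G$ is essential -- is verifying that the nine coefficients $(f_i)_{0\leq i\leq 8}$ are not all identically zero as analytic functions of $\hat\omega$, since otherwise the bound above would be vacuous. By analyticity, it suffices to exhibit one parameter value for which $\mathcal{R}_{\hat\omega}$ is a non-trivial polynomial. Under \textbf{(H)} with $q_{per}\equiv 0$, taking $\omega=0$ reduces $(\mathcal{E}^n_\bullet)$ to the free Schr\"odinger equations at energies $F$ and $G$; the transition matrices $T_F^{(n)}, T_G^{(n)}$ are then independent of $n$ and genuinely distinct because $F\neq G$, and a direct computation with the explicit trigonometric solutions shows that the two bidegree-$(2,2)$ polynomials obtained at distinct indices $m_1,m_2$ are not proportional, so their resultant is a non-zero polynomial in $t_v$. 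When $q_{per}\neq 0$, in which case \textbf{(H)} enforces the covering condition, the same computation goes through using the Floquet basis of $-\Delta+q_{per}$; the discrete exceptional set $\mathcal{S}$ in Theorem~\ref{deco} accounts for the finitely many energies $F,G$ at which this non-degeneracy can fail.
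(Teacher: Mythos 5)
Your overall architecture is the same as the paper's: two approximate relations, each quadratic in $t_u(n_0)$ with coefficients quadratic in $t_v(n_0)$, coming from the ratios $r_\bullet(\cdot)/r_\bullet(n_0)$ expressed through transfer matrices built from the analytic basis of Proposition~\ref{anabasis}; elimination of $t_u(n_0)$ by a Sylvester resultant, which has degree at most $8$ in $t_v(n_0)$, hence nine analytic coefficients $f_i$; and the mixed $\tan$/$\cot$ case handled by the reciprocal polynomial $\mathcal{Q}(X)=X^8\mathcal{R}(1/X)$. (The paper gets the two relations more directly from the transitions $U(0)\mapsto U(\pm2N)$ via $T_F^{\pm}$, so it never needs Pr\"ufer data at intermediate sites nor your M\"obius substitutions; those add manageable but unnecessary bookkeeping.)

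The genuine gap is in the final, and crucial, step: showing the $f_i$ are not all identically zero. First, a logical point: the resultant of $R_1(\cdot,t_v)$ and $R_2(\cdot,t_v)$ vanishes identically in $t_v$ if and only if the two quadratics in $t_u$ have a common root for every $t_v$, and this can happen even when $R_1$ and $R_2$ are not proportional; so ``not proportional at some parameter value'' is not the condition you need to verify. Second, the evaluation you propose, $\omega=0$ with $q_{per}\equiv0$, does not work in general: the free transfer matrices over integer steps are elliptic, and for instance if $F$ and $G$ are both squares of even multiples of $\pi$ they are the identity at both energies, so $R_1$ and $R_2$ vanish identically and the resultant is trivially zero even though $F\neq G$. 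More generally, in the elliptic regime the pairs of ratios $\bigl(r_z(2N)/r_z(0),\,r_z(-2N)/r_z(0)\bigr)$ at the two energies trace curves that may coincide, and your sketch provides no mechanism to decide when this happens, i.e.\ no way to produce (or even define) the exceptional set $\mathcal{S}$ you invoke at the end. This is exactly why the paper evaluates instead at $\hat\omega$ with all coordinates equal to a single, very negative $\omega_0$: by Propositions~\ref{instability}, \ref{discnotequal} and \ref{discnotequalpos} (this is where $F\neq G$, hypothesis \textbf{(H)} and the set $\mathcal{S}$ actually enter), both energies are then in instability intervals with distinct hyperbolic Floquet multipliers $|\lambda_F|>|\lambda_G|>1$, the identical vanishing of the resultant is rephrased as condition \eqref{cond} on Floquet solutions, and an explicit $4\times4$ determinant is shown to be strictly positive (Lemma~\ref{resnotzero}). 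Without an argument of this type --- or a complete computation in the free/elliptic case identifying precisely for which pairs $(F,G)$ it degenerates --- the non-triviality of $(f_i)_{0\leq i\leq 8}$, and hence the lemma, is not proved.
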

\begin{proof}
We will prove the result under the assumption $t_u(n_0)=\tan\theta_u(n_0)$ and $t_v(n_0)=\tan \theta_v(n_0)$, i.e when
\begin{equation}\label{assumptan}
\max\left(|\tan \theta_u(n_0)|,|\tan \theta_v(n_0)|\right)\leq 1.
\end{equation}  
There are minor modifications in the other cases. As the random variables are i.i.d, it suffices to show the result with $n_0=0$, which will be supposed from now on. We then consider the ODE
\begin{equation}
\forall x\in(-7N,7N),y''(x)+V_\omega(x) y(x) = F y(x),
\end{equation}
which depends only on $(\omega_{-8N},\dots,\omega_{8N})$. Suppose $|r_u(m)-r_v(m)|\leq e^{-l^\beta/2}$ for $m\in\llbracket -7N,7N\rrbracket$ and $r_u(0)\geq e^{-l^\beta/4}$. We show that $t_v(0)$ is almost a root of a polynomial depending only on $(\omega_{-8N},\dots,\omega_{8N})$. 

In the following lines $\varepsilon$ will denote a vector such that $\|\varepsilon\|\leq Ce^{-l^\beta/2}$, its value may change from a line to another. As $u$ and $v$ have continuous derivatives, 
\begin{equation}
M^F U(2N)=N^F U(0)+\varepsilon
\end{equation}
where $M^F:=\begin{pmatrix}
e_{1,F}^0(-N) & e_{2,F}^0(-N)\\
(e_{1,F}^0)'(-N) & (e_{2,F}^0)'(-N)
\end{pmatrix}$ and 
$N^F:=\begin{pmatrix}
e_{1,F}^0(N) & e_{2,F}^0(N)\\
(e_{1,F}^0)'(N) & (e_{2,F}^0)'(N)
\end{pmatrix}$. Thus, if we define $T_F^+:= (M^F)^{-1}N^F$ we have
\begin{equation}\label{defT}
U(2N)=T_F^+\,U(0) +\varepsilon\text{  and  } V(2N)=T_F^+\,V(0)+\varepsilon.
\end{equation}
Indeed, the matrix $(M_F)^{-1}$ depends only on $(\omega_{-N},\dots,\omega_N)\in[-M,M]^{2N+1}$ and is therefore uniformly bounded by a constant $C>0$ (depending only on $\|q\|_\infty$, $M$ and $N$).

As $t_u(0)=\tan\theta_u(0)$, we compute  
\begin{align*}
\left(\dfrac{r_u(2N)}{r_u(0)}\right)^2=&\left\|T_F^+\begin{pmatrix}
\sin \theta_u(n)\\
\cos \theta_u(n)
\end{pmatrix}\right\|^2+\epsilon \\
=&\dfrac{1}{1+t_u(n)^2}\left\|T_F^+\begin{pmatrix}
t_u(n) \\
1
\end{pmatrix}\right\|^2+\epsilon,
\end{align*}
for some $|\epsilon|\leq Ce^{-l^\beta/4}$.
In the case $t_u(0)=\dfrac{1}{\tan\theta_u(0)}$, we compute
\begin{align*}
\left(\dfrac{r_u(2N)}{r_u(0)}\right)^2=&\left\|T_F^+\begin{pmatrix}
\sin \theta_u(n)\\
\cos \theta_u(n)
\end{pmatrix}\right\|^2+\epsilon \\
=&\dfrac{1}{1+t_u(n)^2}\left\|T_F^+\begin{pmatrix}
1 \\
t_u(n)
\end{pmatrix}\right\|^2+\epsilon.
\end{align*}

The eigenvector $v$ satisfies the same equation if we replace $F$ by $G$. Therefore, the equation
\begin{displaymath}
\left|\left(\dfrac{r_u(2N)}{r_u(0)}\right)^2-\left(\dfrac{r_v(2N)}{r_v(0)}\right)^2 \right| \leq e^{-l^\beta/4}
\end{displaymath}
 can be rewritten
 \begin{equation}
\left|\dfrac{1}{1+t_u(0)^2}\left\|T_F^+\begin{pmatrix}
t_u(n) \\
1
\end{pmatrix}\right\|^2  \\
- \dfrac{1}{1+t_v(0)^2}\left\|T_G^+\begin{pmatrix}
t_v(n) \\
1
\end{pmatrix}\right\|^2\right|\leq Ce^{-l^\beta/4}.
\end{equation} 
Thus, there exists $\epsilon_1$ such that $|\epsilon_1|\leq Ce^{\-l^\beta/4}$ and such that
\begin{equation}\label{eqabove}
\dfrac{1}{1+t_u(0)^2}\left\|T_F^+\begin{pmatrix}
t_u(n) \\
1
\end{pmatrix}\right\|^2 \\
=\dfrac{1}{1+t_v(0)^2}\left\|T_G^+\begin{pmatrix}
t_v(n) \\
1
\end{pmatrix}\right\|^2+\epsilon_1.
\end{equation}
Now, consider the equation $U(-2N)=T_F^-U(n)+\varepsilon$ for the matrix $T_F^-$ constructed in the same way as $T_F^+$. Using the same calculations as to prove \eqref{eqabove} we obtain the existence of $\eta_1$ with $|\eta_1|\leq Ce^{\-l^\beta/4}$ such that 
\begin{equation}\label{eqbelow}
\dfrac{1}{1+t_u(0)^2}\left\|T_F^-\begin{pmatrix}
t_u(0) \\
1
\end{pmatrix}\right\|^2 \\
=\dfrac{1}{1+t_v(0)^2}\left\|T_G^-\begin{pmatrix}
t_v(0) \\
1
\end{pmatrix}\right\|^2+\eta_1.
\end{equation}
Define the polynomials of degree $2$
\begin{displaymath}
\begin{aligned}
P_G(t)&:=\left\|T_G^+\begin{pmatrix}
t \\
1
\end{pmatrix}\right\|^2
\end{aligned}
\text{ and  }
\begin{aligned}
Q_G^n&:=\left\|T_G^-\begin{pmatrix}
t \\
1
\end{pmatrix}\right\|^2.
\end{aligned} 
\end{displaymath}
Using \eqref{assumptan}, the equations \eqref{eqabove} and \eqref{eqbelow} can be rewritten
\begin{multline}\label{eqplus}
R_1(t_u(0),t_v(0)):=t_u^2(0)\left[(1+t_v^2(0))\left\|T_F^+\begin{pmatrix}
1 \\
0
\end{pmatrix}\right\|^2-P_G(t_v(0))\right] \\
+2t_u(0)\left\langle T_F^+\begin{pmatrix}
1 \\
0
\end{pmatrix},T_F^+\begin{pmatrix}
0 \\
1
\end{pmatrix} \right\rangle(1+t_v^2(0)) \\
+\left[(1+t_v^2(0))\left\|T_F^+\begin{pmatrix}
0 \\
1
\end{pmatrix}\right\|^2-P_G(t_v(0))\right]=\epsilon_2
\end{multline}
and 
\begin{multline}\label{eqminus}
R_2(t_u(0),t_v(0)):=t_u^2(n)\left[(1+t_v^2(0))\left\|T_F^-\begin{pmatrix}
1 \\
0
\end{pmatrix}\right\|^2-P_G(t_v(0))\right] \\
+2t_u(0)\left\langle T_F^-\begin{pmatrix}
1 \\
0
\end{pmatrix},T_F^-\begin{pmatrix}
0 \\
1
\end{pmatrix} \right\rangle(1+t_v^2(0)) \\
+\left[(1+t_v^2(0))\left\|T_F^-\begin{pmatrix}
0 \\
1
\end{pmatrix}\right\|^2-P_G(t_v(0))\right]=\eta_2.
\end{multline}

Thus, $t_u(0)$ is a root of the two polynomials $t\to R_1(t,t_v(0))-\epsilon_2$ and $t\to R_2-\eta_2$. Therefore, the resultant of these polynomials must be zero. 
All the coefficients in $R_1$ and $R_2$ are bounded uniformly over $(\omega_m)_{m\in\llbracket -N,N\rrbracket}$. Thus the resultant $\mathcal{R}(t_v(0))$ of $R_1(\,\cdot\,,t_v(0))$ and $R_2(\,\cdot\,,t_v(0))$ is smaller than $e^{-l^\beta/4}$. 

If we have $t_u(0)=\tan\theta_u(0)$ but $t_v(0)=\cot\theta_v(0)$ instead of $t_v(0)=\tan\theta_v(0)$, the resultant $ \mathcal{Q}(t)$ obtained is equal to $t^8\mathcal{R}(1/t)$.  If we have $t_u(0)=\cot\theta_u(0)$ and $t_u(0)=\tan\theta_u(0)$ instead of $t_u(0)=\tan\theta_u(0)$ and $t_v(0)=\tan\theta_v(0)$, the resultant obtained is $\mathcal{R}$

Now, the resultant $\mathcal{R}$ is an analytic function of the random variables \\$\left(\omega_{-2N},\omega_{-2N+1},\dots,\omega_{2N}\right)$. We will now prove that, as a function of these random variables, it is not constantly the zero polynomial. This will be done under the assumption $\omega_{-2N}=\omega_{-8N+1}=\dots=\omega_{2N}$. Under this assumption we have 
\begin{displaymath}
\forall x\in(-N,N),~ V_\omega(x)=\sum_{n\in\Z} \omega_n q(x-n)=\omega_0 \sum_{n\in(x-N,X+N)\cap\Z}q(x-n).
\end{displaymath}
 Therefore, we come down to the study of the ODEs
\begin{equation}\label{ODE}
\forall x\in(-N,N),y''(x)=\left(\omega_0 \tilde{q}(x)-\bullet\right) y(x)
\end{equation}
where $\tilde{q}$ is one-periodic, $\tilde{q}>0$ on some interval $\mathcal{K}\subset(-1/2,1/2)$ and $\bullet\in\{F,G\}$. 

We now prove the
\begin{lem}\label{resnotzero}
There exists $\omega_{0}$ such that, $\mathcal{R}_{i_0}$ is not the zero polynomial.
\end{lem}

\begin{proof}
The fact that the resultant $\mathcal{R}_{i_0}$ is the zero polynomial is equivalent to the fact that, for all $t'\in\R$, the polynomials $R_1(\,\cdot\, ,t')$ and $R_2(\,\cdot\, ,t')$ have a common root. This is also equivalent to the fact that for all $w\neq0$ satisfying $\eqref{ODE}$ for $\bullet=G$, there exists a function $z:=z(w)\neq 0$ satisfying \eqref{ODE} for $\bullet=F$ such that 
\begin{equation}\label{cond}
\left|\left(\dfrac{r_z(2N)}{r_z(0)}\right)^2-\left(\dfrac{r_w(2N)}{r_w(0)}\right)^2 \right|+ \left|\left(\dfrac{r_z(-2N)}{r_z(0)}\right)^2-\left(\dfrac{r_w(-2N)}{r_w(0)}\right)^2 \right|=0.
\end{equation}

Now, using Propositions~\ref{instability}, \ref{discnotequal} and \ref{discnotequalpos}, we know there exists $\omega_0$ such that the discriminants $D_F$ and $D_G$ are not equal or opposite and satisfy $|D_F|>2$ and $|D_G|>2$. Let $\lambda_F$ and $\lambda_G$ be the associated Floquet multipliers with absolute value strictly larger than $1$. Without loss of generality we will suppose that $|\lambda_F|>|\lambda_G|>1$. Let $w$ be a normalized Floquet solution associated to $\lambda_F$. Then, $\dfrac{r_w(2N)}{r_w(0)}=\dfrac{r_w(0)}{r_w(-2N)}=\lambda_F^2$. Let $(f_1^G,f_2^G)$ be normalized Floquet solutions associated to $\lambda_G$ and $\lambda_G^{-1}$ and let $p_G=\langle f_1^G,f_2^G\rangle_q$. Then, for any solution $z$ of \eqref{ODE} with $\bullet=G$, there exists $(A,B)\in\R^2$ such that
\begin{displaymath}
z=Af_1^G+Bf_2^G.
\end{displaymath}
Therefore, we compute
\begin{align*}
T_G^+z=A\lambda_Gf_1^G+B \lambda_G^{-1} f_2^G\\
T_G^-z=A\lambda_G^{-1}f_1^G+B \lambda_G f_2^G
\end{align*}
and
\begin{align*}
\dfrac{r_z(2N)}{r_z(0)}=\dfrac{A^2\lambda_G^2+B^2 \lambda_G^{-2}+2p_GAB}{A^2+B^2+2p_GAB}\\
\dfrac{r_z(-2N)}{r_z(0)}=\dfrac{A^2\lambda_G^{-2}+B^2 \lambda_G^2+2p_GAB}{A^2+B^2+2p_GAB}
\end{align*}
Let $(A,B)=R(\sin \phi,\cos\phi)$ with $(R,\phi)\in(0,\infty)\times[0,2\pi)$. Then, for $w$ as above \eqref{cond} can only be satisfied for $z$ such that $B\neq 0$. Let $\tau=\tan \phi$. Then, we have
\begin{align*}
\dfrac{r_z(2N)}{r_z(0)}=\dfrac{\tau^2\lambda_G^2+\lambda_G^{-2}+2p_G\tau}{\tau^2+1+2p_G\tau}\\
\dfrac{r_z(-2N)}{r_z(0)}=\dfrac{\tau^2\lambda_G^{-2}+\lambda_G^2+2p_G\tau}{\tau^2+1+2p_G\tau}
\end{align*}
Therefore, there exists a non-zero $z$ such that \eqref{cond} is satisfied if and only if there exits $\tau\in\R$ such that 
\begin{align*}
\lambda_F^2=\dfrac{\tau^2\lambda_G^2+\lambda_G^{-2}+2p_G\tau}{\tau^2+1+2p_G\tau}\\
\lambda_F^{-2}=\dfrac{\tau^2\lambda_G^{-2}+\lambda_G^2+2p_G\tau}{\tau^2+1+2p_G\tau}
\end{align*}
This is also equivalent to the fact that there exists $\tau\in\R$ such that
\begin{align*}
\tau^2\left(\lambda_G^2-\lambda_F^2\right)+2p_G(1-\lambda_F^2)\tau+\left(\lambda_G^{-2}-\lambda_F^2\right)=0\\
\tau^2\left(\lambda_G^{-2}-\lambda_F^{-2}\right)+2p_G(1-\lambda_F^{-2})\tau+\left(\lambda_G^2-\lambda_F^{-2}\right)=0
\end{align*}
Eventually, this is also equivalent to the fact that the resultant of these two polynomials is zero, i.e

\begin{equation}
\begin{vmatrix}
\lambda_G^2-\lambda_F^2 & 0 & \lambda_G^{-2}-\lambda_F^{-2} & 0\\
2p_G(1-\lambda_F^2) & \lambda_G^2-\lambda_F^2 & 2p_G(1-\lambda_F^{-2}) & \lambda_G^{-2}-\lambda_F^{-2}\\
\lambda_G^{-2}-\lambda_F^2 & 2p_G(1-\lambda_F^2) &  \lambda_G^2-\lambda_F^{-2} & 2p_G(1-\lambda_F^{-2}) \\
0 & \lambda_G^{-2}-\lambda_F^2 & 0 & \lambda_G^2-\lambda_F^{-2}
\end{vmatrix}=0
\end{equation}
Now, define
\begin{displaymath}
\left\{
\begin{aligned}
\Delta_1&=\lambda_G^2-\lambda_F^2<0\\
\Pi_+&=2p_G(1-\lambda_F^2)\\
\Delta_2&=\lambda_G^{-2}-\lambda_F^2<0\\
\Delta_3&=\lambda_G^{-2}-\lambda_F^{-2}=-\Delta_1\lambda_G^{-2}\lambda_F^{-2}\\
\Pi_-&=2p_G(1-\lambda_F^{-2})=-\Pi_+\lambda_F^{-2}\\
\Delta_4&=\lambda_G^2-\lambda_F^{-2}=-\Delta_2\lambda_G^2\lambda_F^{-2}
\end{aligned}
\right.
\end{displaymath}
To conclude the proof of Lemma~\ref{resnotzero} it therefore suffices to show that the matrix
\begin{equation}
M:=\begin{pmatrix}
\Delta_1 & 0 & \Delta_3 & 0\\
\Pi_+ & \Delta_1 & \Pi_- & \Delta_3\\
\Delta_2 & \Pi_+ &  \Delta_4 & \Pi_- \\
0 & \Delta_2 & 0 & \Delta_4
\end{pmatrix}
\end{equation}
satisfies $\det(M)\neq 0$.

A straightforward calculus shows that
\begin{align*}
\det(M)&=(\Delta_1\Delta_4-\Delta_2\Delta_3)^2+\left(\Pi_-\Delta_1-\Pi_+\Delta_3\right)\left(\Pi_-\Delta_2-\Pi_+\Delta_4\right)\\
&=[\Delta_1\Delta_2\lambda_F^{-2}(\lambda_G^2-\lambda_G^{-2})]^2+\Pi_+^2\Delta_1\Delta_2\left(-\lambda_F^2+\lambda_F^{-2}\lambda_G^{-2}\right)\left(-\lambda_F^2+\lambda_G^2\lambda_F^{-2}\right) =:A^2+B
\end{align*}

with
\begin{displaymath}
A=\Delta_1\Delta_2\lambda_F^{-2}(\lambda_G^2-\lambda_G^{-2})\neq 0
\end{displaymath}

Now, we compute

\begin{displaymath}
\left(-\lambda_F^2+\lambda_F^{-2}\lambda_G^{-2}\right)\left(-\lambda_F^2+\lambda_G^2\lambda_F^{-2}\right)=\lambda_F^{-4}\left(\lambda_F^{4}-\lambda_G^{-2}\right)\left(\lambda_F^{4}-\lambda_G^2\right)>0
\end{displaymath}
because $\lambda_F^2>\lambda_G^2>1$. Now, as $\delta_1\delta_2>0$, we have $A^2>0$ and $B\geq 0$. Therefore, $\det(M)>0$ and the resultant of the two polynomial is not zero for our choice of $\omega_0$. This concludes the proof of Lemma~\ref{resnotzero}.
 \end{proof}
 
 We can now finish the proof of Lemma~\ref{closetan}. There exist $(\omega_{-8N},\dots,\omega_{8N})$ such that the coefficients of $\mathcal{R}_{i_0}$ are not all equal to zero. Now, write \\
 $\mathcal{R}_{i_0}(X)=\sum_{i=0}^8 f_i(\omega_{-8N},\dots,\omega_{8N})X^i$ where the $(f_i)_i$ are analytic. Then, one of the functions $(f_i)_i$  must be not constantly equal to zero. Besides, by construction, we have $|\mathcal{R}_{i_0}(t_v(0))|\leq e^{-l^\beta/4}$. This completes the proof of Lemma~\ref{closetan} and therefore the proof of Lemma~\ref{probcoli}, as in \cite{S14b}.
 \end{proof}

\appendix
\section{Analytic functions of several real variables}
In this section, we extract two properties of analytic functions of several real variables that were proved in \cite{S14b}, but first, we remind the reader of the Weierstrass preparation theorem (\cite{L65}). For $x\in\R^n$, we will write $x=(\hat{x},x_n)$. 
\begin{theo}\label{Weierstrass}
Let $\mathcal{O}$ be an open subset of $\R^n$ that contains the origin and let $f:\mathcal{O}\to\R$ be an analytic function vanishing at the origin such that the analytic function $x_n\mapsto f(0,\cdots,0,x_n)$ has a zero of order $m\in\N^*$ at $0$. There exists a neighborhood $\mathcal{U}$ of the origin, a Weierstrass polynomial $P(\hat{x},x_n)=a_0(\hat{x})+ a_1(\hat{x})x_n+\cdots+a_{m-1}(\hat{x})x_n^{m-1}+x_n^m$, defined on $\mathcal{U}$, with $a_i(0)=0$ for all $i\in\llbracket 1,m-1\rrbracket$, and an analytic function $g:\mathcal{U}\to\R$ with $g(0)\neq 0$, such that, for all $x\in\mathcal{U}$, we have $f(x)=P(x)g(x)$.
\end{theo}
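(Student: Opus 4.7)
The plan is to reduce to the complex-analytic version of the Weierstrass preparation theorem by complexification, and then descend back to the real setting using the reality of $f$. First I would extend $f$ to a holomorphic function on an open polydisc $\Delta \subset \C^n$ centered at the origin, which is possible because $f$ is real analytic. The hypothesis that $z \mapsto f(0,z)$ has a zero of order exactly $m$ at $z=0$ and is not identically zero lets me choose $r>0$ so small that $f(0,z) \neq 0$ for $0 < |z| \leq r$, and then by continuity and the compactness of the circle there is a neighborhood $V$ of $0 \in \C^{n-1}$ on which $\inf_{|z|=r} |f(\hat{x},z)|$ stays bounded below by a positive constant.

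Next, applying Rouch\'e's theorem on the circle $|z|=r$, for each $\hat{x} \in V$ the slice $z \mapsto f(\hat{x},z)$ has exactly $m$ zeros in $|z|<r$ counted with multiplicity; denote them $z_1(\hat{x}), \ldots, z_m(\hat{x})$. I define
\[
P(\hat{x},x_n) = \prod_{j=1}^m (x_n - z_j(\hat{x})) = x_n^m + a_{m-1}(\hat{x}) x_n^{m-1} + \cdots + a_0(\hat{x}).
\]
The coefficients $a_k$ are, up to sign, the elementary symmetric polynomials in the zeros, and Newton's identities express them in terms of the power sums
\[
s_k(\hat{x}) = \sum_{j=1}^m z_j(\hat{x})^k = \frac{1}{2\pi i} \int_{|z|=r} z^k \frac{\partial_z f(\hat{x},z)}{f(\hat{x},z)}\, dz,
\]
obtained via the argument principle. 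Since the integrand is jointly holomorphic in $(\hat{x},z)$ on a neighborhood of $V \times \{|z|=r\}$ and the contour is fixed, each $s_k$, and hence each $a_k$, is holomorphic on $V$. At $\hat{x}=0$ all zeros coincide at the origin, so $P(0,x_n) = x_n^m$, giving $a_k(0)=0$ for $k < m$ as required.

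Now set $g = f/P$ on $V \times \{|x_n|<r\}$ minus the zero locus of $P$. For each fixed $\hat{x} \in V$, the two holomorphic functions $f(\hat{x},\cdot)$ and $P(\hat{x},\cdot)$ vanish on exactly the same multiset in the disc $|x_n|<r$, so by Riemann's removable singularity theorem in the variable $x_n$ (and then jointly, invoking Hartogs' theorem for the parameter dependence), $g$ extends to a holomorphic function on $V \times \{|x_n|<r\}$. Its value $g(0,0)$ equals the $m$-th Taylor coefficient of $z \mapsto f(0,z)$, which is non-zero by the order-$m$ hypothesis, so $g$ remains non-zero on a smaller polydisc $\mathcal{U}$.

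To descend to the real statement I use Schwarz reflection: because $f$ is real-valued, $f(\hat{x},\overline{z}) = \overline{f(\hat{x},z)}$ for real $\hat{x}$, so the zeros of $z \mapsto f(\hat{x},z)$ come in complex-conjugate pairs, forcing $P$ to have real coefficients on $V \cap \R^{n-1}$ and hence $g$ to be real on $\mathcal{U} \cap \R^n$. The main technical subtlety I anticipate is making the Rouch\'e argument genuinely uniform in the parameter $\hat{x}$ (so that $r$ can be chosen independently and the zero count is stable), and then rigorously passing from the contour-integral formulas for the power sums to joint holomorphy of the coefficients $a_k$ in several complex variables; both are handled by the uniform bound $\inf_{|z|=r}|f(\hat{x},z)|>c>0$ established at the outset.
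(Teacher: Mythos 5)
The paper itself does not prove this statement: it is quoted in the appendix as a classical result, with a reference to the literature (\cite{L65}), so there is no internal proof to compare against. Your complexification argument is the standard textbook proof and is essentially correct: extend $f$ holomorphically to a polydisc, use the order-$m$ hypothesis to get $\inf_{|z|=r}|f(\hat{x},z)|>c>0$ for $\hat{x}$ near $0$, count zeros by Rouch\'e (or directly by the argument principle, whose integral is integer-valued and continuous in $\hat{x}$, hence constant), build $P$ from the elementary symmetric functions of the zeros via Newton's identities and the contour-integral power sums, and recover reality of $P$ and $g$ on real points from $f(\hat{x},\overline{z})=\overline{f(\hat{x},z)}$.

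One step deserves tightening: the joint holomorphy of $g=f/P$. Hartogs' theorem requires separate holomorphy in \emph{every} variable, but holomorphy of $\hat{x}\mapsto g(\hat{x},x_n)$ at points where $P(\hat{x},x_n)=0$ is exactly what is not yet known at that stage, so Hartogs cannot be invoked as stated. The standard repairs are either (i) the Cauchy integral representation
\begin{equation*}
g(\hat{x},w)=\frac{1}{2\pi i}\int_{|z|=r}\frac{f(\hat{x},z)}{P(\hat{x},z)}\,\frac{dz}{z-w},\qquad |w|<r,
\end{equation*}
which is legitimate because all roots $z_j(\hat{x})$ lie strictly inside $|z|<r$ (so $P(\hat{x},\cdot)\neq0$ near the circle) and whose integrand is jointly holomorphic in $(\hat{x},w)$, or (ii) observing that $g$ is holomorphic off the analytic set $\{P=0\}$ and locally bounded there, and applying the Riemann removable-singularity theorem in several variables. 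With that adjustment, and noting that $P(0,x_n)=x_n^m$ gives the vanishing of all lower coefficients at $\hat{x}=0$ and $g(0,0)\neq0$ from the $m$-th Taylor coefficient of $f(0,\cdot)$, your argument is complete.
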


\begin{prop}\label{Loja}
Fix $\Omega\subset \R^n$ an open set and $f:\Omega\to \R$ a non zero analytic function. Fix $G$ a compact subset of $\Omega$. There exist $\epsilon_0>0$ and $m\in\N^*$ such that, for all $0<\epsilon<\epsilon_0$, we have $\left|\{x\in G, |f(x)|<\epsilon\}\right |\leq  \left(\dfrac{\epsilon}{\epsilon_0}\right)^{1/m}$.
\end{prop}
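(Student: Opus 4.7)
The statement is a Łojasiewicz-type measure estimate, and since the Weierstrass preparation theorem has just been invoked, the natural route is to localize via Weierstrass and reduce to a one-variable polynomial estimate. My plan is as follows.

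First, I would argue by compactness: it suffices to show that every point $x_0 \in G$ has a neighborhood $U_{x_0}$ on which an estimate of the form $|\{x\in U_{x_0} : |f(x)|<\epsilon\}| \leq C_{x_0}\epsilon^{1/m_{x_0}}$ holds for all sufficiently small $\epsilon>0$, then extract a finite subcover of $G$ and combine the bounds (taking the worst exponent and rescaling to match the stated form $(\epsilon/\epsilon_0)^{1/m}$). At points where $f(x_0)\neq 0$ this is vacuous for small $\epsilon$. So the work is at the zeros of $f$ in $G$.

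Fix such a zero; after translation assume $x_0 = 0$. Since $f$ is analytic and not identically zero on the connected component containing $0$, one can find a direction $v\in \mathbb{R}^n$ such that $t\mapsto f(tv)$ is not identically zero; after a linear change of coordinates I may take $v = e_n$, so that $x_n\mapsto f(0,\dots,0,x_n)$ has a zero of some finite order $m\in \mathbb{N}^*$ at the origin. Theorem~\ref{Weierstrass} then gives a neighborhood $\mathcal{U}$ of $0$, a Weierstrass polynomial $P(\hat x, x_n) = x_n^m + a_{m-1}(\hat x)x_n^{m-1} + \cdots + a_0(\hat x)$ with $a_i(0)=0$, and an analytic $g$ with $g(0)\neq 0$, such that $f = Pg$ on $\mathcal{U}$. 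Shrinking $\mathcal{U}$ to a product $\widehat{\mathcal{U}} \times I$ on which $|g|\geq c > 0$, we get $|f(x)| \geq c|P(\hat x, x_n)|$.

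The key one-variable fact I would use next: for any monic polynomial $Q(t) = \prod_{j=1}^m (t-\zeta_j)$ of degree $m$ (with complex roots $\zeta_j$) and any $\delta>0$, the set $\{t\in\mathbb{R} : |Q(t)|<\delta^m\}$ is contained in the union of the intervals $(\operatorname{Re}\zeta_j - \delta,\operatorname{Re}\zeta_j + \delta)$, hence has one-dimensional Lebesgue measure at most $2m\delta$. Applying this with $\delta = (\epsilon/c)^{1/m}$ for each fixed $\hat x\in \widehat{\mathcal{U}}$, I get
\begin{equation*}
\bigl|\{x_n \in I : |f(\hat x, x_n)|<\epsilon\}\bigr| \leq 2m\,(\epsilon/c)^{1/m},
\end{equation*}
and Fubini's theorem over $\widehat{\mathcal{U}}$ yields $|\{x\in\mathcal{U} : |f(x)|<\epsilon\}| \leq 2m |\widehat{\mathcal{U}}|\,(\epsilon/c)^{1/m}$, which is of the required form. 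Combining the finitely many local estimates over a cover of $G$ and absorbing the constant into $\epsilon_0$ gives the proposition. The main obstacle is really just justifying the existence of a direction of finite-order vanishing (which follows from the identity theorem for real analytic functions, since otherwise every directional slice vanishes and $f\equiv 0$ near $x_0$, contradicting the hypothesis) and correctly propagating constants through the compactness argument; the rest is formal.
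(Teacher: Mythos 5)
Your proof is correct and takes essentially the approach the paper intends: the paper defers the proof to \cite{S14b} but recalls the Weierstrass preparation theorem precisely so that the estimate is obtained, as you do, by localizing via compactness, writing $f=Pg$ with $P$ a Weierstrass polynomial and $|g|\geq c>0$ near each zero, bounding each fiber $\{x_n:\ |P(\hat x,x_n)|<\epsilon/c\}$ by $2m(\epsilon/c)^{1/m}$ via the roots of the monic polynomial, and then integrating and combining the finitely many local bounds while absorbing constants into $\epsilon_0$. The only implicit point worth flagging is that ``non zero analytic'' must be read as not vanishing identically on the components of $\Omega$ meeting $G$, which your identity-theorem step tacitly assumes.
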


\begin{prop}\label{Loja2}
Fix $\Omega\subset \R^n$ an open set containing the origin and $f:\Omega\to \R$ an analytic function such that, for all $(\hat{x},x_n)\in\Omega$, the function $h_n\mapsto f(\hat{x},x_n+h_n)$ is not constantly equal to zero in a neighborhood of the $0$. Fix $G:=[-M,M]^n$ a compact subset of $\Omega$. There exists $\epsilon_0>0$ and $m\in\N^*$ such that, for all $0<\epsilon<\epsilon_0$ and $\hat{x}\in[-M,M]^{n-1}$, we have $\left|\{x_n\in[-M,M], |f(\hat{x},x_n)|<\epsilon\}\right |\leq  \left(\dfrac{\epsilon}{\epsilon_0}\right)^{1/m}$.
\end{prop}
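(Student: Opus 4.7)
The plan is to combine the Weierstrass preparation theorem (Theorem~\ref{Weierstrass}) with a compactness argument to reduce the uniform level-set estimate to a classical estimate for monic polynomials.

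First I would localize. Fix any $p_0=(\hat{x}_0,x_n^0)\in [-M,M]^n$. If $f(p_0)\ne 0$, continuity gives a neighborhood $\mathcal{U}_{p_0}$ of $p_0$ on which $|f|\geq c_{p_0}>0$; this neighborhood contributes nothing to $\{|f|<\epsilon\}$ once $\epsilon<c_{p_0}$. Otherwise, by the assumption of the proposition, the analytic function $h_n\mapsto f(\hat{x}_0,x_n^0+h_n)$ has a zero of some finite order $m_{p_0}\in\N^{*}$ at $h_n=0$. After translating $p_0$ to the origin, Theorem~\ref{Weierstrass} yields a neighborhood $\mathcal{U}_{p_0}$ of $p_0$, which I may take of product form $\hat{\mathcal{U}}_{p_0}\times \mathcal{V}_{p_0}$, a Weierstrass polynomial $P_{p_0}(\hat{y},y_n)$ of degree $m_{p_0}$, monic in $y_n$, and an analytic $g_{p_0}$ with $g_{p_0}(0)\ne 0$, such that $f(p_0+y)=P_{p_0}(y)g_{p_0}(y)$ on $\mathcal{U}_{p_0}$. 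Shrinking $\mathcal{U}_{p_0}$ if necessary, I may also assume $|g_{p_0}|\geq c_{p_0}>0$ there.

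Next I would use compactness: the family $\{\mathcal{U}_p\}_{p\in [-M,M]^n}$ covers the compact cube, so a finite subcover $\mathcal{U}_{1},\ldots,\mathcal{U}_{N}=\hat{\mathcal{U}}_{i}\times\mathcal{V}_{i}$ suffices, with associated orders $m_i$ and lower bounds $c_i$. On each piece of type (b), fixing $\hat{x}\in\hat{\mathcal{U}}_i$, the slice $y_n\mapsto P_i(\hat{x}-\hat{p}_i,y_n)$ is a \emph{monic} polynomial of degree $m_i$; writing it as $\prod_{k=1}^{m_i}(y_n-r_k)$, the inequality $|P_i(\hat{x}-\hat{p}_i,y_n)|<\delta$ forces $|y_n-r_k|<\delta^{1/m_i}$ for some $k$, giving the classical estimate
\begin{equation*}
\bigl|\{y_n\in\R:|P_i(\hat{x}-\hat{p}_i,y_n)|<\delta\}\bigr|\leq 2m_i\,\delta^{1/m_i}.
\end{equation*}
Applying this with $\delta=\epsilon/c_i$ and translating back, I obtain on $\mathcal{V}_i$ the bound $|\{x_n\in\mathcal{V}_i:|f(\hat{x},x_n)|<\epsilon\}|\leq 2m_i(\epsilon/c_i)^{1/m_i}$, uniformly for $\hat{x}\in\hat{\mathcal{U}}_i$.

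Finally, for any $\hat{x}\in[-M,M]^{n-1}$ and any $\epsilon$ smaller than $\min_i c_i$, summing over the at most $N$ pieces of the cover yields
\begin{equation*}
\bigl|\{x_n\in[-M,M]:|f(\hat{x},x_n)|<\epsilon\}\bigr|\leq \sum_{i=1}^N 2m_i (\epsilon/c_i)^{1/m_i}\leq C\,\epsilon^{1/m}
\end{equation*}
with $m:=\max_i m_i$, for some constant $C$ and all $\epsilon$ small enough. Choosing $\epsilon_0$ so that $C\epsilon_0^{1/m}=1$ (and $\epsilon_0\leq\min_i c_i$) gives exactly the claimed inequality $|\{\ldots\}|\leq (\epsilon/\epsilon_0)^{1/m}$. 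The subtle point, where I expect the only real difficulty, is to ensure that the Weierstrass factorization at $p_0$ extends to a full product neighborhood $\hat{\mathcal{U}}_{p_0}\times\mathcal{V}_{p_0}$ with a degree $m_{p_0}$ that is genuinely constant in $\hat{x}$; this is precisely what Theorem~\ref{Weierstrass} provides, with the coefficients $a_k(\hat{y})$ depending analytically on $\hat{y}$, and the compactness step then makes all constants uniform.
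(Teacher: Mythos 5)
Your argument is correct and is essentially the intended proof: the paper quotes Proposition~\ref{Loja2} from \cite{S14b} and recalls the Weierstrass preparation theorem (Theorem~\ref{Weierstrass}) precisely because the proof there is this same combination of a local Weierstrass factorization $f=Pg$ with $g$ bounded away from zero, the uniform sublevel-set bound for monic polynomials in $x_n$, and a compactness/covering argument to make the constants uniform over $[-M,M]^{n}$.
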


\section{Properties of Floquet solutions}
In this section, we remind the reader of fact concerning Floquet theory and prove simple properties that are used in the proof of decorrelation estimates.

Let $W:\R\to\R$ be a 1-periodic bounded potential and $w:\R\to\R$ a 1-periodic non-negative weight function. For $\lambda\in\R$, consider the following ODE
\begin{equation}\label{ODEflo}
y''(x)+W(x)y(x)=\lambda w(x)y(x)
\end{equation}
Let $\Phi_\lambda,\Psi_\lambda$ be the solutions of this ODE satisfying $\Phi_\lambda(0)=\Psi'_\lambda(0)=1$ and $\Phi_\lambda'(0)=\Psi_\lambda(0)=0$.
Now, define the matrix $T(\lambda)=\begin{pmatrix}
\Phi_\lambda(1)& \Psi_\lambda(1)\\
\Phi'_\lambda(1)& \Psi'_\lambda(1)
\end{pmatrix}$, and $D(\lambda):=tr[T(\Lambda)]$. We know that $\det(T(\lambda))=1$, so the characteristic polynomial of $T(\lambda)$ is $X^2-D(\lambda)X+1$.

If $D(\lambda)=\pm2$, $1$ is the only eigenvalue of $T$, and there exists two different solutions $(u,v)$ of \eqref{ODEflo} that satisfy
\begin{align*}
u(x+1)&=\pm u(x)\\
v(x+1)&=(ax\pm 1)v(x)
\end{align*}
for some $a\in\R$. Note that $a=0$ if and only if $T(\lambda)=I_2$, and that if $T(\lambda)\neq I_2$, two such solutions differs from a multiplicative scalar. We then obtain at least one periodic solution when $D(\lambda)=2$ and one semi-periodic solution when $D(\lambda)=-2$

If $D(\lambda)\neq 2$, and if we note $\mu_{\pm}(\lambda)=\dfrac{D(\lambda)\pm\sqrt{D(\lambda)^2-4}}{2}$, which are named Floquet multipliers, then there exists two solutions $(u,v)$ that satisfy
\begin{align*}
u(x+1)=\mu_+(\lambda)u(x)\\
v(x+1)=\mu_-(\lambda)v(x)
\end{align*}
and any other solution satisfying one of these conditions differ of a multiplicative scalar. These solutions are called Floquet solutions.  

Now, fix $(E_1,E_2)\in\R^2$ with $E_1<E_2$, $q_{per}$ a bounded 1 periodic-function and $w$ a 1-periodic bounded weight function such that there exists $\eta>0$ and $\mathcal{K}\subset [0,1]$ an interval satisfying
\begin{equation}
\eta \cdot 1_{\mathcal{K}}\leq w
\end{equation}

For $i\in\{1,2\}$ and $\lambda\in\R$ we consider the ODE
\begin{displaymath}
 (\mathcal{E}_i^\lambda):\,y''+(q_{per}+E_i) y= \lambda w y
\end{displaymath}
and we consider $D_1(\lambda)$ and $D_2(\lambda)$ as defined above. We will suppose either that $w$ is bounded from below by a positive constant or that $q_{per}:=0$.
We now prove the
\begin{prop}\label{instability}
There exists $\lambda_0$ such that for $\lambda<\lambda_0$ we have $|D_i(\lambda)|>2$ for $i\in\{1,2\}$
\end{prop}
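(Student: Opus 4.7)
My plan is to analyze the asymptotic behavior of the Floquet discriminant $D_i(\lambda)=\operatorname{tr}T_i(\lambda)$ as $\lambda\to-\infty$, treating the two hypotheses separately. Writing the ODE in the form $y''=V_\lambda^i y$ with $V_\lambda^i(x):=q_{per}(x)+E_i-\lambda w(x)$ (standard Schr\"odinger convention), the key geometric fact is that $V_\lambda^i(x)\geq|\lambda|\eta-C\to+\infty$ uniformly on $\mathcal{K}$ for $C=\|q_{per}\|_\infty+|E_i|$, so solutions grow exponentially across $\mathcal{K}$.

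In the case $w\geq c>0$ uniformly, I will use a soft spectral argument. The generalized eigenvalue problem $-y''+(q_{per}+E_i)y=\lambda wy$ is governed by the self-adjoint operator $L=w^{-1}(-\partial^2+q_{per}+E_i)$ on the weighted Hilbert space $L^2([0,1],w\,dx)$ equipped with any fixed Floquet boundary condition. Its Rayleigh quotient is bounded below by $-C/c$ uniformly in the Floquet phase, so the union of all Floquet bands lies in a half-line $[\lambda_1,+\infty)$ with $\lambda_1>-\infty$. For $\lambda<\lambda_1$ one is in a resolvent gap, the Floquet multipliers are real and of distinct absolute value, and hence $|D_i(\lambda)|>2$.

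In the case $q_{per}\equiv0$ with $w$ allowed to vanish off $\mathcal{K}$, I will work directly on the monodromy. After a harmless shift placing $\mathcal{K}=[a,b]\subset(0,1)$, I write $T_i(\lambda)=T_+(\lambda)T_\mathcal{K}(\lambda)T_-(\lambda)$; when $w$ is supported in $\mathcal{K}$ the factors $T_\pm$ are bounded uniformly in $\lambda$, and the general case is handled by absorbing exterior growth into $\mathcal{K}$ by enlarging the ``hyperbolic interval'' to $\{V_\lambda^i>0\}$ and invoking WKB connection formulae at the turning points (which arise only if $E_i<0$). A Riccati argument on $\mathcal{K}$ shows $T_\mathcal{K}$ preserves the positive cone in $(y,y')$-space, so by Perron--Frobenius it has real eigenvalues $\mu(\lambda),\mu(\lambda)^{-1}$ with $\mu(\lambda)\geq\cosh(|\mathcal{K}|\sqrt{|\lambda|\eta-C})\to+\infty$. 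Diagonalizing $T_\mathcal{K}=P_\lambda D_\lambda P_\lambda^{-1}$ with WKB-type eigenvectors $v_\pm=(1,\pm\sqrt{V_\lambda^i(a)})^T$ and using cyclicity of the trace,
\[
D_i(\lambda)=N_{11}(\lambda)\mu(\lambda)+N_{22}(\lambda)\mu(\lambda)^{-1},\qquad N:=P_\lambda^{-1}MP_\lambda,\quad M:=T_-T_+,
\]
and a direct computation gives $N_{11}(\lambda)=\tfrac12\bigl(\operatorname{tr}M+m_{12}\sqrt{V_\lambda^i(a)}+m_{21}/\sqrt{V_\lambda^i(a)}\bigr)$.

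The main obstacle is showing $|N_{11}(\lambda)|$ is eventually bounded away from zero so that the exponential factor $\mu$ is not cancelled, and this is where the constraint $\det M=1$ intervenes: either $m_{12}\neq0$, in which case the middle term of $N_{11}$ diverges like $\sqrt{|\lambda|}$; or $m_{12}=0$, in which case $m_{11}m_{22}=1$ forces $m_{11}$ and $m_{22}$ to have the same sign, so $|\operatorname{tr}M|\geq2$ by AM--GM and $|N_{11}|\geq1-O(|\lambda|^{-1/2})$. Either way $|D_i(\lambda)|\geq|N_{11}|\mu-|N_{22}|/\mu\to+\infty$ as $\lambda\to-\infty$, producing the required $\lambda_0$. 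The delicate technical point is that $P_\lambda$ itself depends on $\lambda$ through the WKB eigenvectors, so one must track the leading coefficient of the trace analytically; the structural impossibility of having $m_{12}=\operatorname{tr}M=0$ under $\det M=1$ is the crux of the argument.
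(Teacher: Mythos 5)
The paper disposes of this proposition in one line: since $w\geq 0$ and $w>0$ on an interval of positive length, the problem is a semi-definite Sturm--Liouville problem, and \cite[Section 4]{BW13} is invoked to get an infinite instability interval at one end of the spectral axis. Your definite-weight case is a correct, standard rederivation of that fact (uniform lower bound on the Rayleigh quotients over all Floquet phases, hence a half-line below all bands where $|D_i|>2$). One point to fix regardless: your $V_\lambda^i=q_{per}+E_i-\lambda w$ corresponds to the equation $-y''+(q_{per}+E_i)y=\lambda w y$, whereas the paper's \eqref{ODEflo} reads $y''+(q_{per}+E_i)y=\lambda w y$; with the paper's literal convention the instability half-line sits at $\lambda\to+\infty$, not $\lambda\to-\infty$ (take $q_{per}=0$, $w\equiv1$, $E_i>0$: then $D_i(\lambda)=2\cos\sqrt{E_i-\lambda}$ for $\lambda<E_i$). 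This sign discrepancy is already present in the paper and is immaterial for the application (only one admissible $\lambda$ is used), but you should state which convention you prove.

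The genuine gap is in your second case, $q_{per}\equiv0$ with $w$ allowed to vanish -- which is exactly the case the paper needs, since it is the one without covering condition. Your crux dichotomy on $M=T_-T_+$ (``either $m_{12}\neq0$, giving a $\sqrt{|\lambda|}$ term, or $m_{12}=0$ and $\det M=1$ forces $|\operatorname{tr}M|\geq2$'') treats $M$ as a fixed matrix; but unless $w$ vanishes identically off $\mathcal{K}$, the exterior factors $T_\pm$ depend on $\lambda$, and you would need a quantitative bound such as $|m_{12}(\lambda)|\sqrt{|\lambda|}+|\operatorname{tr}M(\lambda)|\geq c>0$ as $\lambda\to-\infty$, which the proposal does not supply. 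The proposed repair -- enlarging the hyperbolic region to $\{V_\lambda^i>0\}$ and ``invoking WKB connection formulae at the turning points'' -- is not available at this level of generality: $w$ is only bounded (continuous in the application), the turning points move with $\lambda$ and need be neither simple nor finite in number, so WKB asymptotics do not apply. In addition, the identity $N_{11}=\tfrac12\bigl(\operatorname{tr}M+m_{12}\sqrt{V_\lambda^i(a)}+m_{21}/\sqrt{V_\lambda^i(a)}\bigr)$ uses $(1,\pm\sqrt{V_\lambda^i(a)})$ as exact eigenvectors of $T_{\mathcal{K}}$; with the true eigendirections $(1,s_\pm)$ the trace coefficient becomes $\theta m_{11}+(1-\theta)m_{22}$ with $\theta=|s_-|/(|s_-|+s_+)$, so you must also prove $c\sqrt{|\lambda|}\leq s_+,|s_-|\leq C\sqrt{|\lambda|}$ (a Riccati cone argument does this) before the AM--GM step survives. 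As written, your transfer-matrix argument is complete only when $\operatorname{supp}w\subset\mathcal{K}$ (modulo the eigenvector repair); for general $w\geq\eta\,\textbf{1}_{\mathcal{K}}$, $w\geq0$, either quote the semi-definite Sturm--Liouville theory as the paper does, or supply the missing uniform lower bound on the leading coefficient, e.g.\ by pairing the equation with a quasi-periodic solution over a period and proving $\int_0^1 w|y|^2\geq c\int_0^1|y|^2$ for such solutions -- which is itself a nontrivial step and essentially the content of the cited result.
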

\begin{proof}
It suffices to prove the result for $D_1$, for instance. Because $w$ is non-negative and positive on a interval a positive length, the ODE are Sturm-Liouville equations in the so-called semi-definite case. Therefore, following \cite[Section 4]{BW13}, there exists $\lambda_{inf}$ such that $(-\infty,\lambda_{inf})$ is an instability interval, so that every $\lambda<\lambda_{inf}$ satisfies $|D_1(\lambda)|>2$.
\end{proof}

\begin{prop}\label{discnotequalpos}
Suppose $w$ is bounded from below by a positive constant. Then, there exists $\lambda$ such that $|D_1(\lambda)|\neq |D_2(\lambda)|$
\end{prop}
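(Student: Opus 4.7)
The plan is to argue by contradiction: suppose $|D_1(\lambda)|=|D_2(\lambda)|$ for every $\lambda\in\R$. Each $D_i$ is entire in $\lambda$ since the fundamental solutions of $(\mathcal E_i^\lambda)$ depend analytically on $\lambda$. Hence $(D_1-D_2)(D_1+D_2)\equiv 0$ and the discreteness of zeros of a nonzero entire function forces either $D_1\equiv D_2$ or $D_1\equiv -D_2$. In either case the band-edge sets $\mathcal B_i:=\{\lambda\in\R:\,|D_i(\lambda)|=2\}$ coincide. Since $w\geq c>0$ and $q_{per}+E_i$ is bounded, $|D_i(\lambda)|$ grows exponentially as $\lambda\to+\infty$, while for $\lambda\to-\infty$ solutions oscillate and infinitely many band edges accumulate; thus $\mathcal B_i=\{\lambda_n^i\}_{n\geq 1}$ is a sequence tending to $-\infty$, and the equality $\mathcal B_1=\mathcal B_2$ yields $\lambda_n^1=\lambda_n^2$ for every $n$.

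The idea is now to produce an $i$-dependent subleading constant in the large-$n$ expansion of $\lambda_n^i$. The cleanest route is via the Liouville change of variables $\eta=\int_0^x\sqrt{w(t)}\,dt$, $z(\eta)=w(x(\eta))^{1/4}y(x(\eta))$, which rewrites $(\mathcal E_i^\lambda)$ as a Hill equation
\begin{equation*}
-z''(\eta)+\widetilde V_i(\eta)\,z(\eta)=-\lambda\,z(\eta),
\end{equation*}
where $\widetilde V_i$ is periodic of period $I_0:=\int_0^1\sqrt{w(x)}\,dx$ and $\widetilde V_1-\widetilde V_2=(E_2-E_1)/w$ (composed with $x(\eta)$). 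Invoking the classical Floquet eigenvalue asymptotic for Hill's equation (Magnus--Winkler),
\begin{equation*}
-\lambda_n^i=\frac{n^2\pi^2}{I_0^2}+\frac{1}{I_0}\int_0^{I_0}\widetilde V_i(\eta)\,d\eta+o(1)\quad(n\to\infty),
\end{equation*}
and pushing forward the difference of integrals via $d\eta=\sqrt{w}\,dx$, one computes $\int_0^{I_0}(\widetilde V_1-\widetilde V_2)\,d\eta=(E_2-E_1)\int_0^1 w^{-1/2}\,dx$ and obtains
\begin{equation*}
\lambda_n^1-\lambda_n^2\longrightarrow\frac{E_1-E_2}{I_0}\int_0^1\frac{dx}{\sqrt{w(x)}}\neq 0,
\end{equation*}
contradicting $\lambda_n^1=\lambda_n^2$.

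The principal technical step is the band-edge asymptotic with $o(1)$ remainder, together with the applicability of the Liouville transform. For smooth $w$ and $q_{per}$ this is classical (Magnus--Winkler, Marchenko). Under the merely bounded hypotheses of the paper, the Liouville substitution does not directly produce a bounded $\widetilde V_i$, so one would instead carry out the same analysis directly on $(\mathcal E_i^\lambda)$ via a modified Pr\"ufer transformation. Setting $k_i(x;\mu):=\sqrt{\mu w(x)+q_{per}(x)+E_i}$ with $\mu=-\lambda\gg 1$, the Pr\"ufer phase satisfies $\theta_i'=k_i-\bigl(k_i'/(2k_i)\bigr)\sin(2\theta_i)$; integrating over $[0,1]$ and using the expansion $k_i=\sqrt{\mu}\sqrt{w}+(q_{per}+E_i)/(2\sqrt{\mu w})+O(\mu^{-3/2})$ yields, after standard averaging, $\theta_i(1)-\theta_i(0)=\sqrt{\mu}\,I_0+(I_1+E_iI_2)/(2\sqrt{\mu})+o(\mu^{-1/2})$, with $I_1:=\int_0^1 q_{per}/\sqrt{w}\,dx$ and $I_2:=\int_0^1 1/\sqrt{w}\,dx$. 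Inverting the Bohr--Sommerfeld condition $\theta_i(1)-\theta_i(0)=n\pi$ then produces the same $i$-dependent subleading constant $-(I_1+E_iI_2)/I_0$ in $\lambda_n^i$, so that again $\lambda_n^1-\lambda_n^2\to(E_1-E_2)I_2/I_0\neq 0$, delivering the required contradiction.
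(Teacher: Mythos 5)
Your overall scheme (argue by contradiction, use analyticity of the discriminants to reduce to $D_1\equiv D_2$ or $D_1\equiv -D_2$, deduce that the band-edge sequences coincide, then separate them asymptotically) is coherent, but the step that carries all the weight --- the two-term band-edge asymptotics $-\lambda_n^i=(n\pi/I_0)^2+c_i+o(1)$ with an $i$-dependent constant --- is not justified under the hypotheses of the proposition. Here $w$ is only assumed bounded and bounded below by a positive constant, and in the paper's application it is a periodization of the merely continuous single-site potential $q$, so it need not be differentiable. The Liouville transform you invoke requires roughly $w\in W^{2,1}$ for $\widetilde V_i$ to be a genuine bounded potential, and you acknowledge this; but your proposed substitute has exactly the same defect: the modified Pr\"ufer equation $\theta_i'=k_i-\bigl(k_i'/(2k_i)\bigr)\sin(2\theta_i)$ contains $k_i'$, hence $w'$, so it is not available either. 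Moreover, the ``standard averaging'' and the inversion of the quantization condition $\theta_i(1)-\theta_i(0)=n\pi$ at band edges (which is exact only for the particular initial phase of the periodic or antiperiodic eigenfunction, so the phase-increment expansion must be shown uniform in that phase) are asserted rather than proved. As written, this key estimate is a genuine gap.

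The gap is avoidable, and the paper's own proof shows how cheaply: since $w\ge c>0$ one is in the definite case, so one passes to the weighted eigenvalue problem $H_iy=\frac1w\bigl(y''+(q_{per}+E_i)y\bigr)=\lambda y$ on $L^2(w\,dx)$. The two operators differ by multiplication by $(E_1-E_2)/w$, whose absolute value is bounded below by $(E_2-E_1)/\|w\|_\infty>0$, so the corresponding periodic/antiperiodic eigenvalues are strictly shifted; evaluating one discriminant at the lowest periodic eigenvalue of the other problem then produces a single explicit $\lambda$ where one discriminant equals $2$ while the other is strictly larger than $2$. No entire-function dichotomy and no spectral asymptotics are needed. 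If you prefer to keep your contradiction scheme, the same min--max comparison replaces the asymptotics: $D_1\equiv-D_2$ is impossible because both discriminants exceed $2$ for $\lambda$ large positive (where $\lambda w-q_{per}-E_i$ is uniformly positive), and $D_1\equiv D_2$ would force the lowest periodic eigenvalues of $H_1$ and $H_2$ to coincide although they differ by at least $(E_2-E_1)/\|w\|_\infty$. Either variant stays within the regularity actually assumed in the statement.
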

\begin{proof}
Since, $w$ is bounded from below by a positive constant, we are in the so-called definite case. We can consider the eigenvalue problem
\begin{equation}
H_iy:=\dfrac{1}{w}(y''+(q_{per}+E_i)y)=\lambda y
\end{equation}
As $E_1<E_2$, the lower anti-periodic eigenvalue $\lambda_1$ of $H_1$ is strictly smaller than the lower periodic eigenvalue $\lambda_2$ of $H_2$. Therefore, $D_1(\lambda_2)>2$ whereas $D_2(\lambda_2)=2$.
\end{proof}

\begin{prop}\label{discnotequal}
Suppose $q_{per}:=0$. There exists a set $\mathcal{S}\subset \R$ with no accumulation point such that, if $(E_1,E_2)\in\R^2-\mathcal{S}^2$, there exists $\lambda$ such that $|D_1(\lambda)|\neq |D_2(\lambda)|$
\end{prop}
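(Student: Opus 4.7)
My plan is to argue by contradiction: suppose $|D_1(\lambda)|=|D_2(\lambda)|$ for every $\lambda\in\R$ and deduce that $(E_1,E_2)$ is forced into $\mathcal{S}\times\mathcal{S}$ for an explicit discrete set $\mathcal{S}$. Since the transfer matrix $T(\lambda,E)$ of $(\mathcal{E}_i^\lambda)$ depends jointly analytically on $(\lambda,E)$, the discriminant $D(\lambda,E):=\mathrm{tr}\,T(\lambda,E)$ is entire in $\lambda$ for each fixed $E$. Because $D_i$ is real-valued on $\R$, the identity $D_1^2\equiv D_2^2$ on $\R$ extends to $\C$, and the factorization $(D_1-D_2)(D_1+D_2)\equiv 0$ in the integral domain of entire functions forces $D_1\equiv D_2$ or $D_1\equiv -D_2$.

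Next I compute explicitly the first two Taylor coefficients of $D$ in $\lambda$ at $\lambda=0$. Writing $D(\lambda,E)=\sum_{k\geq 0}a_k(E)\lambda^k$, the unperturbed equation $y''+Ey=0$ is autonomous and explicit, giving immediately
\[
a_0(E)=D(0,E)=2\cos\sqrt{E}.
\]
For $a_1(E)=\partial_\lambda D|_{\lambda=0}$ I use first-order perturbation theory: $\partial_\lambda T(1;0)=\int_0^1 T_0(1;u)\,w(u)E_{21}\,T_0(u;0)\,du$. Taking the trace, applying cyclicity, and using that the free equation is autonomous (so its transfer forms a one-parameter group and $T_0(u;0)T_0(1;u)=T_0(1;0)$), the integrand reduces to $w(u)\,(T_0(1;0))_{12}=w(u)\sin\sqrt{E}/\sqrt{E}$, yielding
\[
a_1(E)=\bar w\,\frac{\sin\sqrt{E}}{\sqrt{E}},\qquad \bar w:=\int_0^1 w>0.
\]
In either case $D_1\equiv\varepsilon D_2$ with $\varepsilon\in\{+1,-1\}$, we must simultaneously have $a_0(E_1)=\varepsilon a_0(E_2)$ and $a_1(E_1)=\varepsilon a_1(E_2)$.

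Assume $E_1\neq E_2$; the cases $E_i\leq 0$ reduce identically using $\cos\sqrt{E}=\cosh\sqrt{-E}$ and $\sin\sqrt{E}/\sqrt{E}=\sinh\sqrt{-E}/\sqrt{-E}$ (both strictly monotone on $[0,\infty)$), so I suppose $E_1,E_2>0$. The equation $\cos\sqrt{E_1}=\varepsilon\cos\sqrt{E_2}$ forces $\sqrt{E_2}=\pm\sqrt{E_1}+n\pi$ with $n\in\Z$ of suitable parity; substituting into $a_1(E_1)=\varepsilon a_1(E_2)$ via angle-addition for sine gives, in every non-diagonal subcase, an identity of the form
\[
\sin\sqrt{E_1}\cdot\Bigl(\tfrac{1}{\sqrt{E_1}}\pm\tfrac{1}{\sqrt{E_2}}\Bigr)=0.
\]
The parenthesized factor is nonzero since $\sqrt{E_1},\sqrt{E_2}>0$ and $E_1\neq E_2$, so $\sin\sqrt{E_1}=0$, i.e.\ $E_1\in\{(n\pi)^2:n\in\Z_{\geq 0}\}$; by symmetry of the two equations, $E_2$ belongs to the same set. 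Taking $\mathcal{S}:=\{(n\pi)^2:n\in\Z_{\geq 0}\}$, which has no accumulation points, I conclude that for $(E_1,E_2)\in\R^2\setminus\mathcal{S}^2$ with $E_1\neq E_2$ neither case can occur, producing a $\lambda$ with $|D_1(\lambda)|\neq|D_2(\lambda)|$. The main obstacle is the closed form for $a_1(E)$, which genuinely uses the hypothesis $q_{\mathrm{per}}\equiv 0$: the one-parameter-group identity $T_0(u;0)T_0(1;u)=T_0(1;0)$ would fail in the presence of a non-trivial periodic background, which is why the analogous Proposition~\ref{discnotequalpos} (with $w$ bounded below but $q_{\mathrm{per}}$ arbitrary) requires a different argument.
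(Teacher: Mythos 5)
Your proposal is correct and follows essentially the same route as the paper: a first-order expansion of the discriminant in $\lambda$ at $\lambda=0$, giving $a_0(E)=2\cos\sqrt{E}$ and $a_1(E)=\bar w\,\sin\sqrt{E}/\sqrt{E}$ (the paper writes this coefficient with the opposite sign, a harmless typo), followed by the trigonometric elimination that forces $E_1,E_2\in\{(k\pi)^2\}$. You are in fact slightly more careful than the paper, which only writes out the branch $D_1\equiv D_2$ and omits the $D_1\equiv-D_2$ case and the nonpositive-energy discussion that you supply.
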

\begin{proof}
For $i\in\{1,2\}$, let $\Phi_\lambda^i$, $\Psi_\lambda^i$ be the solutions of $H_iy=\lambda y$ satisfying $\Phi^i_\lambda(0)=(\Psi^i_\lambda)'(0)=1$ and $(\Phi^i_\lambda)'(0)=\Psi^i_\lambda(0)=0$. Then, we know from Duhamel formula that
\begin{align*}
\Phi_\lambda^i(t)&=\cos(\sqrt{E_i}t)-\dfrac{\lambda}{\sqrt{E_i}}\int_0^t \sin(\sqrt{E_i}(t-s))w(s)\Phi_\lambda(s)ds,\\
\Psi_\lambda^i(t)&=\dfrac{\sin(\sqrt{E_i}t)}{\sqrt{E_i}}-\dfrac{\lambda}{\sqrt{E_i}}\int_0^t \sin(\sqrt{E_i}(t-s))w(s)\Phi_\lambda(s)ds.
\end{align*}
Therefore, we obtain the first order Taylor expansions
\begin{align*}
\Phi_\lambda^i(1)=\cos(\sqrt{E_i})-\dfrac{\lambda}{\sqrt{E_i}}\int_0^1 \sin(\sqrt{E_i}(1-s))w(s)\cos(\sqrt{E_i}s)ds+o(\lambda),\\
(\Psi_\lambda^i)'(1)=\cos(\sqrt{E_i})-\dfrac{\lambda}{\sqrt{E_i}}\int_0^1 \cos(\sqrt{E_i}(1-s))w(s)\sin(\sqrt{E_i}s)ds+o(\lambda).
\end{align*}
Hence,

\begin{equation}
D_i(\lambda)=2\cos(\sqrt{E_i})-\lambda\left(\int_0^1 w(s)ds\right)\dfrac{\sin\sqrt{E_i}}{\sqrt{E_i}}+o(\lambda)
\end{equation}

Now, as $\int_0^1 w(s)ds\neq 0$ by assumption, we have
\begin{align*}
\left(\forall \lambda,\,D_1(\lambda)=D_2(\lambda)\,\right)& \Longrightarrow \left\{
\begin{aligned}
\cos\sqrt{E_1}=\cos\sqrt{E_2}\\
\dfrac{\sin\sqrt{E_1}}{\sqrt{E_1}}=\dfrac{\sin\sqrt{E_2}}{\sqrt{E_2}}
\end{aligned}
\right.\\
& \Longrightarrow  \left\{
\begin{aligned}
E_1=(k_1\pi)^2\\
E_2=(k_2\pi)^2
\end{aligned}
\right. \text{for some integers } k_1,k_2\text{ with same parity}.
\end{align*}

\end{proof}
\bibliographystyle{plain}
\bibliography{biblio2}
\end{document}